\newtheorem{theorem}{Theorem}[section]
\newtheorem{lemma}[theorem]{Lemma}
\newtheorem{corollary}[theorem]{Corollary}
\newtheorem{proposition}[theorem]{Proposition}
\newcommand{\qed}{\hfill $\Box$ \bigbreak}
\newenvironment{proof}{\noindent {\bf Proof.}}{\qed}
\newcommand{\cC}{{\cal C}}
\newcommand{\cT}{{\cal T}}
\newcommand{\cB}{{\cal B}}
\newcommand{\cF}{{\cal F}}
\newcommand{\cA}{{\cal A}}
\newcommand{\remove}[1]{}
\begin{document}

\title{{\bf Leader Election in Trees with Customized Advice}}
\date{}
\newcommand{\inst}[1]{$^{#1}$}

\author{
Barun Gorain \footnotemark[1]
\and
Andrzej Pelc \footnotemark[1] \footnotemark[2]
}

\date{ }
\maketitle
\def\thefootnote{\fnsymbol{footnote}}

\footnotetext[1]{
\noindent
 D\'{e}partement d'informatique, Universit\'{e} du Qu\'{e}bec en Outaouais,
Gatineau, Qu\'{e}bec J8X 3X7,
 Canada. E-mails:
{\tt baruniitg123@gmail.com}, {\tt pelc@uqo.ca}}

\footnotetext[2]{
\noindent
Research supported in part by NSERC  Discovery Grant 8136 -- 2013  and by the
Research Chair in Distributed Computing of the
Universit\'{e} du Qu\'{e}bec en Outaouais.
}

\begin{abstract}
Leader election is a basic symmetry breaking problem in distributed computing. All nodes of a network have to agree on a single node, called the leader.
If the nodes of the network have distinct labels, then agreeing on a single node means that all nodes have to output the label of the elected leader.
 If the nodes are anonymous,
the task of leader election is formulated as follows: every node of the network must output a simple path starting at it, which is coded as a sequence of port numbers, such that
all these paths
end at a common node, the leader. In this paper, we study deterministic leader election in anonymous trees.

Our goal is to establish tradeoffs between the allocated time $\tau$ and the amount of information that has to be given {\em a priori} to the nodes of a network to enable leader election in time $\tau$.
Following the framework of {\em algorithms
with advice}, this information is provided to all nodes at the start by an oracle knowing the entire tree, in form of binary strings assigned to all nodes. There are two possible variants of  formulating this advice assignment. Either the strings provided to all nodes are identical, or strings assigned to different nodes may be potentially different, i.e., advice can be {\em customized}. As opposed to previous papers
on leader election with advice, in this paper we consider the latter option.

The maximum length of all assigned binary strings is called the {\em size of advice}.
 For a given time $\tau$ allocated to leader election, we give upper and lower bounds on the minimum size
of advice sufficient to perform leader election in time $\tau$. All  our bounds except one pair are tight up to multiplicative constants,
and in this one exceptional case, the gap between the upper and the lower bound is very small.

\vspace{2ex}

\noindent {\bf Keywords:} leader election, tree, advice, deterministic distributed algorithm, time.
\end{abstract}

\vfill

\vfill

\thispagestyle{empty}
\setcounter{page}{0}
\pagebreak

\section{Introduction}

{\bf Background.}
Leader election is a basic symmetry breaking problem in distributed computing  \cite{Ly}. All nodes of a network have to agree on a single node, called the leader.
Performing leader election is essential in applications where a single node has to subsequently broadcast a message to coordinate some network task,
or when all nodes have to transmit data to a single node.

If the nodes of the network have distinct labels, then agreeing on a single node means that all nodes have to output the label of the elected leader. However, in many
applications, even if nodes have distinct identities, they may refuse to reveal them, e.g., for privacy or security reasons. Hence, it is often important to design leader election algorithms that do not depend on the knowledge of such labels. Thus we model networks as anonymous graphs. Under this scenario, we formulate the leader election problem as in \cite{GMP}: every node has to output a simple path (coded as a sequence of port numbers) from it to a common node.

\noindent
{\bf Model and Problem Description.}
We focus on deterministic leader election algorithms for trees.
The network is modeled as a simple undirected tree with $n$ nodes and diameter $D$.
Nodes do not have any identifiers.
On the other hand, we assume that, at each node $v$,
each edge incident to $v$ has a distinct {\em port number} from
$\{0,\dots,d-1\}$, where $d$ is the degree of $v$. Hence each edge has two corresponding port numbers, one at each of its endpoints.
Port numbering is {\em local} to each node, i.e., there is no relation between
port numbers at  the two endpoints of an edge. Initially, each node knows only its own degree.
The task of leader election is formally defined as follows. Every node $v$ of the tree must output a sequence $P(v)=(p_1,\dots,p_k)$ of nonnegative integers.
For each node $v$, let $P^*(v)$ be the path starting at $v$ that results from taking the number $p_i$ from $P(v)$ as the outgoing port at the $i^{th}$ node of the path.
All paths $P^*(v)$ must be simple paths in the tree that end at a common node, called the leader.

Note that, in the absence of port numbers, there would be no way to identify the elected leader by non-leaders, as all
ports, and hence all neighbors, would be indistinguishable to a node.
The above mentioned security and privacy reasons for not revealing node identifiers are irrelevant in the case of port numbers.

Our aim is to establish tradeoffs between the allocated time and the amount of information that has to be given {\em a priori} to the nodes to enable them to perform
leader election.
Following the framework of {\em algorithms
with advice}, see, e.g.,   \cite{DiPe,EFKR,FGIP,FKL,FPR,GPPR02,GMP}, this information is provided to all nodes at the start by an oracle knowing the entire tree, in the form of binary strings assigned to all nodes. There are two possible variants of  formulating this advice assignment. Either the strings provided to all nodes are all identical \cite{DiPe,FGIP,GMP}, or strings assigned to different nodes may be potentially different, i.e., advice can be {\em customized} \cite{FKL,FPR,GPPR02}. As opposed to previous papers
on leader election with advice \cite{DiPe,GMP}, in this paper we consider the latter option.

For a given tree $T=(V,E)$, the advice assigned by the oracle to nodes of $T$ is formally defined as a function $\cA _T:V  \longrightarrow S$, where $S$ is the set of finite binary strings.
The string $\cA _T (v)$, given by the oracle to node $v$,  is an input to a leader election algorithm. Apart from $\cA _T (v)$, node $v$ knows a priori only its own degree.
The maximum of all lengths of strings  $\cA _T (v)$, for $v \in V$, is called the {\em size of advice}. The size of the range $\{\cA _T (v) : v \in V\}$ is called the {\em valency of advice}.
When valency is 1, the advice given to all nodes is identical. As mentioned above, in this paper we assume that valency of advice is larger than 1.
We consider two scenarios. In the first one, valency is unbounded, i.e., every node can potentially get a different advice string. In the second scenario, we bound valency of possible advice by a constant $\lambda >1$.  This may be important in some applications, as a small number of binary strings may be sometimes easier to distribute among nodes.

We use the well-known $\cal{LOCAL}$ communication model \cite{Pe}. Communication proceeds in synchronous
rounds and all nodes start simultaneously.\footnote{It is  known that the $\cal{LOCAL}$  model can be simulated in an asynchronous network using time stamps. } In each round, each node
can exchange arbitrary messages with all of its neighbors, and perform arbitrary local computations. For any tree $T$, any advice $\cA _T$ given to the nodes of $T$, any node $v$, and
any nonnegative integer $\tau$, we define
the {\em labeled ball} $\cB(v,\tau)$ acquired in $T$ by $v$ within $\tau$ communication rounds. This is all the information that $v$ gets about the tree $T$ in $\tau$ rounds. Thus the
labeled ball $\cB(v,\tau)$ consists of  the port-labeled subtree induced by all nodes at distance at most $\tau$ from $v$, with every node $w$ of the subtree labeled by  $\cA _T (w)$, together with
the degrees of all nodes at distance exactly $\tau$ from $v$. The decisions of $v$ in round $\tau$ in any deterministic algorithm are a function of $\cB(v,\tau)$.
The {\em time} of leader election is the minimum number of rounds sufficient to complete it by all nodes.

For advice of valency larger than 1, leader election is always feasible for some advice of size 1, given sufficient time. Indeed, it is enough to give advice 1 to some node, advice 0 to all others, and allocate time $D$.
By this time, all nodes can see the unique node with advice 1, and hence can find a simple path to it. This is in sharp contrast with the scenario of advice with valency 1
 \cite{DiPe,GMP}, as for such advice,
if the tree is perfectly symmetric (e.g., the two-node tree) then symmetry cannot be broken, and leader election is impossible to carry out, regardless
of the allocated time. However, the difficulty of {\em fast} leader election (in time smaller than $D/2$) with {\em small} customized advice, lies in the ability of the oracle to code the part of the path from the node $v$ to the leader that the node $v$ cannot see within the allocated time $\tau$, by appropriately assigning advice to nodes in the labeled ball  $\cB(v,\tau)$ that the node can see.  This is challenging, as advice given to a node $w$ must be used by all nodes to whose labeled balls node $w$ belongs.
Efficient coding schemes using constant size advice are at the heart of our most involved  election algorithm, the one working for constant valency and large diameter.

One may ask if leader election is possible in a very short time, provided that  sufficiently large advice of some valency larger than 1 is given to the nodes. This is, of course, the case when valency is unbounded: then leader election is possible in time 0, as every node can be simply given as advice a code of a simple path to a chosen leader. However, for bounded valency, this is not the case. Fig. \ref{fig:intro2}
gives an example of a tree in which leader election is impossible in time at most 1, for any 2-valent advice, regardless of its size. Indeed, consider the labeled balls $\cB(v,1)$ of the 5 leaves $A,B,C,D,E$ of this tree.
\begin{figure}[h]
\centering
\includegraphics[width=0.5\textwidth]{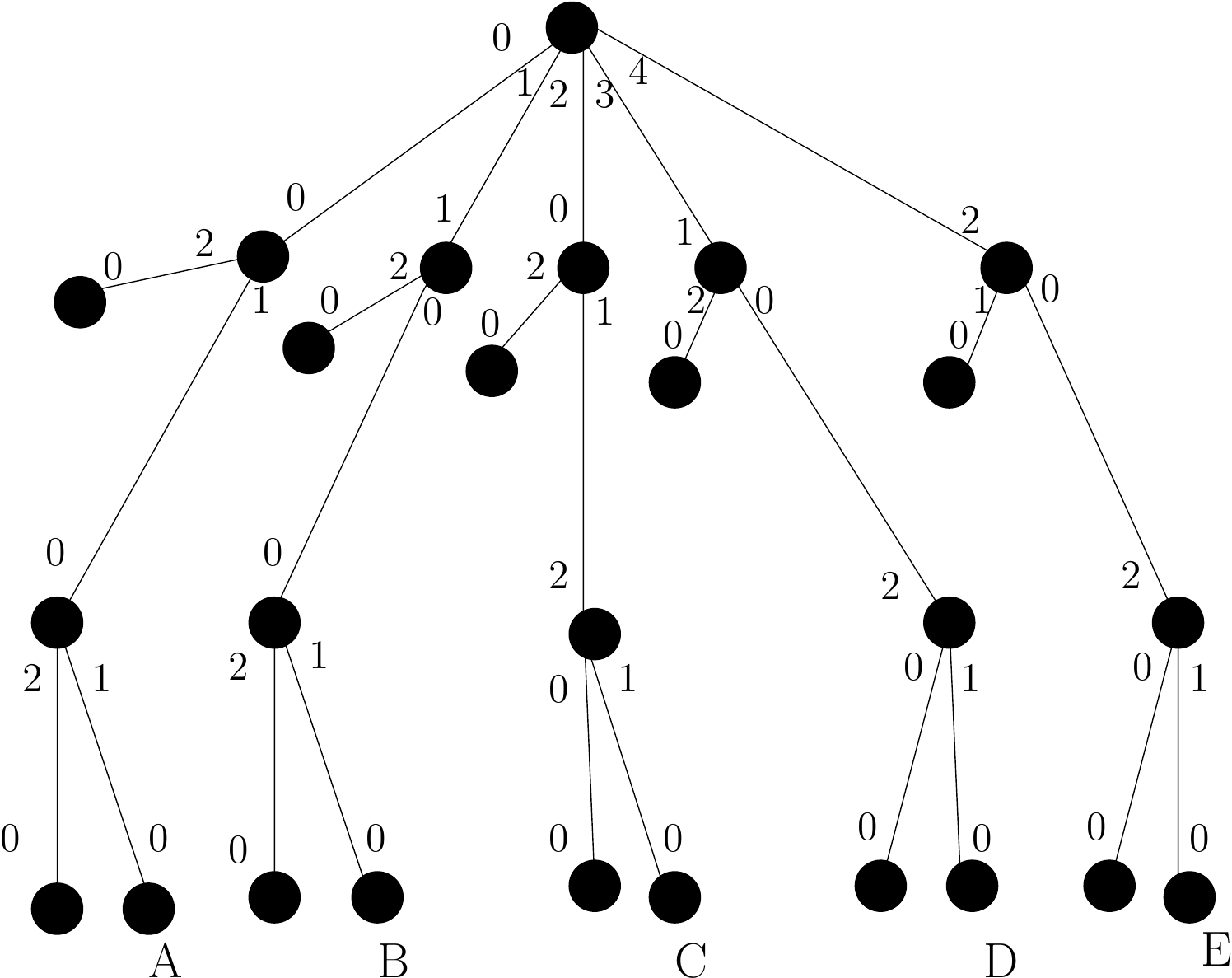}
\caption{An Example of a tree $T$ with $\xi_{2}(T)>1$ }
\label{fig:intro2}
\end{figure}
For any 2-valent advice, there are 4 different possible labeled balls. Hence, at least two of these 5 leaves must have identical labeled balls, and hence must output identical paths to the leader. However, regardless of the choice of the leader in this tree, for one of these leaves this path must be incorrect.

Hence, for any tree $T$ and any integer constant $\lambda>1$, it is important to introduce the parameter $\xi _{\lambda}(T)$, called the $\lambda$-{\em election index} of $T$, defined as the minimum time in which leader election is feasible in $T$,
where the minimum is taken over all possible $\lambda$-valent advice assignments in $T$. Since, as observed above, given sufficient time, leader election is always possible in any tree, if $\lambda >1$,
the $\lambda$-election index is always well defined for $\lambda>1$.

By definition, asking about the minimum size of $\lambda$-valent advice
to solve leader election in time $\tau$ is meaningful only in the class of trees $T$ for which $\xi _{\lambda}(T) \leq \tau$, because otherwise, no $\lambda$-valent advice can help. In the light of these remarks, we are now able to precisely formulate
the two central problems considered in this paper.
\begin{itemize}
\item
For a given time $\tau$, what is the minimum size of advice (of unbounded valency) that permits leader election in time $\tau$ for all trees?
\item
For a given time $\tau$, what is the minimum size of advice of valency $\lambda$ that permits leader election in time $\tau$ for all trees $T$ for which $\xi _{\lambda}(T) \leq \tau$?
\end{itemize}
If the allocated time is at least  $\lceil \frac{D}{2} \rceil$, then advice of size 1 suffices. Indeed, in every tree there is a node whose distance from any other node is at most $\lceil \frac{D}{2} \rceil$, and thus it is enough to give advice 1 to this node and 0 to all others. Hence we concentrate on time smaller than $\lceil \frac{D}{2} \rceil$.

\noindent
{\bf Our results.}
For advice of unbounded valency, we give tight upper and lower bounds on the minimum size of advice sufficient to perform leader election, for the entire spectrum of the allocated time $\tau$.
For the class of $n$-node trees of diameter $D$, the minimum size of advice is  $\Theta(\frac{(D/2) -\tau}{\tau}\log \frac {(n/2)-\tau}{(D/2) -\tau})$ for $0<\tau< \lceil \frac{D}{2} \rceil$, and it is  $\Theta (D\log(n/D))$, for $\tau=0$.

For $\lambda$-valent advice, where $\lambda$ is any integer constant larger than 1,  we give upper and lower bounds on the minimum size of advice sufficient to perform leader election, for a large part of the spectrum of the allocated time $\tau$. Our lower bounds are again tight, except in one case, where the ratio between the upper and the lower bound is smaller than any polynomial. It turns out that, for $\lambda$-valent advice, the minimum size of advice obeys a dichotomy rule: it is either very large or very small. 
More precisely, we prove the following results.

Consider the class of $n$-node trees of diameter $D$.
\begin{itemize}
\item
If $D$ is small ($D\in\omega(1)$ and $D\in o(\log n)$), and the allocated time is $\tau=\lfloor \alpha D\rfloor$, for some positive constant  $\alpha <1/2$,
then the minimum size of advice is in $\Omega(n^{\delta})$, for any constant $\delta<1$. Since it is in $O(n)$ by \cite{GMP},
the ratio between the upper and the lower bound is smaller than $O(n^{\epsilon})$, for any positive constant $\epsilon$.
This is the only pair of bounds that are not tight.
\item
If $D$ is medium ($D\in \omega(\log n)$ and $D\in o(n)$), and the allocated time is $\tau=\lfloor \alpha D\rfloor$, for some constant  $\alpha <1/2$, then the minimum size of advice is $\Theta(n)$.
\item
If $D$ is large ($D=cn+o(n)$, for some positive constant $c$) then there exist two positive reals $\beta_1 <\beta_2<1/2$, depending only on constants $c$ and $\lambda$,
with the following properties:
\begin{enumerate}
\item
if $\tau=\lfloor \beta D\rfloor$ for any constant $\beta<\beta_1$, then the minimum size of advice is $\Theta(n)$,
\item
if  $\tau=\lfloor \beta' D\rfloor$ for any constant $\beta'>\beta_2$, then the minimum size of advice is constant,
\item
$\beta_2-\beta_1<1/8$, for all constants $c$ and $\lambda$ (the part of the time spectrum not covered by our results is small, regardless of $c$ and $\lambda$).

\end{enumerate}
\end{itemize}

The main challenge in our positive results is the design of  advice and of the accompanying election algorithm, such that the advice given to nodes in the labeled ball $\cB(v,\tau)$ codes the part of the path from node $v$ to the leader that node $v$ cannot see. Every node $v$ must be able to decode this part of the path unambiguously,
although labeled balls may heavily intersect. The main difficulty in our negative results is the construction of trees, for which the labeled ball $\cB(v,\tau)$ is so small and the possible paths unseen by $v$ are so numerous, that the information that can be coded by the advice given to nodes of $\cB(v,\tau)$ is insufficient, for some
nodes $v$, to compute their path to any potential leader.

\noindent
{\bf Related work.}
The leader election problem was introduced in \cite{LL}. Its study started for rings, in the scenario
where all nodes have distinct labels.
A synchronous algorithm based on label comparisons and using
$O(n \log n)$ messages was given in \cite{HS}.  In \cite{FL} it was proved that
this complexity is optimal for comparison-based algorithms.
An asynchronous algorithm using $O(n \log n)$ messages was given, e.g., in \cite{P}, and
the optimality of this message complexity was shown in \cite{B}. Deterministic leader election in radio networks was studied, e.g.,
in \cite{JKZ,KP,NO}, and randomized leader election, e.g., in \cite{Wil}. In \cite{HKMMJ}, the leader election problem was
investigated in a model based on mobile agents, for networks with labeled nodes.

In many papers \cite{An,AtSn,ASW,BSVCGS,BV,YK2,YK3} leader election was studied
in anonymous networks. In particular, \cite{BSVCGS,YK3} characterized anonymous message-passing networks in which
leader election can be achieved. In \cite{YK2}, the authors studied
leader election in general networks under the assumption that node labels are
not unique.
In  \cite{FKKLS},  the authors
studied feasibility and message complexity of leader election in rings with possibly
nonunique labels, while, in \cite{DoPe}, the authors constructed algorithms for a
generalized leader election problem in rings with arbitrary labels,
unknown (and arbitrary) size of the ring, and for both
synchronous and asynchronous communication.
Memory needed for leader election in unlabeled networks was studied in \cite{FP}.
In \cite{FP1}, the authors investigated the time of leader election in anonymous networks
by characterizing this time in terms of the network size, the diameter of the network, and an additional
parameter called the level of symmetry, which measures how deeply nodes have to inspect the network in order to notice differences in their views of it.

The paradigm of {\em algorithms with advice} that studies how
arbitrary kinds of information (coded as binary strings provided to nodes of the network or to agents) can be used to perform network tasks more efficiently was previously
proposed in \cite{AKM01,DP,EFKR,FGIP,FIP1,FIP2,FKL,FP,FPR,GPPR02,IKP,KKKP02,KKP05,MP,SN,TZ05}.
There are two possible ways of assigning advice to nodes. Either all the binary strings provided to nodes are identical \cite{DiPe,FGIP,GMP}, or strings assigned to different nodes may be potentially different  \cite{FKL,FPR,GPPR02}. In the latter case, instead of advice, the term {\em informative labeling schemes}  is sometimes used.

In this paradigm, the focus is on establishing
the minimum size of advice required to solve
network problems in an efficient way.
In \cite{FIP1}, the authors compared the minimum size of advice required to
solve two information dissemination problems using a linear number of messages.
In \cite{FKL}, it was shown that advice of constant size given to the nodes enables the distributed construction of a minimum
spanning tree in logarithmic time.
In \cite{EFKR}, the advice paradigm was used for online problems.
In \cite{FGIP}, the authors established lower bounds on the size of advice
needed to beat time $\Theta(\log^*n)$
for 3-coloring cycles and to achieve time $\Theta(\log^*n)$ for 3-coloring unoriented trees.
In the case of \cite{SN}, the issue was not efficiency but feasibility: it
was shown that $\Theta(n\log n)$ is the minimum size of advice
required to perform monotone connected graph clearing.
In \cite{IKP}, the authors studied radio networks for
which it is possible to perform centralized broadcasting in constant time. They proved that constant time is achievable with
$O(n)$ bits of advice in such networks, while
$o(n)$ bits are not enough. In \cite{FPR}, the authors studied the problem of topology recognition with advice given to the nodes.
In \cite{DP}, the task of drawing an isomorphic map by an agent in a graph was considered, and the problem was to determine the minimum advice that has to be given to the agent
for the task to be feasible. In \cite{MP}, the authors investigated the minimum size of advice sufficient to find the largest-labeled node in a graph.

The problem of leader election with advice was previously studied for anonymous networks in \cite{DiPe,GMP}.
The main difference between  these papers and the present paper is that in  \cite{DiPe,GMP} the binary strings provided to nodes were all identical, while in the present paper they
may be potentially different. This is a significant difference: while in the former case advice cannot break symmetry and can be only used to extract existing asymmetries from the network more efficiently,
in our case advice has double role: it can break symmetry and  provide additional information that enables nodes to use it fast to perform leader election.
For example, with the possibility of customizing advice, advice of size 1 is always sufficient to perform leader election in time larger than half of the diameter of the tree, while it was proved in \cite{GMP} that large advice was sometimes needed for such allocated time, if all advice strings had to be identical.

\section{Preliminaries}
For any rooted tree $T$ with root $r$ and diameter $D$, the {\em depth} of a node is defined as its distance from the root $r$. The {\em height} of the tree is the  length of the longest  path from the root to a leaf. For two given nodes $v$ and $w$, if $v$ lies on the unique simple path between $w$ and the root $r$, then $w$ is a {\em descendant }of $v$, and $v$ is an {\em ancestor} of $w$.
Every tree has either a central node or a central edge, depending on whether the diameter of the tree is even or odd.
 If the diameter  $D$ is even, then the central node is the unique node in the middle of every simple
path of length $D$, and if the diameter $D$ is odd, then the central edge is the unique edge in
the middle of every simple path of length $D$.

Since there are $2^{\Theta(n)}$ $n$-node anonymous port-labeled trees (cf. \cite{GMP}), $\lambda$-valent advice of size $O(n)$ is sufficient to perform leader election in any tree $T$, in time $\xi_{\lambda}(T)$, for any $\lambda \geq 1$. Hence, whenever we prove a lower bound  $\Omega(n)$ on the size of advice, it is tight (up to multiplicative constants).

Throughout the paper, by path, we mean a simple path. For a sequence $s$, we denote by $s^R$ the reverse sequence.

We use the abbreviation $1^k$ for a string of $k$ ones and $0^k$ for a string of $k$ zeroes.
We will need to efficiently code sequences of natural numbers in a non-ambiguous way, using binary strings. We will use the following coding. Let $\sigma=(p_1,\dots,p_k)$
be a sequence of natural numbers. Let $p_i^*$ be the binary representation of $p_i$, for $i \leq  k$. The binary code $\sigma ^*$ of $\sigma$  is defined in three steps.
Consider the sequence $(p_1^*,\dots,p_k^*)$. The terms of this sequence are 0,1 and ``,''. \\
Step 1:  replace each 0 by the string of bits 10, and replace each 1 by by the string of bits 11. Notice that each comma is now followed by the bit 1.\\
Step 2:  remove the bit 1 after each comma.\\
Step 3: replace each comma by the bit 0.

The resulting binary sequence is $\sigma^*$. For example, the sequence $\sigma=(3,5)$ will be transformed as follows:
$(p_1^*,p_2^*)=(11,101)$. After step 1 the sequence becomes $(1111,111011)$. After step 2 the sequence becomes  $(1111,11011)$. After step 3 it becomes
$\sigma^*= (1111011011)$.

Notice that
the length of the sequence $\sigma^*$ is $2\sum_{i=1}^k(\lfloor \log p_i\rfloor +1)$ because $\sum_{i=1}^k(\lfloor \log p_i\rfloor +1)$ is the sum of the lengths of the binary representations $p_1^*,\dots,p_k^*$ . The transformation $\sigma \rightarrow \sigma^*$ is one-to-one because the sequence $\sigma^*$ can be decoded as follows.

Divide the sequence $\sigma^*$ into consecutive pairs of bits, until a pair $(0,b)$. Decode the obtained preceding pairs by replacing 10 by 0 and 11 by 1.
The obtained sequence is $p_1^*$. Add a comma, and replace the pair $(0,b)$ by $(1,b)$. Repeat the above steps to find the consecutive strings $p_i^*$, for
$2 \leq  i \leq k$. This gives the sequence $(p_1^*,\dots,p_k^*)$. Now $\sigma=(p_1,\dots,p_k)$, where $p_i^*$ is the binary representation of $p_i$, for $i \leq  k$.

The above coding can be easily generalized to the case when $\lambda>2$ colors $c_1,c_2,\dots, c_\lambda$ are used in the coding instead of 0 and 1. Now $p_i^*$
is the $\lambda$-ary representation of $p_i$. The three above  coding steps are changed as follows:\\
Step 1: replace each color $c_i$ by $c_1c_i$.\\
Step 2: remove the color $c_1$ after each comma.\\
Step 3: replace each comma by $c_2$.

The obtained code $\sigma^*$ has length at most $2\sum_{i=1}^k(\lfloor \log_{\lambda} p_i\rfloor +1)$.

\section{Election with advice of unbounded valency}

In this section we give tight upper and lower bounds on the minimum size of advice (of unbounded valency) sufficient to perform leader election in
any time $\tau <\lceil \frac{D}{2} \rceil$.

\subsection{Upper bound}

We first present a leader election algorithm working for any tree of diameter $D$ in time $\tau <\lceil \frac{D}{2} \rceil$, with advice of unbounded valency.
Let $T$ be a rooted $n$-node tree of diameter $D$. If $D$ is even, then the root $r$ is the central node, and if $D$ is odd, the root $r$ is one
of the endpoints of the central edge. This is the node that the algorithm will elect. The height of the tree is $\lceil \frac{D}{2} \rceil$.

At a high level, the idea of the algorithm is to partition every branch of the tree into segments of length $\lfloor\frac{\tau}{2}\rfloor$ and assign advice to the nodes of the segment in such a way that the concatenation of the advice strings in a given segment, read bottom-up, can be decoded as the sequence of port numbers corresponding to the path from the upper endpoint of the segment to the root. Every node can see some entire segment, and thus can output the correct path to the leader. Care should be taken to indicate the upward direction in each segment, as nodes cannot recognize this direction a priori.

We first give a detailed description of Algorithm {\textsc{Advice$(T,\tau)$}} constructing the advice, and then present the details of the leader election algorithm using this advice. If $D\leq 2$, it is straightforward to elect a leader using advice of size 1. In the sequel we assume that $D\geq 3$.

The construction of the advice proceeds as follows.  $P(u,v)$ denotes the path from $u$ to $v$,
defined as the sequence of nodes including $u$ and $v$. For $\tau\leq 1$, the advice given to each node is simply the string of port numbers corresponding to the path from it to the leader. Otherwise,
Algorithm {\textsc{Advice$(T,\tau)$}} computes the advice $A(v)=(M_1(v),M_2(v), M_3(v),C(v))$ for each node $v$.  For every branch $(r,v_1,v_2,v_3, \cdots, v_k)$ from the leader to a leaf,  we define $M_1(r)=3$,  $M_1(v_1)=0$, $M_1(v_2)=1$, $M_1(v_3)=2$, $M_1(v_4)=0$, $M_1(v_5)=1$, $M_1(v_6)=2$, and so on. The component  $M_1(v_i)$ of the advice helps a node to identify the upward direction in the following way. For any node $v$, if $M_1(v)=x$, and if $v'$ is the neighbor of $v$ with $M_1(v')=(x-1) \mod 3$, then the node will identify $v'$ as its parent, i.e., its neighbor  on the path towards the leader.

The component $M_2(v)$ of the advice is used to mark every node at depth $k\lfloor\frac{\tau}{2}\rfloor$, for all $k \ge 1$. We set $M_2(v)=1$ if the depth of $v$ is $k\lfloor\frac{\tau}{2}\rfloor$, for $k \ge 1$, otherwise $M_2(v)=0$.

The component $M_3(v)$ of the advice is used to mark all nodes at depth at least $\lceil \frac{D}{2} \rceil-\tau$. We set $M_3(v)=1$ if the depth of $v$ is at least $\lceil \frac{D}{2} \rceil-\tau$, otherwise $M_3(v)=0$.

The component $C(v)$ of the advice is assigned in the following way. Let $P=(u_0,u_1,\cdots, u _{\lfloor\frac{\tau}{2}\rfloor})$, such that $M_2(u_0)=M_2(u _{\lfloor\frac{\tau}{2}\rfloor})=1$,  and $M_1(v_{i+1})=(M_1(v_i)-1)\mod (3)$, for $i =0,1,\cdots,\lfloor\frac{\tau}{2}\rfloor-1$, i.e, $P$ is a path between two nodes which are at
depths multiple of   $\lfloor\frac{\tau}{2}\rfloor$ and the path is going towards the root $r$.
The binary string $s(w,r)^*$ unambiguously coding the sequence of port numbers $s(w,r)$ that represents the path of length at most $\lceil \frac{D}{2} \rceil-\tau$ from some node $w$ of the tree to the leader $r$, is divided into $\lfloor\frac{\tau}{2}\rfloor$ segments of lengths differing by at most 1, and these segments are given as the component $C(u_i)$ to the nodes  $u_1,\cdots, u _{\lfloor\frac{\tau}{2}\rfloor}$. If the depth of $u _{\lfloor\frac{\tau}{2}\rfloor}$ is less than $\lceil \frac{D}{2} \rceil-\tau$, then $u _{\lfloor\frac{\tau}{2}\rfloor}$ is chosen as $w$. Otherwise,  the node at depth $\lceil \frac{D}{2} \rceil-\tau$ on the path from $u _{\lfloor\frac{\tau}{2}\rfloor}$ to $r$ is chosen as $w$.

Since the advice given to every node should be a binary string, the advice outputted by Algorithm {\textsc{Advice$(T,\tau)$}} is the binary string $\rho(v)$ unambiguously coding
$A(v)=(M_1(v),M_2(v), M_3(v),C(v))$. This can be done by coding $M_1(v)$ on three bits, and all the other components without any change.

\begin{algorithm}
\caption{\textsc{Advice$(T,\tau)$}}
\begin{algorithmic}[1]
\IF{$\tau=0$ or  $\tau=1$}
\STATE{Let $s(v,r)$ be the sequence of port numbers corresponding to the simple path from $v$ to $r$.\\ $A(v) \leftarrow (0,0,0,s(v,r)^*)$, for all $v \in T$. }
\ELSE
\STATE{$M_1(r) \leftarrow 3$. For every path $P=(r,v_1,v_2,\cdots,v_k)$, starting from $r$, $ M_1(v_i) \leftarrow (i-1)\mod 3$. }
\STATE{For every node $v \in T$ at depth $k\lfloor\frac{\tau}{2}\rfloor$, for $k \ge 1$,  $ M_2(v) \leftarrow 1$.\\ For all other nodes $v\in T$, $ M_2(v) \leftarrow 0$. }
\STATE{For every node $v \in T$ at depth at least $\lceil \frac{D}{2} \rceil-\tau$, $M_3(v) \leftarrow 1$.\\ For all other nodes $v \in T$, $ M_3(v) \leftarrow 0$.}
\FOR{ every path $P=(u_0,u_1,\cdots, u _{\lfloor\frac{\tau}{2}\rfloor})$, such that $M_2(u_0)=M_2(u _{\lfloor\frac{\tau}{2}\rfloor})=1$,  and $M_1(v_{i+1})=(M_1(v_i)-1)\mod 3$, for $i =1,2,\cdots,\lfloor\frac{\tau}{2}\rfloor-1$}
\IF{$M_3(u _{\lfloor\frac{\tau}{2}\rfloor})=0$}\label{step:algo1}
\STATE{$w \leftarrow u _{\lfloor\frac{\tau}{2}\rfloor}$}
\ELSE
\STATE{ $w \leftarrow$ the last node on the path $P((u _{\lfloor\frac{\tau}{2}\rfloor}),r)$ such that $M_3(w)=1$.}
\ENDIF
\STATE{Let $s(w,r)$ be the sequence of port numbers corresponding to the simple path from $w$ to $r$. Let $s_1,s_2,\cdots, s_{\lfloor\frac{\tau}{2}\rfloor}$ be substrings of $s(w,r)^*$ of lengths differing by at most 1,
 such that $s(w,r)^*$ is the concatenation $s_1s_2\cdots s_{\lfloor\frac{\tau}{2}\rfloor}$.\\  $C(u_i) \leftarrow s_i$, for $i=1,2\cdots,\lfloor\frac{\tau}{2}\rfloor$.  }\label{step:algo2}
\ENDFOR
\STATE{$C(v) \leftarrow 0$, for all other nodes.}
\STATE{$A(v)\leftarrow(M_1(v),M_2(v), M_3(v),C(v))$, for all $v \in T$.}
\ENDIF
\STATE{Output the binary string $\rho(v)$ unambiguously coding $A(v)$,  for all $v \in T$.}
\end{algorithmic}\label{alg:alg1}
\end{algorithm}

Algorithm {\textsc{Election$(v,\rho(v),\tau)$}} using advice $\rho(v)$ given by Algorithm {\textsc{Advice$(T,\tau)$}} works as follows.
Every node $v$ decodes from its advice the terms $M_1(v)$, $M_2(v)$, $M_3(v)$, and $C(v)$ of the sequence $A(v)=(M_1(v),M_2(v), M_3(v),C(v))$.
 If $\tau \le 1$, then the node decodes the sequence of port numbers from $C(v)$ (which corresponds to the path from $v$ to $r$) and outputs it.

 Suppose that $\tau >1$.
 If a node $v$ can see a node $u$ with $M_1(u)=3$, then $v$ outputs the sequence of port numbers corresponding to the simple path from $v$ to $u$. This sequence is seen in the ball $\cB(v,\tau)$. Otherwise, each node $v$ can see in time $\tau$ a path $P=(u_0,u_1,\cdots, u _{\lfloor\frac{\tau}{2}\rfloor})$, such that $u_0$ is an ancestor of $v$, $M_2(u_0)=M_2(u _{\lfloor\frac{\tau}{2}\rfloor})=1$,  and $M_1(v_{i+1})=(M_1(v_i)-1)\mod 3$, for $i =0,1,\cdots,\lfloor\frac{\tau}{2}\rfloor-1$, because, for every node $u$ at depth $k\lfloor\frac{\tau}{2}\rfloor$, we have $M_2(u)=1$. From the advice at nodes $u_0,u_1,\cdots, u _{\lfloor\frac{\tau}{2}\rfloor}$, node $v$ decodes  $C(u_1)$, $C(u_2)$, $\cdots$, $C(u _{\lfloor\frac{\tau}{2}\rfloor})$.
 Next,  node $v$ computes the string $s'$ which is the concatenation of $C(u_1)$, $C(u_2)$, $\cdots$, $C(u _{\lfloor\frac{\tau}{2}\rfloor})$. This string $s'$ unambiguously codes the sequence $\pi(w,r)$ of port numbers corresponding to the path from $w$ to $r$, where $w=u _{\lfloor\frac{\tau}{2}\rfloor}$ if the depth of $u _{\lfloor\frac{\tau}{2}\rfloor}$ is less than $ \lceil \frac{D}{2} \rceil -\tau$, and where $w$ is the node at depth $\lceil \frac{D}{2}\rceil-\tau$ on $P(v,r)$, if  the depth of $u _{\lfloor\frac{\tau}{2}\rfloor}$ is at least $ \lceil \frac{D}{2} \rceil -\tau$.
  Let $\pi(v,w)$ be the sequence of port numbers corresponding to the path from $v$ to $w$, which can be seen in the ball $\cB(v,\tau)$.
 The node $v$ outputs the concatenation of sequences  $\pi(v,w)$ and $\pi(w,r)$.

  \begin{algorithm}
\caption{\textsc{Election$(v,\rho(v),\tau)$}}
\begin{algorithmic}[1]
\STATE{Get the labeled ball  $\cB(v,\tau)$ in time $\tau$.}
\STATE{Decode  the terms $M_1(u)$, $M_2(u)$, $M_3(u)$, and $C(u)$ from the advice $\rho(u)$ at all nodes $u$ in $\cB(v,\tau)$.}
\IF{$\tau=0$ or  $\tau=1$}
\STATE{decode the sequence of port numbers from $C(v)$ and output it. }
\ELSE
\IF{ there exists a node $u$ in $\cB(v,\tau)$, such that $M_1(u)= 3$}
\STATE{ let $\pi$ be the sequence of port numbers corresponding to the simple path from $v$ to $u$. }
\ELSE
\STATE{Let $P=(v,v_1,v_2, \cdots,v_{\tau})$ be the path in  $\cB(v,\tau)$ such that $M_1(v_1)=(M_1(v)-1)\mod 3$ and $M_1(v_{i+1})=(M_1(v_i)-1)\mod 3$, for $i =1,2,\cdots,\tau-1$. Let $j$ be the smallest index such that $v_j$ and $v_{j+\lfloor\frac{\tau}{2}\rfloor}$ are the nodes in $P$ satisfying $ M_2(v_j)= M_2(v_{j+\lfloor\frac{\tau}{2}\rfloor})= 1$. Let $s'$ be the concatenation $C(v_{j+1}) C(v_{j+2}) \cdots  C(v_{j+\lfloor\frac{\tau}{2}\rfloor})$. Let $\pi'$ be the sequence of port numbers coded by $s'$. }
\IF{$M_3(v_{j+\lfloor\frac{\tau}{2}\rfloor})=0$}
\STATE{$w \leftarrow v_{j+\lfloor\frac{\tau}{2}\rfloor}$}\label{step:1}
\ELSE
\STATE{ $w \leftarrow$ the last node on the path $P$ such that $M_3(w)=1$.}\label{step:2}
\ENDIF
\STATE{Let $\pi''$ be the sequence of port numbers corresponding to the simple path from $v$ to $w$ (seen in the ball $\cB(v,\tau)$).}
\STATE{$\pi \leftarrow \pi''\pi'$}
\ENDIF
\STATE{Output $\pi$.}
\ENDIF
\end{algorithmic}\label{alg:alg2}
\end{algorithm}

 The following two lemmas establish an upper bound on the size of advice provided by Algorithm \textsc{Advice$(T,\tau)$}  .

\begin{lemma}\label{lem:path}
Let $s(v,r)$ be the  the sequence of port numbers corresponding to the path $P$ from $v$ to $r$. Then the length of  $s(v,r)^*$ is in $O ((D-2\tau) \log (\frac{n-2\tau}{D-2\tau}))$ for
 every node $v\in T$ of depth at most $ \lceil\frac{D}{2}\rceil-\tau$.
\end{lemma}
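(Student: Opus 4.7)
The plan is to directly estimate the length of the coded string $s(v,r)^*$ using the construction from Section 2, together with Jensen's inequality applied to the concavity of the logarithm. Let $P(v,r) = (u_0 = v, u_1, \ldots, u_k = r)$, so that $s(v,r) = (p_1, \ldots, p_k)$ where $p_i$ is the port number at $u_{i-1}$ leading toward $u_i$. Since $v$ has depth at most $\lceil D/2 \rceil - \tau$, we have $k \leq \lceil D/2 \rceil - \tau$, hence $k = O(D - 2\tau)$. Each $p_i$ satisfies $p_i < d_{u_{i-1}}$, the degree of $u_{i-1}$. By the coding scheme, $|s(v,r)^*| = 2\sum_{i=1}^k(\lfloor \log p_i \rfloor + 1) \leq 2k + 2\sum_{i=0}^{k-1} \log d_{u_i}$.

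Next, I would bound the sum of degrees along the path. Since $\sum_{w \in T} d_w = 2(n-1)$, and each of the $n - k$ nodes outside $\{u_0, \ldots, u_{k-1}\}$ contributes at least $1$ to this sum, we get $\sum_{i=0}^{k-1} d_{u_i} \leq 2(n-1) - (n-k) = n + k - 2$. Then by Jensen's inequality applied to the concave function $\log$,
$$\sum_{i=0}^{k-1} \log d_{u_i} \;\leq\; k \log\!\left(\frac{n + k - 2}{k}\right),$$
so that the length of $s(v,r)^*$ is at most $2k + 2k\log\big((n+k-2)/k\big)$.

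Finally, I would convert this into the claimed form. A key observation is that since $T$ has diameter $D$, we have $n \geq D + 1$; combined with $k \leq \lceil D/2\rceil - \tau$ this yields $n - 2\tau \geq 2k$, so $(n+k-2)/k \leq 2(n-2\tau)/k + O(1)$ and $\log\!\big((n+k-2)/k\big) = O\!\big(\log((n-2\tau)/k)\big)$. It remains to show the uniform bound
$$k\log\!\left(\frac{n-2\tau}{k}\right) \;=\; O\!\left((D-2\tau)\log\!\frac{n-2\tau}{D-2\tau}\right)$$
over all $k \in [1, \lceil D/2 \rceil - \tau]$. The main obstacle lies here: the function $f(k) = k\log((n-2\tau)/k)$ is not monotone in $k$ but is concave with maximum at $k = (n-2\tau)/e$. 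One must verify that in the regime where $k$ ranges up to $\lceil D/2\rceil - \tau \leq (n-2\tau)/2$, the maximum of $f$ over this range is attained at $k = \Theta(D - 2\tau)$ (or else is dominated by it up to multiplicative constants), which delivers the bound of the lemma. The remaining routine step is to absorb the additive $2k$ term and all constants into the $O$-notation.
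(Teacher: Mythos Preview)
Your overall plan---bound each port by the degree of its node and apply Jensen to the concave logarithm---matches the paper's argument. The genuine gap is in your degree-sum bound.

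You use the handshake identity to get $\sum_{i=0}^{k-1} d_{u_i}\le n+k-2$, which is correct but too weak for the stated conclusion. Your conversion step ``$\log\bigl((n+k-2)/k\bigr)=O\bigl(\log((n-2\tau)/k)\bigr)$'' is false in general: take $D$ close to $n$ and $\tau$ close to $D/2$ (so that $n-2\tau$ stays bounded while $n$ grows), for instance $k=1$ and $n-2\tau=2$. Then the left side is $\log(n-1)$ while the right side is $1$, so the ratio is unbounded. Whenever $n-2\tau=o(n)$, your argument yields only $O\bigl((D-2\tau)\log(n/(D-2\tau))\bigr)$, which does not imply the lemma.

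The paper obtains the sharper bound $\sum_{i=0}^{k-1} d_{u_i}\le n-2\tau$ by exploiting the tree structure rather than the raw edge count: since $r$ is the centre, a second diametral half-path of length $\lfloor D/2\rfloor\ge\tau$ leaves $r$ and meets $P$ only at $r$, and there are at least $\tau$ nodes at depth exceeding $\lceil D/2\rceil-\tau$ (hence also off $P$). These nodes outside $P$ account for the ``$-2\tau$'' you are missing. With that in hand, Jensen gives $\sum\log d_{u_i}\le k\log\bigl((n-2\tau)/k\bigr)$, and then your concavity analysis of $f(k)=k\log((n-2\tau)/k)$ on $[1,\lceil D/2\rceil-\tau]$---a point the paper itself passes over without comment---finishes the proof.
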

\begin{proof}
Let $v$ be a node of $T$ of depth at most $ \lceil\frac{D}{2}\rceil-\tau$.
It is enough to prove that the sum of logarithms of degrees of nodes on the path $P$ is in $O ((D-2\tau) \log (\frac{n-2\tau}{D-2\tau}))$.
Since the depth of $v$ in $T$ is at most $\lceil\frac{D}{2}\rceil-\tau$, there exist at least $\tau$ nodes in $T$ with depth larger than $\lceil\frac{D}{2}\rceil-\tau$.
Also, since the diameter of $T$ is $D$, there exists at least one path of length $\lfloor\frac{D}{2}\rfloor$ with no common node with $P$ other than $r$.
Let $P=(v,u_1,u_2,\cdots,u_{k-1},r)$. Let $d(u)$ denote the degree of node $u$.

Then,
$d(v)+ \sum_{i=1}^{k-1} d(u_i) \le n-\tau-\lfloor\frac{D}{2}\rfloor\le n-2\tau$.

The sum of logarithms of these degrees is $ \log d(v) + \sum_{i=1}^{k-1} \log d(u_i) = \log \left(d(v)\prod_{i=1}^k d(u_i)\right)$. The value of $d(v)\prod_{i=1}^k d(u_i)$ is maximized when $d(v)=d(u_i)= \frac{d(v)+ \sum_{i=1}^k d(u_i)}{k}$ for $1\le i \le k-1$. Hence, this sum of logarithms is at most   $k \log (\frac{n-2\tau}{k}) \le (\lceil\frac{D}{2}\rceil-\tau) \log (\frac{n-2\tau}{\lceil\frac{D}{2}\rceil-\tau}) \in O ((D-2\tau) \log (\frac{n-2\tau}{D-2\tau})) $.
\end{proof}

\begin{lemma}\label{lem:size}
Assume that the diameter $D$ of the tree is at least 3.
The size of the advice given to each node $v$ in $T$  by Algorithm \textsc{Advice$(T,\tau)$}  is in $O(D\log \frac{n}{D})$, when $\tau=0$, and it is in  $O (\frac{D-2\tau}{\tau} \log (\frac{n-2\tau}{D-2\tau}))$, when $\tau>0$.
\end{lemma}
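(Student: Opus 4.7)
The proof splits naturally on whether $\tau\leq 1$ or $\tau\geq 2$. In the first regime, Algorithm \textsc{Advice}$(T,\tau)$ essentially codes $s(v,r)^*$ (padded with constant-length fields for $M_1,M_2,M_3$). When $\tau=0$, every node has depth at most $\lceil D/2\rceil=\lceil D/2\rceil-\tau$, so Lemma \ref{lem:path} directly yields $|s(v,r)^*|\in O(D\log(n/D))$. When $\tau=1$, Lemma \ref{lem:path} covers all nodes of depth at most $\lceil D/2\rceil-1$; any remaining leaf is handled by mimicking the proof of Lemma \ref{lem:path}, using that at least $\lfloor D/2\rfloor$ nodes on a diameter-witnessing branch do not lie on the $v$-to-$r$ path, which by concavity of $\log$ still gives the claimed $O\bigl(\frac{D-2}{1}\log\frac{n-2}{D-2}\bigr)$ bound.

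For $\tau\geq 2$ I treat the four components of $A(v)=(M_1(v),M_2(v),M_3(v),C(v))$ separately. Since $M_1(v)\in\{0,1,2,3\}$, $M_2(v)\in\{0,1\}$, and $M_3(v)\in\{0,1\}$, these fields can be unambiguously encoded with a constant number of bits each. Nodes $v$ that do not belong to any path $P=(u_0,\ldots,u_{\lfloor\tau/2\rfloor})$ considered by the algorithm moreover have $C(v)=0$, so their total advice size is $O(1)$. The remaining nodes are precisely those of the form $v=u_i$ for some $i\geq 1$ on such a path, and for them $C(v)$ is the $i$-th block in a partition of $s(w,r)^*$ into $\lfloor\tau/2\rfloor$ substrings of lengths differing by at most one.

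The only non-routine step is verifying that the node $w$ used in this partition always satisfies the depth hypothesis of Lemma \ref{lem:path}, that is, has depth at most $\lceil D/2\rceil-\tau$. This follows directly from the case distinction in steps \ref{step:algo1}--\ref{step:algo2}: either $M_3(u_{\lfloor\tau/2\rfloor})=0$, and then $w=u_{\lfloor\tau/2\rfloor}$ has depth strictly less than $\lceil D/2\rceil-\tau$ by definition of $M_3$; or $w$ is taken to be the last ancestor of $u_{\lfloor\tau/2\rfloor}$ with $M_3=1$ on the path to $r$, which is precisely the node of depth $\lceil D/2\rceil-\tau$. Hence Lemma \ref{lem:path} gives $|s(w,r)^*|\in O\bigl((D-2\tau)\log\frac{n-2\tau}{D-2\tau}\bigr)$, and dividing into $\lfloor\tau/2\rfloor$ near-equal pieces yields $|C(v)|\in O\bigl(\frac{D-2\tau}{\tau}\log\frac{n-2\tau}{D-2\tau}\bigr)$. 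Assembling $\rho(v)$ from the four fields with $O(1)$ delimiting overhead then produces the advertised bound; the depth-of-$w$ argument above is the single real obstacle, and everything else is bookkeeping.
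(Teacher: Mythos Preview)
Your proof is correct and follows essentially the same approach as the paper's: split on $\tau\le 1$ versus $\tau\ge 2$, invoke Lemma~\ref{lem:path} to bound $|s(w,r)^*|$, and divide by $\lfloor\tau/2\rfloor$ for the $C$-component. You are somewhat more careful than the paper in two places---handling the depth-$\lceil D/2\rceil$ leaves when $\tau=1$ explicitly, and spelling out why the node $w$ chosen in steps~\ref{step:algo1}--\ref{step:algo2} has depth at most $\lceil D/2\rceil-\tau$---but these are elaborations of the same argument, not a different route.
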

\begin{proof}
First suppose that $\tau\leq1$.
In this case, the advice given to every node $v$ codes the sequence of port numbers corresponding to the path from $v$ to $r$.
By Lemma \ref{lem:path}, the size of the advice is  $O(D\log \frac{n}{D})$.

In the rest of the proof we assume that $\tau \geq 2$.
Consider any path $P=(u_0,u_1,\cdots, u _{\lfloor\frac{\tau}{2}\rfloor})$, such that $M_2(u_1)=M_2(u _{\lfloor\frac{\tau}{2}\rfloor})=1$,  and $M_1(v_{i+1})=(M_1(v_i)-1)\mod 3$, for $i =1,2,\cdots,\lceil\frac{\tau}{2}\rfloor-1$. Let $s$ be the concatenation $C(u_1)C(u_2)\cdots C(u _{\lfloor\frac{\tau}{2}\rfloor})$.  According to steps \ref{step:algo1}-\ref{step:algo2} of Algorithm \ref{alg:alg1}, $s$ is the binary string coding the sequence of port numbers corresponding to the path from a node of depth at most $ \lceil\frac{D}{2}\rceil-\tau$ to $r$. Hence, by Lemma \ref{lem:path}, the length $|s|$ of $s$ is in $O ((D-2\tau) \log (\frac{n-2\tau}{D-2\tau}))$. Since $C(u_i) \le \lceil\frac{|s|}{\lfloor\frac{\tau}{2}\rfloor}\rceil$, therefore, $C(u_i)  \in O (\frac{D-2\tau}{\tau} \log (\frac{n-2\tau}{D-2\tau}))$.
For all other nodes $v$ of $T$, we have $C(v)=0$. Therefore, $C(v)  \in O (\frac{D-2\tau}{\tau} \log (\frac{n-2\tau}{D-2\tau}))$, for all nodes $v$ of $T$. The other three components of $A(v)$ are of constant size. Hence the size of the binary string $\rho(v)$ coding the advice $A(v)$ is in $O (\frac{D-2\tau}{\tau} \log (\frac{n-2\tau}{D-2\tau}))$.
\end{proof}

The following lemma proves the correctness of our election algorithm.

\begin{lemma}\label{lem:correctness}
Every node $v$ of a tree $T$ executing,  in time $\tau$, Algorithm {\textsc{Election$(v,\rho(v),\tau)$}} with advice $\rho(v)$ given by Algorithm {\textsc{Advice$(T,\tau)$}},
chooses node $r$ as the leader and outputs the sequence of port numbers corresponding to the path from $v$ to $r$.
 \end{lemma}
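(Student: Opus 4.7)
The plan is to verify Algorithm \textsc{Election} by a case split on $\tau$ and on what $v$ sees in $\cB(v,\tau)$. For $\tau \in \{0,1\}$ the claim is immediate: Algorithm \textsc{Advice} sets $C(v) = s(v,r)^*$ at every node, and Algorithm \textsc{Election} simply decodes and outputs this string.

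For $\tau \geq 2$ I would first establish a direction-identification sublemma: the labels $M_1$ are assigned so that along every root-to-leaf branch they cycle $0,1,2$ with the marker $3$ reserved for the root. Hence at any non-root node $u$, the unique neighbor whose $M_1$-label equals $(M_1(u)-1) \bmod 3$ is the parent of $u$. This lets $v$ unambiguously extract, from $\cB(v,\tau)$, the upward path $(v,v_1,\ldots,v_\tau)$ of length $\tau$. I would then split on whether some $u \in \cB(v,\tau)$ carries $M_1(u)=3$. If such $u$ exists it must be $r$, and the visible port sequence from $v$ to $u$ equals $s(v,r)$, finishing the case.

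Otherwise, $v$'s depth exceeds $\tau$, so $v$ has at least $\tau$ proper ancestors. Since $M_2=1$ on every ancestor at depth a multiple of $\lfloor \tau/2 \rfloor$, there is a smallest index $j$ with $0 \leq j < \lfloor \tau/2 \rfloor$ satisfying $M_2(v_j) = M_2(v_{j+\lfloor \tau/2 \rfloor}) = 1$, and both endpoints lie inside $\cB(v,\tau)$. By the construction in steps \ref{step:algo1}--\ref{step:algo2} of Algorithm \textsc{Advice}, the concatenation $C(v_{j+1}) \cdots C(v_{j+\lfloor \tau/2 \rfloor})$ is exactly $s(w,r)^*$, where $w = v_{j+\lfloor \tau/2 \rfloor}$ when $M_3(v_{j+\lfloor \tau/2 \rfloor})=0$, and otherwise $w$ is the last $M_3=1$ node on the segment. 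Since steps \ref{step:1}--\ref{step:2} of Algorithm \textsc{Election} apply the identical rule, the algorithm's $w$ coincides with the $w$ used by the oracle. Thus $\pi' = s(w,r)$, $\pi''$ is the port sequence from $v$ to $w$ visible in $\cB(v,\tau)$, and their concatenation is $s(v,r)$.

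The main obstacle I anticipate is arguing that the segment $(v_j,\ldots,v_{j+\lfloor \tau/2 \rfloor})$ singled out locally by Algorithm \textsc{Election} is really one of the ancestral segments the oracle used when storing the $C$-strings. This comes down to observing that $M_1$ and $M_2$ depend only on depth (modulo $3$ and $\lfloor \tau/2 \rfloor$ respectively), so the two consecutive $M_2$-marked ancestors of $v$ form a unique canonical segment, and the $M_3$-tiebreaker for $w$ is syntactically the same on both sides. Routine checks on the lengths of $\pi''$ and $\pi'$ then confirm the output is a valid encoding of the full path $s(v,r)$.
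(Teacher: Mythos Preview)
Your proposal is correct and follows essentially the same case split as the paper's proof, just with more detail on why the direction can be recovered from $M_1$ and why the segment located by the Election algorithm coincides with one used by the Advice algorithm. One small slip: in the case $M_3(v_{j+\lfloor\tau/2\rfloor})=1$, the node $w$ chosen by the oracle (and by step~\ref{step:2} of \textsc{Election}) is the last $M_3=1$ node on the full upward path $P=(v,v_1,\ldots,v_\tau)$, not merely on the segment $(v_j,\ldots,v_{j+\lfloor\tau/2\rfloor})$; equivalently, $w$ is the ancestor of $v$ at depth exactly $\lceil D/2\rceil-\tau$, which may lie strictly above $v_{j+\lfloor\tau/2\rfloor}$. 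Your key claim that the two rules are syntactically identical still holds, and $w$ is visible since $v$ has depth at most $\lceil D/2\rceil$, so this does not affect the argument.
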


 \begin{proof}
If  there exists a node $u$ in $\cB(v,\tau)$, such that $M_1(u)= 3$, i.e., if a node $v$ can see the node $u=r$, then $v$ outputs the sequence of port numbers
corresponding to the  path from $v$ to $r$, reading it from $\cB(v,\tau)$. Otherwise, consider two cases.

 If $\tau \le 1$, the advice assigned to each node codes the sequence of port numbers corresponding to the path from this node to $r$. If $\tau >1$, then the components $C(v)$ of the advice given to nodes $v$ have the following property. Their concatenation, for nodes in a segment of length $\lfloor\frac{\tau}{2}\rfloor$, between nodes at depths which are multiples of $\lfloor\frac{\tau}{2}\rfloor$,  read bottom-up, can be decoded as the sequence of port numbers corresponding to the path from  node $w$
 defined in steps \ref{step:1}-\ref{step:2} to the root $r$. Every node can see at least one such entire segment, recognizes the direction bottom-up, and can see  this node $w$. Therefore, every node $v$ can output the sequence of port numbers corresponding to the path from $v$ to $r$.
 \end{proof}

Lemmas \ref{lem:size} and \ref{lem:correctness} imply  the following theorem.
\begin{theorem}\label{th:ub}
For any $n$-node tree with diameter $D$, Algorithm {\textsc{Election$(v,\rho(v),\tau)$}} performs election in time $\tau$ with advice of size $O(D\log \frac{n}{D})$, when $\tau=0$, and with advice of size $O (\frac{D-2\tau}{\tau} \log (\frac{n-2\tau}{D-2\tau}))$ when $0<\tau<\lceil \frac{D}{2} \rceil$.
\end{theorem}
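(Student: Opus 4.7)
The plan is to derive Theorem~\ref{th:ub} by simply assembling the three lemmas already established in this section, together with a brief treatment of the trivial case. First I would dispose of $D \leq 2$, which Algorithm \textsc{Advice$(T,\tau)$} explicitly sets aside: for diameter $1$ or $2$, one bit of advice (mark the intended leader by $1$, everyone else by $0$) supports election in any allotted time, and the advice-size bound claimed in the theorem is satisfied trivially. So from now on assume $D \geq 3$, the regime in which Lemmas~\ref{lem:size} and~\ref{lem:correctness} apply.

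Next I would argue the time bound: Algorithm \textsc{Election$(v,\rho(v),\tau)$} only acquires $\cB(v,\tau)$, which costs exactly $\tau$ synchronous rounds in the $\cal{LOCAL}$ model, and then performs purely local decoding and concatenation of port sequences; so the election time is $\tau$. Correctness is then immediate from Lemma~\ref{lem:correctness}: every node $v$ outputs a sequence of port numbers tracing the simple path from $v$ to the common root $r$, regardless of whether $v$ already sees $r$ inside $\cB(v,\tau)$ or must splice its visible prefix $\pi''$ with the decoded suffix $\pi'$ obtained from the $C$-components along the first full $\lfloor \tau/2\rfloor$-segment witnessed in $\cB(v,\tau)$.

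Finally, the advice-size bound is exactly Lemma~\ref{lem:size}: $O(D\log(n/D))$ when $\tau = 0$, and $O((D-2\tau)/\tau \cdot \log((n-2\tau)/(D-2\tau)))$ when $\tau > 0$. The only check I would perform is that the $\tau = 1$ case falls under the second regime of the theorem without requiring a separate argument; this is fine because the Lemma~\ref{lem:path} bound that governs $\tau \leq 1$ inside the proof of Lemma~\ref{lem:size} is itself of the form $O((D-2)\log((n-2)/(D-2)))$, matching the theorem's second regime up to multiplicative constants. There is no genuine obstacle here: all the combinatorial content lives in Lemmas~\ref{lem:path}--\ref{lem:correctness}, and the theorem is essentially a packaging step that combines their conclusions.
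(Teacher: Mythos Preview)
Your proposal is correct and matches the paper's approach exactly: the paper states the theorem as an immediate consequence of Lemmas~\ref{lem:size} and~\ref{lem:correctness}, and your write-up does precisely that, with the extra (harmless) care of handling $D\le 2$ and sanity-checking the $\tau=1$ case.
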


\subsection{Lower Bound}\label{sec:lb}

Let  $n'>D \ge 3$. Let $L$ be a line of length $D$ with nodes $v_0$, $v_1$, $\cdots$, $v_D$ from left to right, and with port numbers 0 and 1 at each edge from left to right.
Let $z= \lceil\frac{n'-2\tau}{2(\lceil\frac{D}{2}\rceil-\tau)}\rceil$. We construct an $n$-node tree $T$ from $L$ as follows, see Fig. \ref{T1} and Fig. \ref{T2}. For each $i$, such that  $\tau \le i \le \lceil\frac{D}{2}\rceil-1 $, attach $z-1$ nodes of degree one to each of the nodes $v_i$ and $v_{D-i}$ . The port numbers corresponding to the newly added edges at the nodes on $L$ are $2,\dots, z$.

The total number of nodes in $T$ is given by $n= 2\tau+ 2(\lceil\frac{D}{2}\rceil-\tau)z= 2\tau+ 2(\lceil\frac{D}{2}\rceil-\tau)\lceil\frac{n'-2\tau}{2(\lceil\frac{D}{2}\rceil-\tau)}\rceil$.

Therefore,

$$2\tau+ 2(\lceil\frac{D}{2}\rceil-\tau)\frac{n'-2\tau}{2(\lceil\frac{D}{2}\rceil-\tau)} \le  n  \le  2\tau+ 2(\lceil\frac{D}{2}\rceil-\tau)\left(\frac{n'-2\tau}{2(\lceil\frac{D}{2}\rceil-\tau)}+1\right) $$ \\
which implies, $$~~~~~~~~~n' \le  n  \le  n'+ 2(\lceil\frac{D}{2}\rceil-\tau)$$

Since $D <n'$, we have $n \in \Theta(n')$.

Let $x=(x_1,x_2, \cdots, x_{\lceil \frac{D}{2} \rceil-\tau})$ and $y=(y_1,y_2, \cdots, y_{\lceil \frac{D}{2} \rceil-\tau})$ be two sequences such that
$0 \le x_i \le z$, $x_i \ne 1$,  for $i=1,2, \cdots, \lceil \frac{D}{2} \rceil-\tau$ and $1 \le y_i \le z$ for $i=1,2, \cdots, \lceil \frac{D}{2} \rceil-\tau$ . We construct a tree $T_x$ from $T$ by exchanging the ports $0$ and $x_i$ at $v_{\tau+i-1}$, for $i=1,2, \cdots, \lceil \frac{D}{2} \rceil-\tau$,
and we construct a tree $T_y$ from $T$ by exchanging the ports $1$ and $y_i$ at $v_{D-\tau-i+1}$, for $i=1,2, \cdots, \lceil \frac{D}{2} \rceil-\tau$ . Let $\cT_X$ be the set of all such trees $T_x$ constructed from $T$ and let $\cT_Y$ be the set of all trees $T_y$ constructed from $T$. Then $|\cT_X|= |\cT_Y|=z^{(\lceil \frac{D}{2} \rceil-\tau)}$.
Let $\cT =\cT_X \cup \cT_Y$.
\begin{figure}[h]
\centering
\includegraphics[width=0.6\textwidth]{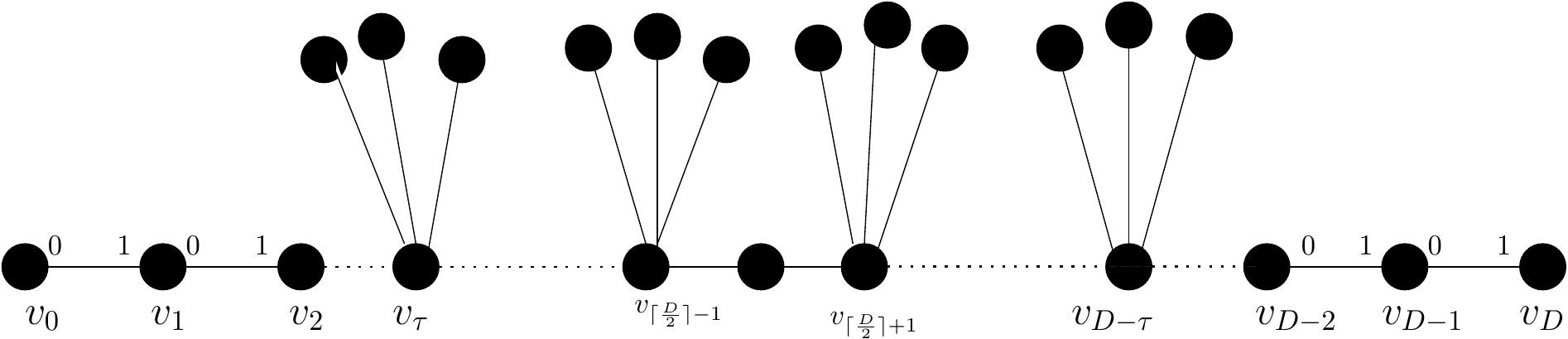}
\caption{Construction of $T$ for $D$ even}
\label{T1}
\end{figure}

\begin{figure}[h]
\centering
\includegraphics[width=0.6\textwidth]{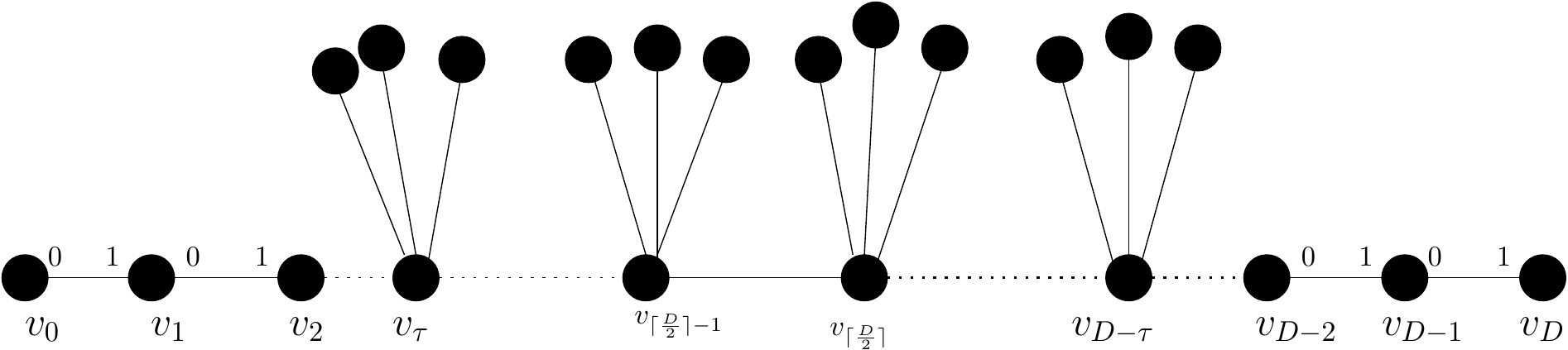}
\caption{Construction of $T$ for $D$ odd}
\label{T2}
\end{figure}

The following theorem gives a lower bound on the size of advice sufficient to perform election in time $\tau<\lceil \frac{D}{2} \rceil$, using the class of trees $\cT$
constructed above. This bound matches the upper bound from Theorem \ref{th:ub}.

\begin{theorem}\label{lem:lem1}
Consider any algorithm ELECT which solves election in $0\leq \tau<\lceil \frac{D}{2} \rceil$ rounds, for every  tree. For all integers $n' >D\ge 3$, there exists a
tree $T\in \cT$ with $n \in \Theta(n')$ nodes and diameter $D$,  for which algorithm ELECT requires
advice of size $\Omega(D\log \frac{n}{D})$, when $\tau=0$, and advice of size $\Omega \left(\frac{D-2\tau}{\tau} \log \left( \frac{n-2\tau}{D-2\tau}\right)\right)$,
when $0< \tau<\lceil \frac{D}{2} \rceil$.
\end{theorem}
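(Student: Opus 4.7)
My plan is an information-theoretic pigeonhole argument localized at the line-endpoint $v_0$ (or, symmetrically, at $v_D$), using the subclass $\cT_X \subseteq \cT$ of size $z^{\lceil D/2\rceil-\tau}$. The dual class $\cT_Y$ will enter only to pin down the oracle's freedom in choosing the elected leader.

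First I would analyze the critical ball. For any $\tau<\lceil D/2\rceil$ and any $T_x\in\cT_X$, the labeled ball $\cB(v_0,\tau)$ contains exactly the $\tau+1$ nodes $v_0,v_1,\dots,v_\tau$, because the nearest pendants are attached at $v_\tau$ and therefore sit at distance $\tau+1$ from $v_0$. The perturbation defining $T_x$ swaps port $0$ and $x_i$ at $v_{\tau+i-1}$; this only touches edges leaving that node towards $v_{\tau+i}$ or one of its pendants, none of which belongs to $\cB(v_0,\tau)$. Consequently, the port-labeled structure of $\cB(v_0,\tau)$ is identical across all $T_x\in\cT_X$, and the only variability of the labeled ball comes from the advice assigned to those $\tau+1$ ball-nodes. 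If the advice size is bounded by $s$, there are at most $2^{s(\tau+1)}$ distinct labeled balls $\cB(v_0,\tau;T_x)$ as $T_x$ ranges over $\cT_X$.

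The crux is to show that the algorithm's output at $v_0$ takes $\Omega(z^{\lceil D/2\rceil-\tau})$ distinct values over $\cT_X$. Whenever the elected leader lies beyond the perturbation region, the unique path from $v_0$ has the shape $0^\tau, x_1, x_2, \dots, x_{\lceil D/2\rceil-\tau},\dots$, so distinct $x$ yield distinct output sequences by injectivity of the coding scheme from the Preliminaries. The subtlety is that an adversarial oracle might try to place the leader close to $v_0$ in order to dodge the perturbed ports. I handle this via the coupling built into $\cT=\cT_X\cup\cT_Y$: in $\cT_X$ the perturbation acts on the right-going port $0$ and leaves the left-going port $1$ clean, whereas in $\cT_Y$ the roles are exchanged. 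Thus any oracle that shelters the $v_0$-path in $\cT_X$ is forced to expose the $v_D$-path in $\cT_Y$, and by an averaging argument at least one of the two far line-leaves incurs $\Omega(z^{\lceil D/2\rceil-\tau})$ pairwise-distinct outputs. Combining with the pigeonhole bound above gives $2^{s(\tau+1)}\geq\Omega\!\left(z^{\lceil D/2\rceil-\tau}\right)$, hence
\[
  s \;=\; \Omega\!\left(\frac{\lceil D/2\rceil-\tau}{\tau+1}\,\log z\right) \;=\; \Omega\!\left(\frac{D-2\tau}{\tau}\,\log\frac{n-2\tau}{D-2\tau}\right),
\]
after substituting $z=\Theta\!\big((n-2\tau)/(D-2\tau)\big)$. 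For the boundary case $\tau=0$, the ball degenerates to $\{v_0\}$ alone, so $v_0$'s output depends only on $\cA_T(v_0)$, and the same counting yields $s=\Omega(D\log(n/D))$. The main technical obstacle I foresee is this ``distinct outputs'' step, which requires controlling the oracle's adversarial leader choice; the coupled use of $\cT_X$ and $\cT_Y$ in the construction is what forces the information-theoretic cost to be paid at one of the two symmetric line-leaves.
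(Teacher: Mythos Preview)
Your proposal follows the paper's proof essentially step for step: both localize the pigeonhole argument at the line endpoint $v_0$, both count at most $2^{O(p)(\tau+1)}$ distinct labeled balls $\cB(v_0,\tau)$ against $|\cT_X|=z^{\lceil D/2\rceil-\tau}$, and both appeal to the $\cT_X\leftrightarrow\cT_Y$, $v_0\leftrightarrow v_D$ symmetry to dispose of the case where the oracle places the leader near $v_0$. Your observation that the port-labeled structure of $\cB(v_0,\tau)$ is invariant across $\cT_X$ (so only the $\tau+1$ advice strings vary) is exactly what the paper uses implicitly when it writes ``there are at most $2^{(p+1)(\tau+1)}$ possible labeled balls.''

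One remark on your handling of the ``sheltering'' subtlety: your sentence ``any oracle that shelters the $v_0$-path in $\cT_X$ is forced to expose the $v_D$-path in $\cT_Y$'' is not literally correct as stated, since the oracle's leader choice in $\cT_X$ and in $\cT_Y$ are independent decisions on disjoint tree families (apart from the base tree $T$). The paper is equally terse here, compressing the whole issue into a bare ``without loss of generality.'' What actually makes the symmetry work is that if $v_0$ has the same labeled ball in $T_1,T_2\in\cT_X$ then it outputs the \emph{same} sequence $\pi$, so $d(v_0,\ell_{T_1})=d(v_0,\ell_{T_2})=|\pi|$; hence the leader is simultaneously far from $v_0$ in both or simultaneously close in both, and in the latter case one switches to the mirror family $\cT_Y$ and the node $v_D$. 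Your averaging formulation is a reasonable way to package this, but you should make the ``same $|\pi|$'' observation explicit rather than treating the two families as if they constrained one another directly.
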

\begin{proof}
We prove the theorem by contradiction. Consider an algorithm ELECT that solves election in $\tau$ rounds with advice of size $p <\left(\frac{\lceil\frac{D}{2}\rceil-\tau}{\tau+1} \log \left( \frac{n'-2\tau}{2(\lceil\frac{D}{2}\rceil-\tau)} \right)\right) -1$. Consider the execution of algorithm ELECT for the trees in $\cT$.
 For any choice of the leader,  at least one of the nodes $v_0$ or $v_D$ must be at distance at least $ \lceil \frac{D}{2}\rceil$ from it. Without loss of generality, let the distance from $v_0$ to the leader be at least $ \lceil \frac{D}{2}\rceil$.

With the size of advice at most $p$, there are at most $2^{(p+1)(\tau+1)}$ possible labeled balls $\cB (v_0,\tau)$.  Hence, the number of different pieces of information that $v_0$
can get within time $\tau$ is at most
 $2^{(p+1)(\tau+1)} < \left( \frac{n-2\tau}{2(\lceil\frac{D}{2}\rceil-\tau)}\right)^{\lceil \frac{D}{2}\rceil-\tau}\le \left\lceil \frac{n-2\tau}{2(\lceil\frac{D}{2}\rceil-\tau)}\right\rceil^{\lceil \frac{D}{2}\rceil-\tau}= |\cT_X|$. Therefore, there exist at least two trees $T_1$, $T_2$ $\in \cT_X$ such that the nodes $v_0$ in $T_1$ and $v_0$ in $T_2$ see the same labeled balls. Hence, $v_0$ in $T_1$ and $v_0$ in $T_2$ must output the same sequence of port numbers to give the path to the leader. According to the construction of the trees in $\cT_X$, for every two such trees,  the paths from $v_0$ of length at least $\lceil \frac{D}{2}\rceil$ must correspond to different sequences of port numbers. This contradicts the correctness of the algorithm ELECT. Therefore, the size of the advice must be in $\Omega\left(\frac{\lceil\frac{D}{2}\rceil-\tau}{\tau+1} \log \left( \frac{n'-2\tau}{2(\lceil\frac{D}{2}\rceil-\tau)} \right)\right)$, i.e., it is in $\Omega(D\log \frac{n}{D})$, when $\tau=0$, and it is in $\Omega \left(\frac{D-2\tau}{\tau} \log \left( \frac{n-2\tau}{D-2\tau}\right)\right)$,
when $0< \tau<\lceil \frac{D}{2} \rceil$.
\end{proof}

Theorems \ref{th:ub} and \ref{lem:lem1} imply the following corollary.

\begin{corollary}
The minimum size of advice sufficient to perform leader election in time $0\leq \tau<\lceil \frac{D}{2} \rceil$ in all $n$-node trees of diameter $D$ is
$\Theta(D\log \frac{n}{D})$, when $\tau=0$, and it is $\Theta \left(\frac{D-2\tau}{\tau} \log \left( \frac{n-2\tau}{D-2\tau}\right)\right)$,
when $0< \tau<\lceil \frac{D}{2} \rceil$.
\end{corollary}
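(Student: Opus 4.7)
The corollary is a direct matching of the upper bound (Theorem \ref{th:ub}) and the lower bound (Theorem \ref{lem:lem1}), so the plan is essentially to verify that the two bounds have the same asymptotic form on each branch of the piecewise expression, and to check that the hypotheses of both theorems can be simultaneously satisfied for the same class of $n$-node trees of diameter $D$.

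First I would handle the case $\tau=0$. Theorem \ref{th:ub} gives an election algorithm using advice of size $O(D\log(n/D))$, so this is an upper bound for every $n$-node tree of diameter $D$. Theorem \ref{lem:lem1} exhibits, for every $n'>D\geq 3$, a specific tree in $\cT$ with $n\in\Theta(n')$ nodes and diameter $D$ on which any election algorithm needs $\Omega(D\log(n/D))$ advice. Combining these yields $\Theta(D\log(n/D))$ for $\tau=0$.

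Next I would do the case $0<\tau<\lceil D/2\rceil$ in the same way. Theorem \ref{th:ub} gives advice of size $O\!\left(\frac{D-2\tau}{\tau}\log\!\left(\frac{n-2\tau}{D-2\tau}\right)\right)$, and Theorem \ref{lem:lem1} gives a matching lower bound $\Omega\!\left(\frac{D-2\tau}{\tau}\log\!\left(\frac{n-2\tau}{D-2\tau}\right)\right)$ realized on some tree in $\cT$ with $n\in\Theta(n')$ and diameter $D$. Since both expressions agree symbol-for-symbol, no nontrivial manipulation is needed, and the $\Theta$-estimate follows.

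The only subtlety worth noting is the quantifier structure: the upper bound holds uniformly over all $n$-node trees of diameter $D$, while the lower bound is witnessed by a particular family $\cT$; the corollary's statement ``for all $n$-node trees of diameter $D$'' is interpreted in the standard worst-case sense (minimum advice sufficient for every tree in the class), which is exactly the quantity squeezed between the two bounds. The minor bookkeeping issue is that Theorem \ref{lem:lem1} produces a tree of size $n\in\Theta(n')$ rather than exactly $n'$, but since we are asserting a $\Theta$-bound, absorbing the constant factor in $n$ versus $n'$ into the asymptotic notation is routine and does not change the form of either $D\log(n/D)$ or $\frac{D-2\tau}{\tau}\log\!\left(\frac{n-2\tau}{D-2\tau}\right)$. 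No step here is a genuine obstacle; the corollary is essentially a one-line consequence of the two preceding theorems.
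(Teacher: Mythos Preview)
Your proposal is correct and matches the paper's approach exactly: the corollary is stated as an immediate consequence of Theorems \ref{th:ub} and \ref{lem:lem1}, with no additional argument given. Your remarks on the quantifier structure and the $n\in\Theta(n')$ bookkeeping are accurate and, if anything, more detailed than what the paper provides.
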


\section{ Election with advice of constant valency}

In this section we study the minimum size of advice to perform election in time $\tau=\lfloor\alpha D \rfloor$, where $\alpha<1/2$ is a positive constant,  assuming that the advice is of constant valency $\lambda >1$.
The section is organized as follows. We first give a general construction of a class $\cT$ of trees that will be used to prove our lower bounds on the size of advice.
The rest of the section is divided into three parts, corresponding, respectively, to the cases of small, medium and large diameter $D$ of the tree, with
respect to its size $n$.  In each part we give a lower bound, using a particular case of our general construction.

In all parts, the proof of the lower bound is split into two facts. The first fact gives the requirement of the minimum size of advice that any leader election algorithm  for trees in $\cT$, working in time $\tau$, has to satisfy. The second fact establishes the condition $\xi_{\lambda}(T_1) \le \tau$ for any tree $T_1 \in  \cT$. This condition asserts that the time $\tau$ is sufficient to elect the leader in any tree from $ \cT$, if sufficiently large $\lambda$-valent advice is given to the nodes of the tree.

As for the upper bounds,
in the cases of small and of medium diameter, we use the previously mentioned upper bound $O(n)$ on the size of advice sufficient for leader election.
For small diameter this leaves a sub-polynomial gap in advice size, and for medium diameter it is tight. In the case of large diameter, we provide an election algorithm using advice of constant size, whenever the allocated time is sufficiently large.

\subsection{Construction of trees for lower bounds}\label{sec:Constant}

Let $\alpha < \frac{1}{2}$  be a positive real constant and $\lambda>1$ a positive integer constant. Let $D<n'$ be positive integers.
Let $\tau=\lfloor \alpha D \rfloor$.
We first suppose that $D$ is even. Later we will address the case when $D$ is odd.

We use the variables $k_1$, $k_2$, $z$, $z'$ in the following construction of a tree $T$.
We assume that $k_2$ is even. The values of these variables will be specified later to obtain our lower bounds for various ranges of the diameter $D$.

Let $T'$ be the tree consisting of a central node $r$ and $k_1$ subtrees $S_1$, $S_2$, $\cdots$, $S_{k_1}$  with $r$ as a common endpoint,  see Fig. \ref{T'}. For $i=1$ to $k_1$,
the subtree $S_i$ consists of $k_2$ paths $P_i^1$, $P_i^2$, $\cdots$, $P_i^{k_2}$ of length $\frac{D}{2}$ with $r$ as a common endpoint. For $i=1,2,\cdots,k_1$ and $j=1,2,\cdots, k_2$, let $v_i^j(0)$, $v_i^j(1)$, $\cdots$, $v_i^j({\frac{D}{2}-1})$ be the nodes on $P_i^j$, where $v_i^j(0)$ is the endpoint of $P_i^j$ other than $r$, and with port numbers 0 and 1 at each edge of $P_i^j$,  from $v_i^j(0)$ to $r$.
 \begin{figure}[h]
\centering
\includegraphics[width=0.5\textwidth]{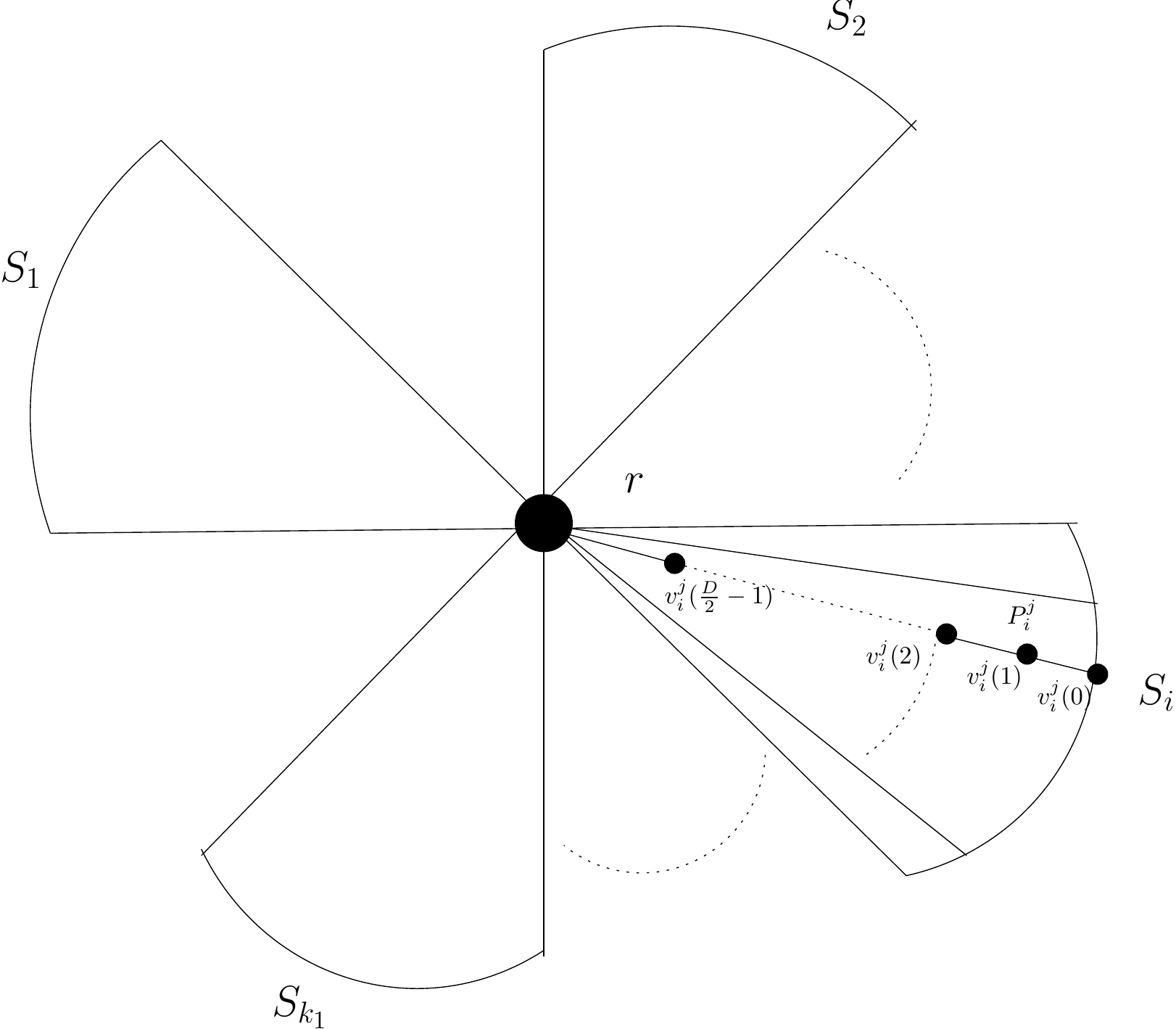}
\caption{The tree $T'$}
\label{T'}
\end{figure}

  A tree $T$ is constructed from $T'$ by attaching some leaves to the nodes of $T'$ in the following way.

  \begin{enumerate}[(1)]
  \item For each $i,j,k$, such that $1 \le i\le k_1$, $1\le j\le k_2$, and $\tau+1\le k\le \frac{D}{2}-1$,

  attach $z-1$ leaves to each of the nodes $v_i^j(k)$.
 These nodes are called {\em white}. The port numbers corresponding to the newly added edges at the node $v_i^j(k)$ are $2,\dots, z$.

 \item For each $i,j$, such that $1 \le i\le k_1$, $1\le j\le k_2$, attach $\lfloor \log D \rfloor$ additional leaves  to each of the nodes $v_i^j(k)$, for $k= q(\tau-2)+3$ for $q\ge 1$. These nodes are called {\em grey}. The port numbers corresponding to these additional edges at the node $v_i^j(k)$ are $z+1,z+2,\cdots,z+\lfloor\log D\rfloor$.

 \item For each $i,j$, such that $1 \le i\le k_1$, $1\le j\le k_2$, attach  $i-1$ additional leaves to each of the nodes $v_i^j(k)$, for $k= q(\tau-2)+2$ for $q\ge 1$. These nodes are called {\em black}.
 The port numbers corresponding to these additional edges are $z+1,z+2,\cdots,z+i-1$ at $v_i^j(k)$, for $k= q(\tau-2)+2$ and $q> 1$, and the port numbers corresponding to these additional edges are $2,3,\cdots,i$ at $v_i^j(\tau)$.

 \item For each $i,j$, such that $1 \le i\le k_1$, $1\le j\le k_2$, attach $z'$ additional leaves to each of the nodes $v_i^j(k)$, for $k= q(\tau-2)+1$ for $q\ge 1$.
 These nodes are called {\em dotted}. The port numbers corresponding to these additional edges are $z+1,z+2,\cdots,z+z'$ at $v_i^j(k)$, for $k= q(\tau-2)+1$ and $q> 1$, and the port numbers corresponding to these additional edges are $2,3,\cdots,z'+1$ at $v_i^j(\tau-1)$.

  \end{enumerate}
Let $Q_i^j$ be the subtree which is constructed by attaching the nodes as stated above to the nodes of $P_i^j$, see Fig. \ref{fig:Ng1}.

\begin{figure}
\centering
   \begin{subfigure}[b]{0.65\textwidth}
   \includegraphics[width=1\linewidth]{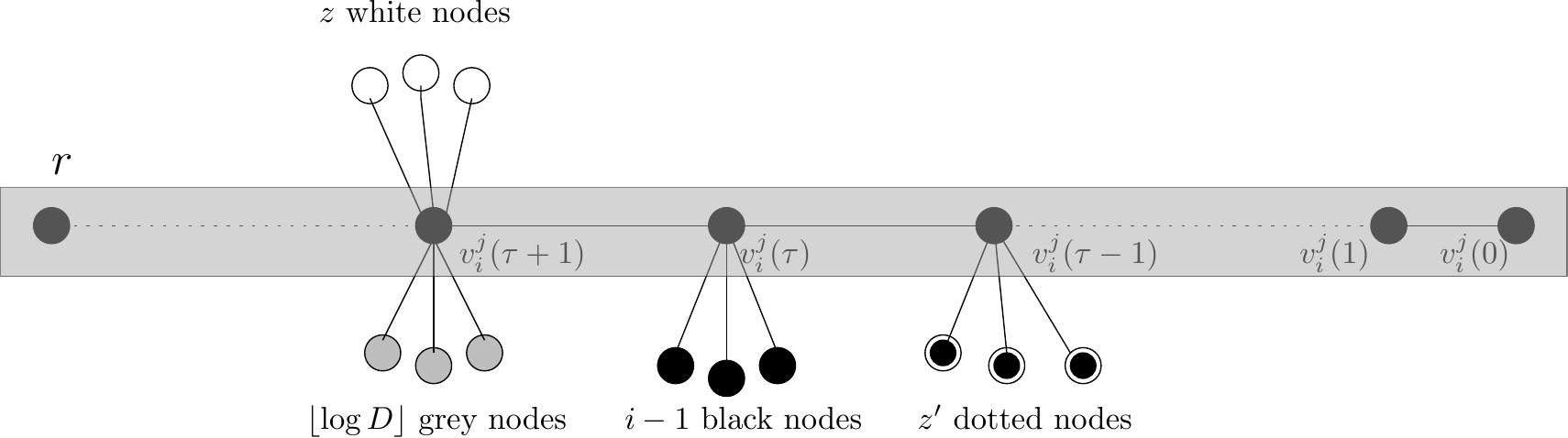}
   \caption{Nodes attached in $P_i^j$ to the nodes $v_i^j(k)$, $k \le \tau+1$. }
\end{subfigure}

\vspace*{0.5cm}

\begin{subfigure}[b]{0.65\textwidth}
   \includegraphics[width=1\linewidth]{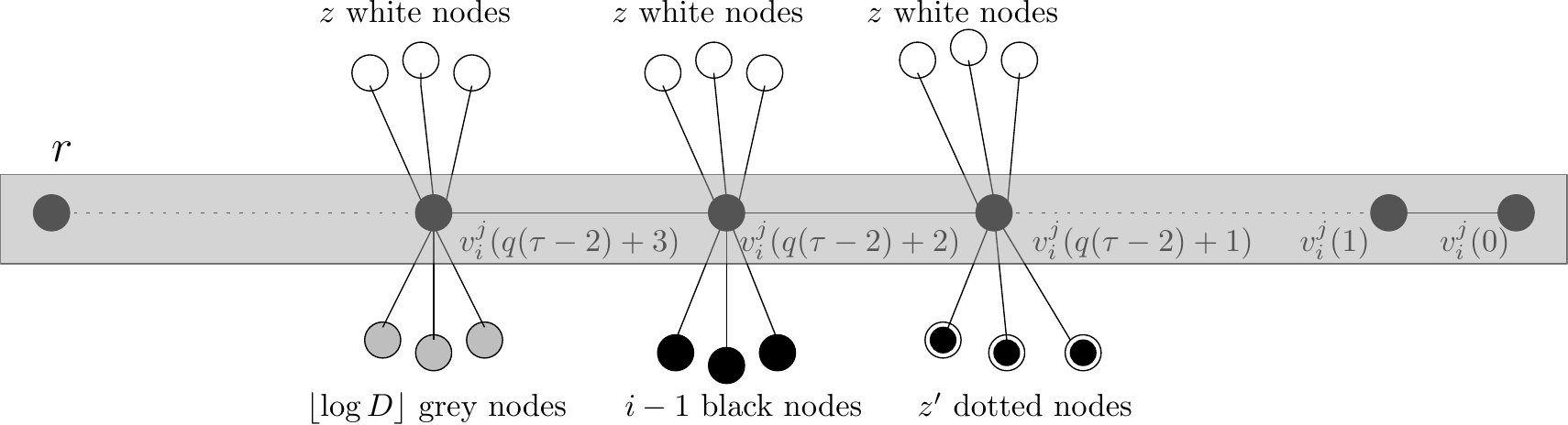}
   \caption{Nodes attached in $P_i^j$ to the nodes $v_i^j(k)$, $k > \tau+1$.}
\end{subfigure}
\caption{The construction of the tree $T$}
 \label{fig:Ng1}
\end{figure}

Let $\gamma= \lfloor \frac{D}{2(\tau-1)}\rfloor$.
The total number $n$ of nodes in the tree $T$ is given by
$$n \le k_1k_2\left( \tau +1+z(\frac{D}{2}-\tau-1)+ (z'+\lfloor \log D \rfloor+ \frac{k_1-1}{2})\gamma \right)$$.

Let $y=\frac{D}{2}-\lfloor \alpha D \rfloor-1$. Let $x=(x_i^j(k): i=1,2, \cdots,k_1, j=1,2,\cdots,{\frac{k_2}{2}},  k=1,2, \cdots,y )$, and $w=(w_i^j(k): i=1,2, \cdots,k_1, j=1,2,\cdots,{\frac{k_2}{2}},  k=1,2, \cdots,y)$ be any sequences of integers such that $0 \le x_i^j(k), w_i^j(k) \le z$ and $x_i^j(k)\neq 1$,  $w_i^j(k)\neq 1$,  for $i=1,2, \cdots,k_1 $, $j=1,2,\cdots,{\frac{k_2}{2}}$ and $k=1,2, \cdots,y $.

The tree $T_x$ is constructed from $T$ by exchanging the port numbers $x_i^j(k)$ and $0$ at the node $v_i^j(\tau+k)$ for $i=1,2, \cdots,k_1 $, $j=1,2,\cdots,{\frac{k_2}{2}}$ and $k=1,2, \cdots,y $.

 The tree $T_{w}$ is constructed from $T$ by exchanging the port numbers $w_i^j(k)$ and $0$ at the node $v_{i}^j(\tau+k)$ for $i=1,2, \cdots,k_1 $, $j=\frac{k_2}{2}+1,\frac{k_2}{2}+2,\cdots,k_2$ and $k=1,2, \cdots,y $.

 Let $\cT_X$ be the set of all such trees $T_x$ and let $\cT_W$ be the set of all such trees $T_w$. Then $|\cT_X|=|\cT_W|=z^{\frac{k_1k_2y}{2}}$.
 Let $\cT= \cT_X \cup \cT_W$.

When $D$ is odd, the tree $T$ is obtained by the same construction for $D-1$,  adding an extra edge to  $P_1^1$ in the construction of $T'$.
The rest is the same as above.

\subsection{Small diameter: $D \in \omega(1)$ and $D \in o(\log n)$}\label{small}

Let $0 <\epsilon<\frac{1}{4} $ be a real constant. Let $D$ and $n'$ be positive integers such that $D \in \omega(1)$ and $D \in o(\log n')$.
In order to prove the lower bound on the size of advice for small diameter,
we use the construction of the class  $\cT$ of trees from Section \ref{sec:Constant} for $k_1=\lceil n'^{\epsilon}\rceil$, $k_2=2\lceil\frac{ n'^{1-4\epsilon}}{2}\rceil$, $z=\lceil\frac{2(n'-n'^{1-\epsilon})}{n'^{1-3\epsilon}(1-2\alpha)D}\rceil$, and $z'=\lfloor(1-3\epsilon)\log n'\rfloor$.

The total number of nodes in a tree from $ \cT$ is
\begin{eqnarray*}
n & \leq & k_1k_2\left( \tau +1+zy+ (z'+\lfloor \log D \rfloor+\frac{k_1-1}{2})\gamma\right) \nonumber \\
& = &2\left\lceil \frac{n'^{1-4\epsilon}}{2}\right\rceil \lceil n'^{\epsilon}\rceil\left( \tau +1+\left\lceil\frac{2(n'-n'^{1-\epsilon})}{n'^{1-3\epsilon}(1-2\alpha)D}\right\rceil y+ (\lceil(1-3\epsilon)\log n'\rceil+\lfloor \log D \rfloor)\gamma+\gamma\left\lceil\frac{n'^{\epsilon}-1}{2}\right\rceil\right)
\end{eqnarray*}
Since $\tau=\lfloor \alpha D \rfloor$, $D\in o(\log n')$ and $\gamma$ is constant, we have $n \in \Theta(n')$.

Before formulating our lower bound, we explain the intuitive role of each node in a tree in $\cT$. There are four types of nodes attached to nodes of $P_i^j$ in the construction of the tree $T$. The nodes of each type have a different role in proving the lower bound in this section. The proof of our lower bound is split into two lemmas. The first lemma gives the minimum size of advice that any leader election algorithm  for trees in $\cT$, working in time $\tau$, has to satisfy. The second lemma establishes the condition $\xi_{\lambda}(T_1) \le \tau$ for any tree $T_1 \in  \cT$.

In the construction of the tree $T$, $z$ white nodes are attached to all the nodes $v_i^j(k)$, for $i=1,2, \cdots,k_1$, $j=1,2,\cdots,k_2$ and $k=\tau+1,\tau+2,\cdots,\frac{D}{2}-1$. These nodes are added, so that port number variation at nodes $v_i^j(k)$ can make the class $\cT$ sufficiently large.
All the other three types of nodes attached are used to prove the second lemma: every node must be able to identify its position in the tree $T_1\in \cT$, given sufficiently large $\lambda$-valent advice. To identify its position in $T_1$, every node must identify the integers $i$, $j$, such that it belongs to the subtree $Q_i^j$, and must identify its position in $Q_i^j$. The $i-1$ black nodes that are attached to the  nodes $v_i^j(k)$, for $k= q(\tau-2)+2$,  $q\ge 1$, help every node to identify the integer $i$. The $z'$ dotted nodes which are attached to the  nodes $v_i^j(k)$, for $k= q(\tau-2)+1$, $q\ge 1$, help every node to identify the integer $j$.
The $\lfloor \log D \rfloor$ grey nodes which are attached to the  nodes $v_i^j(k)$, for $k= q(\tau-2)+3$, $q\ge 1$, help every node to identify its distance from the node $r$,
i.e., its position in $Q_i^j$.

The next lemma gives the lower bound on the size of advice sufficient for leader election,  using the class $\cT$ of trees.

\begin{lemma}\label{lem:lb}
Let $\alpha < \frac{1}{2}$, $\delta<1$ be  positive real constants and $\lambda>1$ an integer constant. Let $D$ and $n'$ be positive integers such that $D \in \omega(1)$ and $D \in o(\log n')$.
Consider any algorithm ELECT which solves election in $\tau=\lfloor \alpha D \rfloor$ rounds with $\lambda$-valent advice, for every  tree with $\lambda$-election index at most $\tau$.
There exists
a tree $T \in \cT$  for which algorithm ELECT with $\lambda$-valent advice, working in time $\tau$, requires advice of size $\Omega(n^\delta)$.
\end{lemma}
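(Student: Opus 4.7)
The plan is a pigeonhole/counting argument applied to the class $\cT_X$, hinging on the fact that the $N := k_1 k_2/2 = \Theta(n^{1-3\epsilon})$ leaves $\{v_i^j(0) : 1 \le i \le k_1,\ 1\le j \le k_2/2\}$ have \emph{pairwise disjoint} radius-$\tau$ balls whose port-labeled structure is \emph{identical} across every tree in $\cT_X$ (the port-swaps defining $T_x$ occur at the nodes $v_i^j(\tau+k)$, which sit strictly outside the radius-$\tau$ balls of these leaves). Each such ball consists only of the path $v_i^j(0),\dots,v_i^j(\tau)$ together with the $z'$ dotted leaves attached at $v_i^j(\tau-1)$, giving size $s = \tau+1+z' = \Theta(\log n)$.

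\textbf{Step 1 (distinct balls for distinct $x$).} For any two distinct $x,x' \in \{0,2,\dots,z\}^{Ny}$, I would argue that the tuples of labeled balls at the $N$ leaves in $T_x$ and $T_{x'}$ must differ. Whenever $x_i^j(k) \neq x_i'^j(k)$, the port leaving $v_i^j(\tau+k)$ upward in $T_x$ differs from that in $T_{x'}$, so the correct path from $v_i^j(0)$ to any leader lying ``above'' $v_i^j(\tau+k)$ differs. For any leader location the algorithm could pick, the companion half $\cT_W$ in the construction (with $w$-modifications placed on the paths with $j > k_2/2$) ensures that at least one leaf has a leader-path that crosses the modified edge in the upward direction, and the label at that leaf's ball is therefore forced to distinguish $T_x$ from $T_{x'}$.

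\textbf{Step 2 (counting possible ball-tuples).} With $\lambda$-valent advice of size $p$, an advice assignment is described by (i) a choice of at most $\lambda$ binary strings of length $\le p$, giving at most $(2^{p+1})^\lambda = 2^{\lambda(p+1)}$ choices, and (ii) a colouring of the nodes by these strings, which on each of the $N$ pairwise-disjoint balls independently selects one of $\lambda^s$ patterns, for a total of $\lambda^{Ns}$ possibilities. The number of realisable ball-tuples is therefore at most $2^{\lambda(p+1)} \cdot \lambda^{Ns}$.

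\textbf{Step 3 (plugging in and solving).} Combining Steps 1 and 2 with $|\cT_X| = z^{Ny}$ gives
\[
\lambda(p+1) + N s \log\lambda \;\geq\; N y \log z,
\qquad\text{i.e.,}\qquad
p \;\geq\; \frac{N(y\log z - s\log\lambda)}{\lambda} - 1.
\]
Substituting $N = \Theta(n^{1-3\epsilon})$, $y = \Theta(D)$, $\log z = \Theta(\log n)$ (using $D \in o(\log n)$), and $s\log\lambda = O(\log n)$, the right-hand side is $\Omega(n^{1-3\epsilon} D \log n)$ once $D$ exceeds a constant $D_0$ depending on $\alpha,\epsilon,\lambda$; since $D\in\omega(1)$, this holds for all sufficiently large $n$, yielding $p = \Omega(n^{1-3\epsilon})$. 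Given the lemma's fixed $\delta<1$, choosing $\epsilon < (1-\delta)/3$ at the outset (in the very first paragraph of Section~\ref{small}) gives $1-3\epsilon > \delta$, hence $p = \Omega(n^\delta)$.

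\textbf{Main obstacle.} The delicate part is Step 1, namely handling the algorithm's freedom to elect \emph{different} leaders in different trees and, in particular, ruling out that it could ``hide'' a modification $x_i^j(k)$ by placing the leader inside $Q_i^j$ below $v_i^j(\tau+k)$. The ``two halves'' architecture of the construction (with $x$-modifications on $j \le k_2/2$ and $w$-modifications on $j > k_2/2$) is specifically designed for this: whichever subtree the leader occupies, the opposite half contains leaves whose leader-paths must cross the modified region and hence expose the port change in their labeled balls. The precise accounting that turns this exposure into genuine injectivity of the map $x \mapsto (\text{ball tuple})$ is the technical crux on which the whole lower bound rests.
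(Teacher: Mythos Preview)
Your proposal follows essentially the same route as the paper: a pigeonhole comparison of $|\cT_X|=z^{Ny}$ against the number of possible tuples of labeled balls at the $N=k_1k_2/2$ first-half leaves, with the ball size $s=\tau+1+z'$ and the bound $2^{\lambda(p+1)}\lambda^{Ns}$ exactly matching the paper's count. The asymptotic substitution in Step~3 also agrees with the paper's computation.

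The one point where your exposition misfires is the role of $\cT_W$. You write that ``the companion half $\cT_W$ \dots\ ensures that at least one leaf has a leader-path that crosses the modified edge in the upward direction,'' but $\cT_W$ cannot ensure anything about a comparison of two trees $T_x,T_{x'}\in\cT_X$: it is a \emph{separate} family. The paper handles the leader-location issue by a straight symmetry WLOG: assume the leader lies in some $Q_i^j$ with $j>k_2/2$. Then every first-half leaf's path to the leader passes through $r$ and therefore traverses all modified nodes in that leaf's own branch in the upward direction; this is why two trees in $\cT_X$ with the same first-half ball tuple yield a contradiction at the leaf $v_{i_0}^{j_0}(0)$ where $x$ and $x'$ first differ. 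The role of $\cT_W$ is solely to justify the WLOG: if the leader were in the first half, run the identical argument on $\cT_W$ with the second-half leaves. So your ``injectivity'' framing is a valid contrapositive, but the correct reason it holds (under the WLOG) is the leader sitting on the far side of the modified region, not the existence of $\cT_W$.

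One small discrepancy: the paper fixes $\epsilon=(1-\delta)/3$ exactly (so $1-3\epsilon=\delta$), rather than choosing $\epsilon<(1-\delta)/3$.
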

\begin{proof}
We prove the lemma by contradiction. It is enough to prove the lemma for sufficiently large $n'$, and for $1/4 <\delta<1$.
We assume that $D$ is even. The proof for odd $D$ is similar. Fix $\epsilon =\frac{1-\delta}{3}$. It is enough to prove the lemma for $D \ge \frac{2(1-\epsilon)\log \lambda}{\epsilon(1-2\alpha)}$.
Consider an algorithm ELECT that solves election in $\tau=\lfloor \alpha D \rfloor$ rounds with advice of size $p <\frac{\frac{1-\delta}{3} n'^\delta}{\lambda}-1=\frac{\epsilon n'^{1-3\epsilon}}{\lambda}-1$.

Consider the execution of algorithm ELECT for the trees in $\cT$.
Algorithm ELECT chooses the leader either in some $Q_i^j$ for $1 \le i\le k_1$, $1\le j \le \frac{k_2}{2}$ or in some $Q_i^j$, for
$1 \le i\le k_1$, $\frac{k_2}{2}+1\le j \le k_2$. Without loss of generality suppose that the leader is chosen in some $Q_i^j$ for $\frac{k_2}{2}+1\le j \le k$. Therefore, the distance from the leader to the node $v_i^j(0)$ for $1 \le i\le k_1$, $1 \le j \le \frac{k_2}{2}$ is at least $\frac{D}{2}$.

Now, $z=\lceil \frac{2(n'-n'^{1-\epsilon})}{n'^{1-3\epsilon}(1-2\alpha)D}\rceil \ge  \frac{2n'^{3\epsilon}}{(1-2\alpha )D}- \frac{2n'^{2\epsilon}}{(1-2\alpha) D} > \frac{n'^{3\epsilon}}{(1-2\alpha) D}$.

Therefore, $\log z > 3 \epsilon \log n' -\log\left((1-2\alpha)D\right)> 2\epsilon \log n'$.

We have
$$|\cT_X|= z^{\frac{k_1k_2y}{2}}= z^{(\frac{D}{2}-\tau-1)\frac{k_1k_2}{2}}= z^{(\frac{D}{2}-\lfloor \alpha D \rfloor-1)\frac{k_1k_2}{2} } \ge$$
$$ z^{(\frac{D}{2}- \alpha D-1)\frac{k_1k_2}{2} }\geq 2^{(\frac{D}{4}(1-2\alpha)-\frac{1}{2}) n'^{1-3\epsilon} \log z} > 2^{\frac{\epsilon D}{2}(1-2\alpha) n'^{1-3\epsilon}  \log n'}> 2^{(1-\epsilon)n'^{1-3\epsilon}\log n' \log \lambda}.$$

Let $\cB(S)=(\cB (v_i^j(0),\tau): i=1,2,\cdots,k_1, j=1,2,\cdots,\frac{k_2}{2})$ be the ordered collection of all labeled balls $\cB (v_i^j(0),\tau)$, for $i=1,2,\cdots,k_1$, $j=1,2,\cdots,\frac{k_2}{2}$, in the tree $S$ from the class $\cT _X$.
With the size of advice at most $p$, there are at most $2^{p+1}$ possible advice strings. Hence there are at most ${2^{p+1} \choose {\lambda}} \leq 2^{(p+1)\lambda} $ choices of $\lambda$ such strings, and thus
there are at most $2^{(p+1)\lambda} {\lambda^{(\tau+1+z')\frac{k_1k_2}{2}}}$ possible sequences $\cB(S)$ because the size of the ball $\cB (v_i^j(0),\tau)$ is
$\tau+1 +z'$.

Now,
\begin{eqnarray*}
2^{(p+1)\lambda} \lambda^{(\tau+1+z') \frac{k_1k_2}{2}} &=& 2^{(p+1)\lambda} \lambda^{(\lfloor \alpha D \rfloor +1+z') \frac{k_1k_2}{2}} \nonumber \\
&\le &  2^{\left(\epsilon n'^{1-3\epsilon}+(\alpha D +1+ (1-3\epsilon) \log n') (\frac{n'^{1-3\epsilon}+n'^{1-4\epsilon}+2n'^{\epsilon}+2}{2})\log \lambda\right)} \nonumber \\
& < & 2^{ \left(\epsilon n'^{1-3\epsilon}+(1-2\epsilon) n'^{1-3\epsilon} \log n' \log \lambda)\right)} \nonumber \\
& < & 2^{(1-\epsilon)n'^{1-3\epsilon}\log n' \log \lambda} \nonumber \\
& < &  |\cT_X|
\end{eqnarray*}
 Hence, there exist at least two trees $T_1, T_2 \in \cT_X$ such that
 $\cB(T_1)=\cB(T_2)$. Therefore the nodes $v_i^j(0)$ in $T_1$ and $v_i^j(0)$ in $T_2$, for $i=1,2,\cdots,k_1$, $i=1,2,\cdots,\frac{k_2}{2}$, must output the same sequence of port numbers to give the path to the leader. According to the construction of the trees in $\cT_X$, there exists a node $v_i^j(0)$, for some $1 \leq i \leq k_1$ and $1 \le j\le \frac{k_2}{2}$, such that the path to the leader from  $v_i^j(0)$ in $T_1$ and the path to the leader from  $v_i^j(0)$ in $T_2$ correspond to different sequences of port numbers. This is a contradiction. Therefore, the size of the advice is $\Omega(n^{\delta})$.
\end{proof}

The next lemma shows that the $\lambda$-election index of trees in $\cT$ does not exceed $\tau$.

\begin{lemma}\label{lem:xi}
For any tree $T_1 \in \cT$, $\xi_{\lambda}(T_1) \le \tau$.
\end{lemma}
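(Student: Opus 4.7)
The plan is to choose the central node $r$ as the leader and to exhibit a $2$-valent (hence $\lambda$-valent) advice that lets every node $v$ of $T_1$ output its path to $r$ within time $\tau$. The structural fact driving the construction is that every perturbation defining $T_1\in\cT$ exchanges port $0$ with a port $x_i^j(k)$ or $w_i^j(k)\neq 1$ at a node $v_i^j(\tau+k)$, so the upward port (towards $r$) at a path node $v_i^j(k)$ equals $0$ when $k\le\tau$ and equals the perturbation value when $k\ge\tau+1$, while the downward port is always $1$. Hence the port sequence from any $v$ to $r$ is completely determined by the triple (position of $v$, local type of $v$, perturbations along the $(i,j)$-subtree that contains $v$).

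I will use two advice strings $a$ and $b$. The string $a$ is a long binary string encoding a complete combinatorial description of $T_1$ — the parameters $k_1,k_2,z,z',D$ and all perturbations $x_i^j(k)$, $w_i^j(k)$. The string $b$ is a distinct dummy symbol used only to form a binary pattern. The color assignment on $T_1$ is designed so that, for every $v$, the binary pattern on a canonical subset of $\cB(v,\tau)$ — specifically, the $z'$ dotted leaves at the dotted layer $v_i^j(q(\tau-2)+1)$ closest to $v$, together with the $O(\tau)$ path nodes of $P_i^j$ that lie between $v$ and that layer — spells out the indices $(i,j,q)$ identifying the subtree $Q_i^j$ and the layer. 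Since $q$ ranges over $\Theta(D/\tau)=O(1)$ values and $\log(k_1k_2)=(1-3\epsilon)\log n'+O(1)$, only $(1-3\epsilon)\log n'+O(1)$ bits need be encoded, whereas the canonical subset contains $z'+\Theta(\tau)\ge(1-3\epsilon)\log n'+\Omega(\tau)$ positions, each carrying one bit.

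Each non-leader node $v$ then decodes as follows. From its own degree $v$ first determines its local type: attached leaf, degree-$1$ path endpoint $v_i^j(0)$, or interior path node. From the structure of $\cB(v,\tau)$ it locates the nearest dotted layer, reads the triple $(i,j,q)$ off the canonical pattern, and combines $q$ with the depth of that layer inside the ball to recover its own absolute depth $k$. Any node in $\cB(v,\tau)$ carrying color $a$ — and by construction every ball contains such a node — supplies $v$ with the description string $a$, from which $v$ retrieves the perturbation sequence of its subtree. Plugging the position and the perturbations into the formula of paragraph one yields the unique port sequence from $v$ to $r$, which $v$ outputs; the distinguished node of degree $k_1k_2$, namely $r$, recognizes itself as leader and outputs the empty path.

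The main hurdle will be to design the canonical coloring consistently across overlapping balls and to handle the boundary cases — in particular the leaves $v_i^j(0)$, whose ball reaches only the first dotted layer at $v_i^j(\tau-1)$ and must still encode $(i,j,1)$ within $z'+\tau$ pattern bits, and the nodes near $r$, whose balls span portions of several subtrees $Q_{i'}^{j'}$ and therefore impose pattern constraints at the same dotted leaves from more than one subtree. Both difficulties reduce to routine bit-counting checks once a concrete canonical subset and encoding scheme are fixed, which then yields $\xi_\lambda(T_1)\le\tau$.
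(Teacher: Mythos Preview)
Your overall strategy --- give every node a complete description of $T_1$ inside the advice string and then have each node locate itself in that description from what it sees in $\cB(v,\tau)$ --- is exactly the paper's strategy: the paper's advice is the pair $(f(v),T_1(f))$, where $T_1(f)$ is the full colored map. The difference lies in how self-location is achieved, and here your sketch leaves real work undone.

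The paper does \emph{not} try to squeeze all of $(i,j,q)$ into a binary pattern. It reads $i$ off for free from the degree of the black node $v_i^j(q(\tau-2)+2)$ (degree $i+1$ when $q=1$, degree $z+i$ otherwise), reads the exact depth off the $\lfloor\log D\rfloor$ grey leaves at $v_i^j(q(\tau-2)+3)$ (their colors spell the depth in binary), and uses the $z'$ dotted leaves only to encode $j$, for which $z'=\lfloor(1-3\epsilon)\log n'\rfloor>\log k_2\approx(1-4\epsilon)\log n'$ gives comfortable slack. You instead propose to encode all of $(i,j,q)$ on the dotted leaves together with ``the $O(\tau)$ path nodes of $P_i^j$ that lie between $v$ and that layer''. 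That set is $v$-dependent: when $v$ sits at the dotted layer itself it is empty, so your canonical subset has only $z'$ positions, whereas $\log(k_1k_2)\approx(1-3\epsilon)\log n'$ already consumes essentially all of $z'$ before you add the bits for $q$. More seriously, the path nodes between consecutive dotted layers are shared by the canonical subsets of many different $v$'s at varying distances, so a single fixed coloring cannot make ``the pattern between $v$ and the nearest layer'' spell out the correct $q$ for every such $v$ simultaneously; this is not a routine bit-count but a consistency obstruction. (Your other worry, that balls near $r$ ``impose pattern constraints at the same dotted leaves from more than one subtree'', is a non-issue: dotted leaves of distinct $Q_i^j$ are disjoint, and any node within distance $\tau$ of $r$ sees $r$ directly and needs no decoding at all.)

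The fix is to do what the paper does: exploit the black and grey attachments that the construction already provides. Once you let degree identify $i$ and let the grey leaves carry the depth, only $j$ remains to be encoded on the dotted leaves, the canonical subset becomes $v$-independent, and every case goes through cleanly.
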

\begin{proof}
We assume that $D$ is even. The proof for odd $D$ is similar.
Let $\tau= \lfloor \alpha D\rfloor$. Since $D\in \omega(1)$, we may assume that $\tau>2$.
In order to prove the lemma, we present a leader election algorithm  working in time $\tau$, if some $\lambda$-valent advice of sufficient size is available to the nodes.
For any tree in $\cT$, the node $r$ is chosen as the leader.

At a high level,  we assign to each subtree $Q_i^j$ of $S_i$ a different string of length $\tau -2$ with at most  $\lambda$ terms, called  {\em colors}. The pieces of advice at the $z'$ dotted nodes attached to each node $v_i^j(q(\tau-2)+1)$, for $q\ge 1$, form such a string.
Since $k_2=\lfloor n'^{1-4\epsilon}\rfloor$ and $z'=\lfloor(1-3\epsilon)\log n'\rfloor$, therefore, such a one-to-one assignment is possible.
The degree of the node $v_i^j(q(\tau-2)+2)$ is $z+i$ for $q>1$, and is $i+1$ for $q=1$.  In time $\tau$, every node can see at least one node $v_i^j(q(\tau-2)+2)$, for $q\ge 1$. It knows $z$, hence it can learn $i$, and thus can identify the subtree $S_i$ to which it belongs.
The grey nodes attached to the nodes $v_i^j(k)$ of $P_i^j$ are used to identify the distance from $v_i^j(k)$ to $r$.
Knowing $i$,  every node identifies the subtree $Q_i^j$ to which it belongs, by computing the unique string associated with $Q_i^j$ in the subtree $S_i$, in time $\tau$. The node identifies its position in $Q_i^j$ by either seeing the endpoint $v_i^j(0)$ or by seeing two nodes of degree $
z+\lfloor \log D \rfloor$, together with their neighbors.

We now describe formally the advice assignment to the nodes of a tree $T_1 \in \cT$.
Let $\cC=\{c_1,c_2,\cdots, c_\lambda\}$ be a set of $\lambda$ colors. Let $m=\lambda^{\tau-2}$ and let $S=\{s_1,s_2,\cdots, s_m\}$ be the set of sequences of colors of length $\tau-2$. Let $s_i(j)$ be the $j-$th term of $s_i$. Let $V(T_1)$ be the set of nodes of $T_1$. Define $f: V(T_1) \rightarrow \cC$ as follows. Assign $f(r)=c_1$. We divide all the nodes in $Q_i^j$, for $i=1,2,\cdots,k_1$ and $j=1,2,\cdots,k_2$, into the following four types, and assign colors to them as follows.

\noindent
{\bf Type 1:} All nodes $v_i^j(k) \in Q_i^j$, $0 \le k \le \frac{D}{2}-1$ are  of this type. Assign   $f(v_i^j(k))=c_1$ for $0 \le k \le \frac{D}{2}-1$.

\noindent
{\bf Type 2:} All grey nodes are of this type. Let $w_\ell$, for $\ell=1,\dots \lfloor \log D \rfloor$,  be the
grey node attached to its only neighbor using port $z+\ell$ at this neighbor. Assign $f(w_{\ell})\in\{c_1,c_2\}$, so that the sequence $(f(w_1)f(w_2) \cdots f(w_{\lfloor\log D\rfloor})$ be the binary representation of the distance from $r$ to the only neighbor of $w_{\ell}$, with $c_1$ standing for 0 and $c_2$ standing for 1.

\noindent
{\bf Type 3:} All dotted nodes are of this type. Let
$u_\ell$, for $\ell=1,\dots z'$,  be the dotted
node in $Q_i^j$ attached to its only neighbor using port $z+\ell$ at this neighbor. Assign $f(u_{\ell})\in\{c_1,c_2,\cdots,c_\lambda\}$, so that  $s_j=(f(u_1)f(u_2) \cdots f(u_{z'}))$.

\noindent
{\bf Type 4:} All white and black nodes are of this type.  Assign $f(v)=c_1$ for all these nodes.

For all other nodes $v\in Q_i^j$, assign $f(v)=c_1$.

 Let $T_1(f)$ be the  node-colored map of $T_1$ corresponding to the color assignment $f$.
The advice provided to each node $v$ in $T_1$ is $(f(v),T_1(f))$.
We show that each node $v$ in $T_1$ can identify itself in $T_1(f)$ in time $\tau$, using advice $(f(v),T_1(f))$. This is enough to output the sequence of port numbers corresponding to the path to the leader.
Consider a node $v$ in $T_1$.
\begin{itemize}

\item [Case 1:]  $v=r$.

The node $r$ can identify itself in time 1 (without using any advice) as the unique node of degree larger than 2 all of whose neighbors have degree larger than 2.

\item [Case 2:]  $v \in Q_i^j$ is of Type 1.

If $v=v_i^j(0)$, then it can identify itself as one of the endpoints of some $P_i^j$, for $1\le i \le k_1$ and $1\le j \le k_2$, as every node $v_i^j(0)$ can see a line of length $\tau-2$ with itself as one end point.
     The labeled ball $\cB(v^j_i(0),\tau)$ is a labeled subtree of diameter $\tau$ with $\tau+z'+1$ nodes.
     $v_i^j(0)$ identifies the integer $i$ by looking at the degree of the only non-leaf at distance $\tau$. For example, if the degree of this node is 5, then the node computes $i$ as 4, because according to the construction of the tree $T$, the degree of the node $v_i^j(\tau)$ is $i+1$. Therefore, the node $v_i^j(0)$ identifies the subtree $S_i$ to which it belongs.
     The node $v_i^j(0)$ computes the string of length $\tau-2$ by looking at the colors of the dotted nodes attached to the only node of degree $z'+2$ in $\cB(v_i^j(0),\tau)$.
    Since this string of length $\tau-2$ uniquely determines $Q_i^j$ in $S_i$, node $v_i^j(0)$ identifies itself as a node in $Q_i^j$.

    If $v=v_i^j(k)$, for $ 0 <k \le \tau$, then it can identify itself as a node in some $P_i^j$, $1\le i \le k_1$ and $1\le j \le k_2$. The node $v_i^j(0)$ is  in $\cB(v,\tau)$,  and hence $v$ learns the distance to $v_i^j(0)$ by seeing $\cB(v,\tau)$. Then the node $v$ computes the integer $i$ and the string of colors  similarly as stated for $v_i^j(0)$, identifies the subtree $Q_i^j$ in $S_i$ to which it belongs, and identifies its position in $Q_i^j$.

    If $v=v_i^j(k)$, for $ \tau+1 \le k \le \frac{D}{2}-1$, then the node $v$ can see five kinds of nodes in $\cB(v,\tau)$:  nodes of degree one, nodes of degree $z+1$, at least two  nodes of degree $z+\lfloor \log D \rfloor$, at least one  node of degree $z+z'$ and at least one node of degree $z+i$. Since the colored map $T_1(f)$ is a part of the advice given to every node, the nodes can distinguish between the above kinds of nodes. The node identifies $i$ and thus $S_i$ to which it belongs, since it knows $z$.
 Then it computes the string $s$ of length $\tau-2$ by collecting the colors from the leaves attached to a node with degree $z+z'$. Hence, it identifies $j$, and thus identifies the subtree $Q_i^j$ in the class $S_i$ to which it belongs.

    The node $v$ sees at least two nodes $u_1,u_2$ in $\cB(v,\tau)$ with degree $z+\lfloor \log D \rfloor$, and all the neighbors of $u_1$ and $u_2$ are also in $\cB(v,\tau)$.
    The node $v$ first computes the binary strings $s'$ and $s''$ from the advice at the leaves attached to $u_1$ and $u_2$, respectively, by ports $z+1,z+2, \cdots z+\lfloor \log D \rfloor$. Without loss of generality, let  $s'$ represent the integer which is larger than the integer represented by $s''$.
 Recall that these integers are the distances from the respective nodes to $r$. Hence the node $v$ can identify the direction towards $r$ which is from $u_1$ to $u_2$ along $P_i^j$,  and it can identify its position in $Q_i^j$.

 \item [Case 3:] $v \in Q^i$ is of Type 2 or of Type 3 or of  Type 4.

 The node $v$ follows similar steps as in {Case 2}, where $v=v_i^j(k)$, for $ \tau+1 \le k \le \frac{D}{2}-1$, to identify the subtree $Q_i^j$ to which it belongs, and
 to identify its position in $Q_i^j$.
\end{itemize}
\end{proof}

Lemmas \ref{lem:lb} and \ref{lem:xi} imply the following result.

\begin{theorem}
Let $\alpha < \frac{1}{2}$ and $\delta<1$ be positive real constants.  Let $\lambda>1$ be an integer constant. Let $D$ and $n'$ be positive integers such that $D \in \omega(1)$ and $D \in o(\log n')$.
Consider any algorithm ELECT which solves election in $\tau=\lfloor \alpha D \rfloor$ rounds with $\lambda$-valent advice, for every  tree with $\lambda$-election index at most $\tau$.
There exists
an $n$-node tree $T_1$, where $n \in \Theta(n')$,   with diameter $D$ and  $\xi _{\lambda}(T_1) \leq \tau$,  for which  algorithm ELECT with $\lambda$-valent advice, working in time $\tau$, requires advice of size $\Omega(n^{\delta})$.
\end{theorem}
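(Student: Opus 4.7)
The plan is simply to assemble the two preceding lemmas into a single statement about the class $\cT$ of trees constructed in Section \ref{small}. First I would recall that this class was built from the parameters $k_1=\lceil n'^{\epsilon}\rceil$, $k_2=2\lceil n'^{1-4\epsilon}/2\rceil$, $z=\lceil 2(n'-n'^{1-\epsilon})/(n'^{1-3\epsilon}(1-2\alpha)D)\rceil$ and $z'=\lfloor(1-3\epsilon)\log n'\rfloor$, with $\epsilon$ fixed as a function of $\delta$ (exactly as in the proof of Lemma \ref{lem:lb}, where the choice $\epsilon=(1-\delta)/3$ is made). The size computation at the beginning of Section \ref{small} gives $n\in \Theta(n')$ for every tree in $\cT$, and the construction fixes the diameter at exactly $D$ (even or odd is handled by the last paragraph of Section \ref{sec:Constant}).

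Next I would apply Lemma \ref{lem:xi}, which guarantees that for every $T_1\in \cT$ we have $\xi_{\lambda}(T_1)\le \tau$. In particular, every tree in $\cT$ belongs to the domain on which the algorithm ELECT is assumed to succeed. Then I would apply Lemma \ref{lem:lb} to the algorithm ELECT: it produces a specific tree $T_1\in \cT$ on which ELECT, running in time $\tau=\lfloor \alpha D\rfloor$, necessarily uses $\lambda$-valent advice of size $\Omega(n^{\delta})$. Combining these two ingredients gives the claimed tree $T_1$, with $n\in\Theta(n')$ nodes, diameter $D$, $\xi_{\lambda}(T_1)\le\tau$, and the required advice lower bound.

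There is essentially no obstacle here beyond bookkeeping: the two lemmas together with the parameter choice in Section \ref{small} already cover all the quantitative work. The only thing I would be careful about is to state explicitly that the $T_1$ produced by Lemma \ref{lem:lb} is the same tree whose $\lambda$-election index is bounded by Lemma \ref{lem:xi} (both lemmas refer to the same class $\cT$), and to note that the parameters were chosen so that $n\in\Theta(n')$ and the diameter equals $D$, so the properties required by the theorem statement are all inherited automatically from the construction.
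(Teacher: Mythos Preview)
Your proposal is correct and matches the paper's approach exactly: the theorem is stated as an immediate consequence of Lemmas \ref{lem:lb} and \ref{lem:xi}, with the paper simply writing ``Lemmas \ref{lem:lb} and \ref{lem:xi} imply the following result.'' Your added bookkeeping (recalling the parameter choices, $n\in\Theta(n')$, diameter $D$, and that both lemmas refer to the same class $\cT$) is accurate and just makes explicit what the paper leaves implicit.
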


\subsection{Medium diameter: $D \in \omega(\log n)$ and $D \in o(n)$}

 Let $D$ and $n'$ be positive integers such that $D \in \omega(\log n')$ and $D \in o(n')$.
Let $b$ be a positive real constant, such that $\frac{1}{b} > 1+ 4(1-2\alpha)\lambda^{\frac{4(1+4\alpha)}{1-2\alpha}}$ and $\frac{bn'}{D}>1$.
In order to prove our lower bound on the size of advice,
we now use the construction of the class $\cT$ of trees in Section \ref{sec:Constant} for $k_1=1$, $k_2=k=2\lceil\frac{bn'}{D}\rceil$, $z=\lceil\frac{n'-bn'}{k(\frac{D}{2}-\lfloor \alpha D \rfloor-1)}\rceil$, and $z'=0$.

The total number of nodes in a tree from $ \cT$ is
\begin{eqnarray*}
n & \leq & k \left( \tau+1+\left\lceil\frac{n'-bn'}{k(\frac{D}{2}-\tau-1)}\right\rceil\left(\frac{D}{2}-\tau-1\right)+ \lfloor \log D \rfloor \gamma \right)
 \nonumber \\
\end{eqnarray*}
Since $\tau=\lfloor \alpha D \rfloor$, $D\in o( n')$ and $\gamma$ is constant, we have $n \in \Theta(n')$.

Since $k_1=1$, in what follows, we omit the running index ranging from 1 to $ k_1$. In particular, $P_i^j$, $Q_i^j$ and $v_i^j(l)$ are replaced, respectively,
by $P^j$, $Q^j$ and $v^j(l)$.

As in Section \ref{small}, we explain the role of each node in $T$ in this case. The role of the white nodes is the same as before, i.e.,
these nodes are added, so that port number variation at nodes $v^j(l)$, for $j=1,2,\cdots,k$ and $l=\tau+1,\tau+2,\cdots,\frac{D}{2}-1$, can make the class $\cT$ sufficiently large. The nodes on $P^j$ are used
to assign a different string of colors to each $Q^j$. Since $D \in \omega(\log n)$ in our present case, there are enough such nodes, as opposed to the situation of small $D$, when dotted nodes had to be added for this purpose.
The $\lfloor \log D \rfloor$ grey nodes which are attached to the  nodes $v^j(l)$, for $l= q(\tau-2)+3$, $q\ge 1$, are there to help every node to identify its distance from the node $r$,
i.e., its position in $Q^j$.

As before, the proof of the lower bound is split into two lemmas, concerning, respectively, the size of advice needed for election in trees from $\cT$, and the $\lambda$-election index of these trees.

\begin{lemma}\label{lem1}
Let $\alpha < \frac{1}{2}$ be a positive real constant and $\lambda>1$ an integer constant. Let $D$ and $n'$ be positive integers such that $D \in \omega(\log n')$ and $D \in o(n')$.
Consider any algorithm ELECT which solves election in $\tau=\lfloor \alpha D \rfloor$ rounds with $\lambda$-valent advice, for every  tree with $\lambda$-election index at most $\tau$.
There exists
a tree $T \in \cT$  for which algorithm ELECT with $\lambda$-valent advice, working in time $\tau$, requires advice of size $\Omega(n)$.
\end{lemma}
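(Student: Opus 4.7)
The plan is to adapt the contradiction-and-pigeonhole template of Lemma~\ref{lem:lb} to the medium-diameter regime, exploiting the fact that here the counting slack becomes genuinely linear in $n$ rather than subpolynomial. Suppose for contradiction that algorithm ELECT succeeds in time $\tau=\lfloor \alpha D\rfloor$ on every tree in $\cT$ using $\lambda$-valent advice of size $p<cn$, where $c$ is a positive constant depending only on $\alpha$, $\lambda$ and $b$, to be pinned down at the end. For any tree $T\in\cT$ the elected leader lies in some subtree $Q^j$; without loss of generality we may assume $j\in\{k/2+1,\ldots,k\}$, so every leaf $v^j(0)$ with $j\le k/2$ is at distance at least $D/2>\tau$ from the leader, and its output path is therefore determined entirely by the labeled ball $\cB(v^j(0),\tau)$.

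Next I would count the number of possible ordered tuples $(\cB(v^j(0),\tau))_{j=1}^{k/2}$ as $T$ ranges over $\cT_X$. Because $z'=0$, $k_1=1$, and the first grey leaves sit at depth $\tau+1$, the unlabeled ball reduces to the canonical path $v^j(0),\ldots,v^j(\tau)$ with all port labels fixed by the construction, and the degree of $v^j(\tau)$ seen on the boundary is always $2$. Thus each ball is determined by the $(\tau+1)$-tuple of advice colors along the path. With at most $\lambda$ advice strings used in the whole tree, chosen from the $2^{p+1}$ binary strings of length at most $p$, the number of possible tuples is at most $2^{(p+1)\lambda}\cdot \lambda^{(\tau+1)k/2}$. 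On the other hand $|\cT_X|=z^{ky/2}$ with $y=D/2-\lfloor\alpha D\rfloor-1$, and the parameter choices give $ky/2\in\Theta(n)$ while $\log z$ is bounded below by the positive constant $\log\bigl((1-b)/(b(1-2\alpha))\bigr)$, so $\log|\cT_X|$ is linear in $n$ with leading coefficient approximately $\tfrac{b(1-2\alpha)}{2}\log\tfrac{1-b}{b(1-2\alpha)}$.

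The main obstacle is to show that this linear-in-$n$ bound on $\log|\cT_X|$ strictly dominates the linear-in-$n$ ``free'' color contribution $(\tau+1)(k/2)\log\lambda\approx \alpha b n\log\lambda$ by a positive linear gap large enough to absorb the additive term $(p+1)\lambda$. This is precisely where the assumed inequality $1/b>1+4(1-2\alpha)\lambda^{4(1+4\alpha)/(1-2\alpha)}$ enters: rearranging, taking logarithms, and multiplying by $(1-2\alpha)/2$ yields $\tfrac{1-2\alpha}{2}\log\tfrac{1-b}{b(1-2\alpha)}>(1-2\alpha)+2(1+4\alpha)\log\lambda$, which comfortably exceeds $\alpha\log\lambda$, leaving a strictly positive linear surplus. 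Hence for $c$ a sufficiently small positive constant, $|\cT_X|$ strictly exceeds the number of possible tuples, so pigeonhole produces two distinct trees $T_1,T_2\in\cT_X$ sharing the same ordered collection of balls. Then every $v^j(0)$ with $j\le k/2$ must output identical paths in $T_1$ and $T_2$, but by construction these trees differ in the port arrangement at some node $v^j(\tau+k)$ lying on the path from $v^j(0)$ to the leader, so the port sequence to the leader cannot coincide in both trees, a contradiction. This forces $p=\Omega(n)$, as claimed.
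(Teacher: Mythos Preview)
Your proposal is correct and follows essentially the same route as the paper's proof: the same WLOG reduction to leaves $v^j(0)$ with $j\le k/2$, the same count $2^{(p+1)\lambda}\lambda^{(\tau+1)k/2}$ for the ordered tuple of labeled balls, the same lower bound on $|\cT_X|=z^{ky/2}$, and the same use of the hypothesis on $b$ to produce a positive linear gap that pigeonhole converts into a contradiction. Your observation that the ball at $v^j(0)$ reduces to a bare path of $\tau+1$ nodes (since $k_1=1$, $z'=0$, and the first attached leaves appear only at depth $\tau+1$) is exactly what justifies the $\lambda^{(\tau+1)k/2}$ factor, and the paper relies on the same structural fact implicitly. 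Your treatment is somewhat looser with constants---you write $\log z\ge\log\frac{1-b}{b(1-2\alpha)}$ whereas the paper's explicit chain gives $z\ge\frac{1-b}{2b(1-2\alpha)}$, and you carry the comparison at the level of $\Theta(n)$ leading coefficients rather than explicit inequalities---but the slack built into the hypothesis on $b$ easily absorbs these discrepancies, and the argument goes through unchanged.
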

\begin{proof}
We prove the lemma by contradiction.  It is enough to prove the lemma for sufficiently large $n'$. We assume that $D$ is even. The proof for odd $D$ is similar.
Consider an algorithm ELECT that solves election in $\tau=\lfloor \alpha D \rfloor$ rounds with advice of size $p <\frac{ bn'}{\lambda}-1$. Algorithm ELECT chooses the leader either in some $Q^j$ for $1\le j \le \frac{k}{2}$, or in some $Q^j$  for
$\frac{k}{2}+1\le j \le k$. Without loss of generality suppose that the leader is chosen in some $Q^j$ for $\frac{k}{2}+1\le j \le k$. Therefore, the distance from the leader to the node $v^j(0)$ for $1 \le j \le \frac{k}{2}$ is at least~$\frac{D}{2}$.

Now, $$z=\lceil\frac{n'-bn'}{k(\frac{D}{2}-\lfloor \alpha D \rfloor-1)}\rceil \ge \frac{n'-bn'}{k(\frac{D}{2}-\lfloor \alpha D \rfloor-1)}$$
$$= \frac{n'-bn'}{2\lceil \frac{bn'}{D}\rceil(\frac{D}{2}-\lfloor \alpha D \rfloor-1)}\ge \frac{n'-bn'}{ 2(\frac{bn'}{D}+1)(\frac{D}{2}-\lfloor \alpha D \rfloor-1)}$$
$$\ge \frac{n'-bn'}{ \frac{4bn'}{D}(\frac{D}{2}-\lfloor \alpha D \rfloor-1)} \ge \frac{n'-bn'}{ \frac{4bn'}{D}(\frac{D}{2}- \alpha D )}=\frac{1-b}{2b(1-2\alpha)}.$$
Also, $ky=k(\frac{D}{2}-\lfloor \alpha D \rfloor-1) = 2\lceil\frac{bn'}{D}\rceil(\frac{D}{2}-\lfloor \alpha D \rfloor)-k \ge 2\frac{bn'}{D}(\frac{D}{2}- \alpha D )-k\geq{(1-2\alpha)bn'-k}$.

 Therefore,
\begin{eqnarray*}
|\cT_X|=z^{\frac{ky}{2}} &> & \left(\frac{1-b}{2b(1-2\alpha)}\right)^{\frac{(1-2\alpha)bn'-k}{2}} \\
&  =& 2^{\left(\frac{(1-2\alpha)bn'-k}{2}\right)\log \left(\frac{1-b}{2b(1-2\alpha)}\right)} \\
&  >& 2^{\left({\frac{(1-2\alpha)bn'}{4}}\right)\log \left(\frac{1-b}{2b(1-2\alpha)}\right)}.\\
\end{eqnarray*}
The last inequality follows from  $k < \frac{(1-2\alpha)}{2}bn'$ (as $D \in \omega(\log n')$).

Let $\cB(S)=(\cB (v^1(0),t), \cB (v^2(0),t), \cdots, \cB (v^{\frac{k}{2}}(0),t))$ be the ordered collection of all labeled balls $\cB (v^i(0),t)$, for $i=1,2,\cdots,\frac{k}{2}$
for a tree $S$ from the class $\cT _X$.
With the size of advice at most $p$, there are at most $2^{p+1}$ possible advice strings. Hence there are at most ${2^{p+1} \choose {\lambda}} \leq 2^{(p+1)\lambda} $ choices of $\lambda$ such strings, and thus
there are at most $2^{(p+1)\lambda} {\lambda^{(\tau+1)\frac{k}{2}}}$ possible sequences $\cB(S)$.

Since $\frac{1}{b} > 1+ 4(1-2\alpha)\lambda^{\frac{4(1+4\alpha)}{1-2\alpha}}$, then $\frac{1-b}{4b(1-2\alpha)} > \lambda^{\frac{4(1+4\alpha)}{1-2\alpha}}$. Therefore, $\frac{1-2\alpha}{4}\log(\frac{1-b}{4b(1-2\alpha)}) > (1+4\alpha)\log \lambda$.

We have
$$2^{(p+1)\lambda} {\lambda^{(\tau+1) \frac{k}{2}}}< 2^{bn'} \lambda^{(\lfloor \alpha D \rfloor +1) \frac{k}{2}}\leq 2^{bn'} \lambda^{( \alpha D  +1) \frac{k}{2}}$$
$$<2^{bn'+(\alpha bn' +\frac{bn'}{D}+\alpha D+1)\log \lambda}< 2^{bn'+4\alpha bn' \log \lambda}\le 2^{(1+4\alpha) bn' \log \lambda}<2^{ \frac{bn'(1-2\alpha)}{4}\log(\frac{1-b}{2b(1-2\alpha)})}<|\cT_X|.$$ Hence, there exist at least two trees $T_1, T_2 \in \cT_X$ such that
 $\cB(T_1)=\cB(T_2)$. Therefore the nodes $v^j(0)$ in $T_1$ and $v^j(0)$ in $T_2$ for $j=1,2,\cdots,\frac{k}{2}$, must output the same sequence of port numbers to give the path to the leader. According to the construction of the trees in $\cT_X$, there exists a node $v^l(0)$, $1 \le l\le \frac{k}{2}$ such that the path to the leader from  $v^l(0)$ in $T_1$ and the path to the leader from  $v^l(0)$ in $T_2$ correspond to different sequences of port numbers. This is a contradiction. Therefore, the size of the advice is $\Omega(n)$.
 \end{proof}

\begin{lemma}\label{lem2}
For any tree $T_1 \in \cT$, $\xi_{\lambda}(T_1) \le \tau$.
\end{lemma}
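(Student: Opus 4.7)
The plan is to mirror the proof of Lemma~\ref{lem:xi}: we designate $r$ as the leader, define a colouring $f\colon V(T_1)\to\{c_1,\dots,c_\lambda\}$, and hand each node $v$ the advice $(f(v),T_1(f))$, which has valency at most $\lambda$ since only the first component varies and it takes at most $\lambda$ values. Given the coloured map $T_1(f)$, every node already knows the structure of $T_1$ and the location of $r$ in it, so the only remaining task is, for each $v$, to identify its own position in $T_1(f)$ from its labeled ball $\cB(v,\tau)$; once this is done, $v$ reads off and outputs the corresponding port sequence to $r$.

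The new difficulty compared to Lemma~\ref{lem:xi} is that in this section $k_1=1$ and $z'=0$, so neither the black leaves (which previously encoded the index $i$) nor the dotted leaves (which previously encoded the index $j$) are available. I exploit instead the fact that $D\in\omega(\log n)$ forces $\tau=\lfloor\alpha D\rfloor\in\omega(\log n)$, while $k=2\lceil bn'/D\rceil$ satisfies $\log_\lambda k\in O(\log n)$. Hence any block of $\lceil\log_\lambda k\rceil$ consecutive path nodes has ample room to carry the base-$\lambda$ expansion of the subtree index $j\in\{1,\dots,k\}$, and such a block fits inside the gap of $\tau-2$ path nodes separating two consecutive grey clusters. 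Concretely, I assign the default colour $c_1$ to $r$ and to every white leaf, and for each $j$ I colour the $\lceil\log_\lambda k\rceil$ path nodes $v^j(0),\dots,v^j(\lceil\log_\lambda k\rceil-1)$ near the endpoint, as well as the $\lceil\log_\lambda k\rceil$ path nodes immediately above each grey cluster $v^j(q(\tau-2)+3)$, so that their colours spell out $j$ in base $\lambda$; the $\lfloor\log D\rfloor$ grey leaves at $v^j(q(\tau-2)+3)$ are used to encode the cluster index $q$ in base $\lambda$, which is feasible since $\gamma=\lfloor D/(2(\tau-1))\rfloor$ is a constant and so only $\lceil\log_\lambda\gamma\rceil=O(1)$ grey leaves are needed.

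Identification then proceeds by case analysis on the position of $v$, in direct analogy with Lemma~\ref{lem:xi}. The node $r$ is identified as the unique node of degree larger than $2$ all of whose neighbours have degree larger than $2$. A path node $v^j(l)$ with $l\le\tau$ sees the endpoint $v^j(0)$ in $\cB(v,\tau)$ together with the initial $j$-encoding, decodes $j$ from the colours, and reads $l$ off as the distance to $v^j(0)$. A path node $v^j(l)$ with $l\ge D/2-\tau$ sees $r$ in $\cB(v,\tau)$ and retrieves $l$ from the distance $D/2-l$ and $j$ from the port used at $r$ on the edge leading back to its own subtree. A path node $v^j(l)$ with $\tau<l<D/2-\tau$ sees, within $\cB(v,\tau)$, at least two consecutive grey clusters (their spacing $\tau-2$ is strictly less than $\tau$); from the grey leaves it decodes two consecutive indices $q$ and $q+1$, which both fixes the direction toward $r$ (increasing $q$) and pins down its distance to a specific cluster, and from the path-node block immediately above that cluster it decodes $j$. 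An attached leaf is handled via its unique path-node parent (identified by the preceding rules) together with the port on the adjoining edge. Having located itself in $T_1(f)$, $v$ outputs the port sequence from its position to $r$, establishing $\xi_\lambda(T_1)\le\tau$.

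The main point to verify carefully is a capacity-and-visibility bookkeeping: the $\lceil\log_\lambda k\rceil$ path-node slots must fit both inside the gap of length $\tau-2$ between consecutive grey clusters and inside the initial segment of length $\tau+1$ near the endpoint, and every ball of radius $\tau$ must contain at least one of the landmarks (endpoint, root, or two consecutive grey clusters) required for decoding. Both checks reduce to the quantitative inequality $\lceil\log_\lambda k\rceil\in O(\log n)$ versus $\tau\in\omega(\log n)$, together with the elementary bound $\tau-2<\tau$, and therefore hold for all sufficiently large $n$.
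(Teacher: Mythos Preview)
Your proposal is correct and follows essentially the same strategy as the paper: encode the subtree index $j$ via colours placed on the path nodes of $P^j$ (feasible precisely because $\tau\in\omega(\log n)$ while $\log_\lambda k\in O(\log n)$), store positional information in the grey leaves, and hand every node the coloured map $T_1(f)$ so that self-location is all that remains. The paper's implementation differs only in detail---it writes a full length-$(\tau-2)$ string $s_j$ on the path nodes and repeats it with period $\tau-2$, so that the $\tau-2$ nodes between any two consecutive grey clusters spell out $s_j$, and it stores the exact distance to $r$ (rather than just the index $q$) in the grey leaves, which also makes your separate ``sees $r$'' case unnecessary.
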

\begin{proof}
The proof of this lemma is similar to the proof of Lemma \ref{lem:xi}. Since $k_1=1$, every node has to identify the subtree $Q^j$ to which it belongs, and its position in $Q^j$.
Since $D\in \omega(\log n)$, we may assume that $\tau>2$.
In order to prove the lemma, we present a leader election algorithm  working in time $\tau$, if some $\lambda$-valent advice of sufficient size is available to the nodes.
The node $r$ of a tree $T_1 \in \cT$ is chosen as the leader.
At a high level,  we assign to each subtree $Q^j$ a different string of length $\tau -2$ with at most  $\lambda$ colors, using the nodes in $P^j$. Since $D \in \omega(\log n)$ and $\tau=\lfloor \alpha D \rfloor$, then $k \le 2 ^{\tau}$. Therefore, such a one-to-one mapping is always possible. The grey nodes attached to the nodes $v^j(l)$ of $P^j$ are used to identify the distance from $v^j(l)$ to $r$. Every node identifies the subtree $Q^j$ to which it belongs, by computing the unique string associated with $Q^j$, in time $\tau$. The node identifies its position in $Q^j$ by either seeing the endpoint $v_j(0)$ or by seeing two nodes of degree $
z+\lfloor \log D \rfloor$, together with their grey neighbors.

As before, $m=\lambda^{\tau-2}$ and $S=\{s_1,s_2,\cdots, s_m\}$ is the set of sequences of colors of length $\tau-2$. Let $s_i(l)$ be the $l-$th term of $s_i$.
The main difference with respect to the proof of Lemma \ref{lem:xi} is the following. While in the previous proof we assigned colors $c_1,\dots,c_{\lambda}$ to dotted nodes in order to assign a different string of colors to each $Q_i^j$, we now assign the colors to the nodes on $P^j$ in the following way.
 Assign   $f(v^j(0))=f(v^j(1))=f(v^j(2))=c_1$; $f(v^j(l))= s_j\left((l-3) \mod (\tau-2)\right)$, for $ 2\le l \le \tau$. For all nodes $v^j(l) \in P^j$, $\tau+1 \le l \le \frac{D}{2}-1$, assign   $f(v^j(l))= s_j\left((l-3) \mod (\tau-2)\right)$, for $ \tau+1\le l \le \frac{D}{2}-1$.
The values of the function $f$ for all other nodes are identical as in the proof of Lemma \ref{lem:xi}.
As before,  let $T_1(f)$ be the  node-colored map of $T_1$ corresponding to the color assignment $f$.
The advice provided to each node $v$ in $T_1$ is $(f(v),T_1(f))$.

It remains to explain how a node $v$ identifies the sequence $s\in S$ corresponding to the subtree $Q^j$ to which it belongs, and its position in $Q^j$.
Node $v$ can either see the nodes $v^j(0),\dots, v^j(\tau) $ or it can see two nodes of degree $z+\lfloor \log D \rfloor$ on $P^j$, together with their grey neighbors.
In the first case, node $v$ finds the sequence $s$ of length $\tau-2$ by reading the colors  $f(v^j(3)), \dots , f(v^j(\tau))$ in this order. Since it can see the node $v^j(0)$, it can also identify its position on the map.
In the second case, node $v$ decides which of the two nodes  of degree $z+\lfloor \log D \rfloor$ is closer to $r$ by reading advice in their grey neighbors, similarly as in the proof of Lemma \ref{lem:xi}. It then finds the string $s$ by reading the colors assigned to nodes between these two nodes, from the farther to the closer.
It identifies its position with respect to the farther of them.
\end{proof}

Lemmas \ref{lem1} and \ref{lem2} imply the following theorem.

\begin{theorem}
Let $\alpha < \frac{1}{2}$ be a positive real constant and $\lambda>1$ an integer constant. Let $D$ and $n'$ be positive integers such that $D \in \omega(\log n')$ and $D \in o(n')$.
Consider any algorithm ELECT which solves election in $\tau=\lfloor \alpha D \rfloor$ rounds with $\lambda$-valent advice, for every  tree with $\lambda$-election index at most $\tau$.
There exists
an $n$-node tree $T_1$, where $n \in \Theta(n')$,   with diameter $D$ and  $\xi _{\lambda}(T_1) \leq \tau$,  for which  algorithm ELECT with $\lambda$-valent advice, working in time $\tau$, requires advice of size $\Omega(n)$.
\end{theorem}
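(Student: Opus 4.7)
The plan is to obtain the theorem as an immediate consequence of Lemmas \ref{lem1} and \ref{lem2}, which together do all the work. First I would recall the construction of the class $\cT$ from Section 4.2, and note that the counting argument given there (using $k_1=1$, $k_2=2\lceil bn'/D\rceil$, $z$, $z'=0$ and the specified values) already establishes that every tree in $\cT$ has diameter $D$ and $n \in \Theta(n')$ nodes, so this part requires no additional work in the proof of the theorem.

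Next I would invoke Lemma \ref{lem1}: for any algorithm ELECT that solves leader election in $\tau=\lfloor \alpha D \rfloor$ rounds on all trees with $\lambda$-election index at most $\tau$, there exists a tree $T_1 \in \cT$ on which ELECT needs advice of size $\Omega(n)$. The only remaining thing to verify is that this particular $T_1$ is in the class of trees on which ELECT is assumed to work correctly — that is, that $\xi_{\lambda}(T_1) \le \tau$. This is exactly what Lemma \ref{lem2} provides, uniformly for every tree in $\cT$. Combining the two statements gives the theorem.

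The only subtlety worth flagging is precisely this compatibility check. Without Lemma \ref{lem2}, the lower bound from Lemma \ref{lem1} could in principle be vacuous, since ELECT is only required to succeed on trees with $\lambda$-election index at most $\tau$; Lemma \ref{lem2} guarantees that the hard instances produced by the counting argument of Lemma \ref{lem1} all lie in this restricted class, so the lower bound has real content. No additional calculation is needed beyond this observation.
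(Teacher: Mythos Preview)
Your proposal is correct and follows essentially the same approach as the paper: the theorem is stated as an immediate corollary of Lemmas~\ref{lem1} and~\ref{lem2}, with the observation (already made in Section~4.3) that every tree in $\cT$ has diameter $D$ and $n\in\Theta(n')$ nodes. Your explicit remark that Lemma~\ref{lem2} is needed to ensure the hard instances from Lemma~\ref{lem1} actually lie in the class where ELECT is assumed to work is exactly the point of combining the two lemmas.
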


\subsection{Large diameter: $D\in \Theta(n)$}
Let $D$ and $n'$ be positive integers such that $D =cn'+o(n')$, for some positive constant $c<1$. Let $\lambda>1$ be a constant integer.
The main result of this section gives two reals $0 < \beta _1 <\beta_2<\frac{1}{2}$, whose difference is small, which depend only on constants $c$ and $\lambda$,
and which satisfy the following properties:

\begin{enumerate}
\item
for any constant $\beta<\beta_1$, any election algorithm working in time $\tau=\lfloor\beta D \rfloor$ requires $\lambda$-valent advice of size $\Omega(n)$,
in some trees of diameter $D$ and size $n= n'+o(n')$, with $\lambda$-election index at most~$\tau$;
\item
for any constant $\beta>\beta_2$, there exists an election algorithm  working in time $\tau= \lfloor\beta D \rfloor$ with $\lambda$-valent advice of constant size, for all $n'$-node trees of diameter $D$ whose $\lambda$-election index is at most~$\tau$.
\end{enumerate}

\subsubsection{Lower bound}

The proof of the first (negative) result is split, as before, into two lemmas, concerning, respectively, the size of advice needed for election in trees from $\cT$, and the $\lambda$-election index of these trees.

\begin{lemma}\label{lem:large1}
Let $D$ and $n'$ be positive integers such that $D =cn'+o(n')$, for some positive constant $c<1$. Let $\lambda>1$ be a constant integer. There exists a real $\beta_1$, $0 < \beta _1 <\frac{1}{2}$, depending only on $c$ and $\lambda$, such that for any algorithm  ELECT which solves election in $\tau=\lfloor \beta D \rfloor$ rounds with $\lambda$-valent advice, for every  tree with $\lambda$-election index at most $\tau$, for any constant $\beta<\beta _1$,  there exists
a tree $T \in \cT$  with diameter $D$ and $n= n'+o(n')$ nodes, for which algorithm ELECT with $\lambda$-valent advice, working in time $\tau$, requires advice of size $\Omega(n)$.
\end{lemma}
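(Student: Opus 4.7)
The plan is to instantiate the construction of Section \ref{sec:Constant} with $k_1 = 1$, $z' = 0$, and $z$ a small positive integer constant (for definiteness, $z = 2$) depending only on $c$ and $\lambda$, while $k_2$ is chosen as a positive even integer depending only on $c$ and $\beta$ so that the total node count equals $n' + o(n')$. Specifically, since each tree has
\[
n \approx k_2 \bigl(\tau + z(D/2 - \tau)\bigr) = k_2 c n'\bigl(\beta + z(1/2 - \beta)\bigr),
\]
setting $k_2$ to the nearest even integer to $\bigl(c(\beta + z(1/2-\beta))\bigr)^{-1}$ yields $n = n' + o(n')$.

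For the advice lower bound I would mimic the pigeon-hole argument in Lemma \ref{lem1}. The class $\mathcal{T}_X$ of trees obtained by port exchanges in the left half of each path $P^j$ has size $|\mathcal{T}_X| = z^{k_2 y/2}$ with $y = D/2 - \tau - 1$. With advice of size $p$ and at most $\lambda$ distinct advice strings globally, the number of ordered sequences of labeled balls $\bigl(\mathcal{B}(v^j(0),\tau)\bigr)_{j=1}^{k_2/2}$ is at most $2^{(p+1)\lambda}\,\lambda^{(\tau+1)k_2/2}$. If this upper bound were smaller than $|\mathcal{T}_X|$, two trees in $\mathcal{T}_X$ would force the same output path at some left endpoint, contradicting correctness. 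Solving $(k_2/2)\bigl(y\log z - (\tau+1)\log\lambda\bigr) > (p+1)\lambda$ for $p$ yields a lower bound of order $\frac{k_2}{2\lambda}\bigl(y\log z - (\tau+1)\log\lambda\bigr)$, which is $\Omega(n)$ provided $(1/2-\beta)\log z > \beta \log\lambda$, i.e.,
\[
\beta < \beta_1 := \frac{\log z}{2(\log z + \log \lambda)}.
\]
With $z = 2$ this gives $\beta_1 = \frac{1}{2(1+\log \lambda)} < 1/2$, a constant depending only on $c$ and $\lambda$.

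It remains to confirm that $\xi_\lambda(T) \le \tau$ for every $T \in \mathcal{T}$, which parallels Lemma \ref{lem2}. Since $k_2$ is a constant, only $O(1)$ bits suffice to distinguish the subtrees $Q^j$, and these are encoded as colors on the nodes of $P^j$ exactly as in that lemma; the constraint $\tau \ge 3$ (needed to carry such a substring of length $\tau-2$) is automatic because $\tau = \Theta(n)$. The grey nodes on each $P^j$, together with the node-colored map supplied as part of the advice, let every node identify its distance from $r$ and the orientation of its path. Thus every node can reconstruct its subtree index $j$ and its depth in $Q^j$, and output the sequence of port numbers to $r$.

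The main obstacle is the combinatorial balancing: $k_2$ must simultaneously be an even positive integer, make the node count equal $n' + o(n')$, and be compatible with the inequality that drives the lower bound. Since the rounding error in $k_2$ is $O(1)$ while each $k_2$-indexed quantity scales linearly in $n'$, all the asymptotics absorb it, leaving the argument above as the substantive content. Once $\beta < \beta_1$ is fixed, the remaining estimates are routine arithmetic, and the two sub-statements combine to give the claim of the lemma.
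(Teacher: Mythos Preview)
Your approach differs from the paper's in the choice of parameters: the paper fixes $k_2=2$ and lets $z=\lceil (n'-2\tau)/(2(\lceil D/2\rceil-\tau))\rceil$ be a constant depending on $c$ and $\beta$, whereas you fix $z=2$ and let $k_2$ be an even integer constant. Correspondingly, the paper applies the pigeonhole argument to the single labeled ball $\cB(v^1(0),\tau)$ (since $k_2/2=1$), while you use the sequence of $k_2/2$ balls, as in Lemma~\ref{lem1}. Both routes yield a valid $\Omega(n)$ lower bound for small enough $\beta$. The paper's $\beta_1$ is the root of $\beta_1=\tfrac{1-2\beta_1}{2}\log_\lambda\bigl(\tfrac{1/2-\beta_1 c}{c/2-\beta_1 c+\epsilon}\bigr)$, which depends on $c$ and is precisely the quantity used downstream in Section~4.4.3 to bound $\beta_2-\beta_1$; your $\beta_1=\tfrac{1}{2(1+\log_2\lambda)}$ is simpler but independent of $c$, hence weaker for small $c$ (where the paper's $\beta_1$ approaches $1/2$).

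One caution: your claim that ``the rounding error in $k_2$ is $O(1)$ \ldots\ all the asymptotics absorb it'' is not correct as stated. Changing $k_2$ by one unit changes the node count by $\Theta(D)=\Theta(n')$, so rounding to an even integer does not in general give $n=n'+o(n')$; e.g., for $c$ close to $1$ the target value $1/(c(1-\beta))$ lies between $1$ and $2$, forcing $k_2=2$ and yielding $n\approx 2c(1-\beta)n'$, off by a constant factor. The paper's parameterization suffers from an analogous $\Theta(D)$ ceiling error in $z$, so this is a shared imprecision in the $n=n'+o(n')$ claim rather than a flaw specific to your route; the information-theoretic core of both arguments is sound.
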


\begin{proof}
We first do the proof for $D=cn'$.
It is enough to prove the lemma for $n' \geq \frac{400}{1-c}$. Let  $\epsilon = \frac{1-c}{200}$.
We show that if a real $\beta_1$ satisfies the equation $\beta_1= \frac{1-2\beta_1}{2} \log_{\lambda} \left(\frac{\frac{1}{2}-\beta_1 c}{\frac{c}{2}-\beta_1 c+\epsilon}\right)$, then any algorithm working in time $\tau=\lfloor\beta D\rfloor$, for $\beta <\beta_1$, with $\lambda$-valent advice, requires advice of size $\Omega(n)$.
 Consider an algorithm ELECT that solves election in $\tau=\lfloor \beta D \rfloor$ rounds with advice of size $p <(\beta_1 -\beta)cn'\log \lambda-\log \lambda-1$.

We use the construction of the class $\cT$ of trees in Section \ref{sec:Constant} for $k_1=1$, $k_2=2$, $z=\lceil\frac{n'-2\tau}{2(\lceil\frac{D}{2}\rceil-\tau)}\rceil$, and $z'=0$.
The total number of nodes $n$ is at most  $k_1k_2\left( \tau+1+z(\frac{D}{2}-\tau-1)+ (z'+\lfloor \log D \rfloor+ \frac{k_1-1}{2})\gamma \right) =2( \tau +1+\lceil\frac{n'-2\tau}{2(\lceil\frac{D}{2}\rceil-\tau)}\rceil (\frac{D}{2}-\tau) + \lfloor \log D \rfloor \gamma)$. This implies that $n=n'+o(n')$.
Without loss of generality, we assume that  the distance from $v^1(0)$ to the leader is at least $\lceil \frac{D}{2}\rceil$.

We have $z=   \lceil\frac{n'-2\tau}{2(\lceil\frac{D}{2}\rceil-\tau)}\rceil \ge \frac{\frac{n'}{2}-\lfloor \beta c n'\rfloor}{(\lceil\frac{cn'}{2}\rceil-\lfloor\beta cn'\rfloor)} \ge \frac{\frac{n'}{2}- \beta c n'}{(\lceil\frac{cn'}{2}\rceil-\beta cn')+1}\ge \frac{\frac{n'}{2}- \beta c n'}{(\frac{cn'}{2}-\beta cn')+2} \ge \frac{\frac{1}{2}- \beta c }{(\frac{c}{2}-\beta c)+\epsilon}$.

$|\cT_X|=z^{\lceil \frac{cn'}{2}\rceil -\lfloor c\beta n'\rfloor} \ge z^{\frac{cn'}{2} - c\beta n'}\ge {\left(\frac{\frac{1}{2}- \beta c }{(\frac{c}{2}-\beta c)+\epsilon}\right)}^{\frac{cn'}{2}-c\beta n'}=2^{\left(\frac{cn'(1-2\beta)}{2} \log (\frac{\frac{1}{2}-\beta c}{\frac{c}{2}-\beta c+\epsilon})\right)}$.

With the size of advice at most $p$, there are at most $2^{(p+1) }\lambda^{(\tau+1)}$ possible labeled balls $\cB (v_0,\tau)$.  Hence, the number of different pieces of information that $v_0$
can get within time $\tau$ is at most
$2^{(p+1) }\lambda^{(\tau+1)} < 2^{{c\beta_1 n'}\log \lambda}=2^{\left(\frac{cn'(1-2\beta_1)}{2} \log (\frac{\frac{1}{2}-\beta_1 c}{\frac{c}{2}-\beta_1 c+\epsilon})\right)}< 2^{\left(\frac{cn'(1-2\beta)}{2} \log (\frac{\frac{1}{2}-\beta c}{\frac{c}{2}-\beta c+\epsilon})\right)}$.

The last inequality follows from the fact that the function $f(\beta)=\left(\frac{(1-2\beta)}{2} \log (\frac{\frac{1}{2}-\beta c}{\frac{c}{2}-\beta c+\epsilon})\right)$ is a strictly decreasing function for $0 <\beta <\frac{1}{2}$.

Therefore,  $2^{(p+1) }\lambda^{(\tau+1)} <2^{\left(\frac{cn'(1-2\beta)}{2} \log (\frac{\frac{1}{2}-\beta c}{\frac{c}{2}-\beta c+\epsilon})\right)} =|\cT_X|$.

 Hence, there exist at least two trees $T_1$, $T_2$ $\in \cT_X$ such that the node $v^1(0)$ in $T_1$ and $v^1(0)$ in $T_2$ see the same labeled balls. Hence, $v^1(0)$ in $T_1$ and $v^1(0)$ in $T_2$ must output the same sequence of port numbers to give the path to the leader. According to the construction of the trees in $\cT_X$, for every two such trees,  the paths of length at least $\lceil \frac{D}{2}\rceil$ from $v^1(0)$ must correspond to different sequences of port numbers. This contradicts the correctness of the algorithm ELECT. Therefore, the size of the advice must be in
$\Omega(n)$.

The generalization of the reasoning to the case $D=cn'+o(n')$ follows from continuity arguments. It can be observed that the real $\beta_1$ in this case can be found
arbitrarily close to that derived for the case $D=cn'$, for sufficiently large $n'$.
\end{proof}

\begin{lemma}\label{lem:xi1}
For any tree $T_1 \in \cT$, $\xi_{\lambda}(T_1) \le \tau$.
\end{lemma}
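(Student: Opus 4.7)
\emph{Proof plan.} The class $\cT$ in this subsection is the $k_1=1$, $k_2=2$, $z'=0$ specialization of the general construction of Section~\ref{sec:Constant}, so my plan is to adapt the argument of Lemma~\ref{lem2} to this simpler setting. I would elect the central node $r$ as leader and provide each node $v$ with advice $(f(v), T_1(f))$, where $f \colon V(T_1) \to \{c_1,c_2\}$ is a two-colour map and $T_1(f)$ is the node-coloured port-labelled map of $T_1$. Since $T_1(f)$ is common to all nodes and $f$ takes only two values, the resulting advice is $\lambda$-valent for any $\lambda \ge 2$.

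To define $f$, first fix two distinct strings $s_1, s_2 \in \{c_1,c_2\}^{\tau-2}$, which exist because $\tau = \lfloor \beta D \rfloor \to \infty$ as $n \to \infty$ (using $D\in\Theta(n)$), so we may assume $\tau \ge 3$. Colour the path nodes by $f(v^j(l)) = s_j((l-3) \bmod (\tau-2))$ for $3 \le l \le D/2-1$, and set $f(v^j(0))=f(v^j(1))=f(v^j(2))=f(r)=c_1$; white nodes also receive $c_1$. Finally, colour the $\lfloor \log D \rfloor$ grey leaves attached at each path node $v^j(q(\tau-2)+3)$, read in the order of the ports $z+1,\dots,z+\lfloor \log D\rfloor$ at that path node, so that their colour sequence spells the binary representation (with $c_1=0$, $c_2=1$) of the depth $q(\tau-2)+3$ of the attachment point, which equals its distance to $r$.

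Next I would show that in $\tau$ rounds every node locates itself on $T_1(f)$, which suffices since from the map it can then read off the port sequence to $r$. A node whose ball of radius $\tau$ contains $r$ identifies itself directly, as in Lemma~\ref{lem:xi}. Otherwise $v$ lies in some branch $Q^j$, and either $\cB(v,\tau)$ reaches the endpoint $v^j(0)$ or it straddles two consecutive grey attachment points. In the first case $v$ reads $s_j$ directly off $f(v^j(3)),\dots,f(v^j(\tau))$ and counts edges to $v^j(0)$ for its depth. In the second case $v$ decodes the two encoded distances-to-$r$ from the two sets of grey leaves, identifies the direction toward $r$ as the side with the smaller value, and then reads a full period of path-node colours between these two attachment points to recover $s_j$, from which it pinpoints both $j$ and its own depth.

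The main obstacle is confirming that $\cB(v,\tau)$ always contains the structure exploited above: either the prefix $v^j(0),\dots,v^j(\tau)$ of the path, or two consecutive grey attachment points together with the $\tau-2$ intermediate path nodes. This is an elementary arithmetic check using that the path-span of the ball is $2\tau+1$ while the spacing between grey attachment points is $\tau-2$, so two such points always fit provided $\tau \ge 3$; the bound $\tau \ge 3$ holds for all $n$ large enough because $\tau = \lfloor \beta D \rfloor$ grows linearly in $n$. Identification and the correctness of the output port sequence then follow exactly as in Lemma~\ref{lem2}, giving $\xi_\lambda(T_1) \le \tau$.
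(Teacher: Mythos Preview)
Your proposal is correct and follows the same overall strategy as the paper: provide each node with its own colour $f(v)$ together with the full coloured map $T_1(f)$, and then argue that every node can locate itself on the map in $\tau$ rounds (using grey leaves to recover distance to $r$ and hence orientation). The only difference is in the branch-identification step: you carry over the periodic string colouring $s_1,s_2$ from Lemma~\ref{lem2}, whereas the paper exploits $k_2=2$ to do something simpler---assign the constant colour $c_1$ to every node of $P^1$ and $c_2$ to every node of $P^2$, so that a single path-node colour already determines $j$. Both work; the paper's version just avoids the period-alignment check you sketch in the last paragraph.
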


\begin{proof}
The proof of this lemma is similar to the proof of Lemma \ref{lem:xi}.
We present a leader election algorithm  working in time $\tau$, if some $\lambda$-valent advice of sufficient size is available to the nodes.
The node $r$ of a tree $T_1 \in \cT$ is chosen as the leader.

Since $k_1=1$, and $k_2=2$, every node has to identify whether it belongs to the subtree $Q^1$ or $Q^2$, and has to learn its position in the respective subtree.
After doing this, it can output the path to leader, using the colored map.
At a high level, each node in $P^1$ is assigned the color $c_1$ and each node in $P^2$ is assigned the color $c_2$. Every node can identify the subtree to which it belongs by seeing the advice provided to the nodes in $P^1$ or $P^2$. The advice at the attached $\lfloor \log D \rfloor$ nodes, coding the distance from $r$, helps to find the positions of the nodes in the respective subtree, as explained before.
\end{proof}

Lemmas \ref{lem:large1} and \ref{lem:xi1} imply the following theorem.

\begin{theorem}
Let $D$ and $n'$ be positive integers such that $D =cn'+o(n')$, for some positive constant $c<1$. Let $\lambda>1$ be a constant integer. There exists a real $\beta_1$, $0 < \beta _1 <\frac{1}{2}$, depending only on $c$ and $\lambda$, such that for any algorithm  ELECT which solves election in $\tau=\lfloor \beta D \rfloor$ rounds with $\lambda$-valent advice, for every  tree with $\lambda$-election index at most $\tau$, for any constant $\beta<\beta _1$,  there exists
a tree with $\lambda$-election index at most $\tau$ with diameter $D$ and $n= n'+o(n')$ nodes, for which algorithm ELECT with $\lambda$-valent advice, working in time $\tau$, requires advice of size $\Omega(n)$.
\end{theorem}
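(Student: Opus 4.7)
The plan is to derive this theorem as an immediate corollary of Lemmas \ref{lem:large1} and \ref{lem:xi1}. I would take $\beta_1$ to be exactly the threshold produced by Lemma \ref{lem:large1}; this value depends only on $c$ and $\lambda$ via the implicit equation
\[
\beta_1 \;=\; \frac{1-2\beta_1}{2}\,\log_\lambda\!\left(\frac{\tfrac{1}{2}-\beta_1 c}{\tfrac{c}{2}-\beta_1 c + \epsilon}\right)
\]
with $\epsilon = (1-c)/200$, and lies in $(0,1/2)$ by the intermediate value theorem applied to the continuous function on the right-hand side (the left-hand side is the identity, and the two sides cross in $(0,1/2)$). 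Given any algorithm ELECT that solves leader election in $\tau = \lfloor \beta D\rfloor$ rounds for all trees with $\lambda$-election index at most $\tau$, and any constant $\beta < \beta_1$, Lemma \ref{lem:large1} produces a tree $T \in \cT$ of diameter $D$ and size $n = n' + o(n')$ on which ELECT, restricted to $\lambda$-valent advice and to time $\tau$, requires advice of size $\Omega(n)$.

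To conclude, I need only check that this $T$ belongs to the family on which ELECT is assumed to work, namely that $\xi_\lambda(T) \le \tau$. This is exactly the content of Lemma \ref{lem:xi1}: the class $\cT$ has been designed so that some $\lambda$-valent advice of sufficient size elects the central node $r$ in time $\tau$, by assigning distinct colors to $P^1$ and $P^2$ and reading distances to $r$ off the $\lfloor \log D\rfloor$ grey leaves attached to nodes at depths $q(\tau-2)+3$. Combining the two lemmas on the single value $\beta_1$ defined above immediately yields the tree claimed by the theorem.

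The main obstacle in the whole argument is not this gluing step, which is essentially trivial, but the construction of $\cT$ and the parameter tuning inside the two lemmas. The delicate point is that the same family $\cT$ must simultaneously (i) be large enough---via port-number variation on the hidden halves of the paths $P^j$---so that a counting argument forces $\Omega(n)$ bits of advice in Lemma \ref{lem:large1}, and (ii) be simple enough in global structure (only two long branches, a distinguished central node, and carefully placed grey leaves) that constant-valency advice of bounded size already suffices for election in time $\tau$ in Lemma \ref{lem:xi1}. The equation defining $\beta_1$ is precisely the break-even point of the counting inequality $2^{p+1}\lambda^{\tau+1} < |\cT_X|$ derived in Lemma \ref{lem:large1}, and the monotonicity (strict decrease in $\beta$) of the function $\beta \mapsto \tfrac{1-2\beta}{2}\log\!\left(\tfrac{\tfrac12 - \beta c}{\tfrac{c}{2} - \beta c + \epsilon}\right)$ ensures that every $\beta < \beta_1$ still satisfies this inequality, which is all that the theorem requires.
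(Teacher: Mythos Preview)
Your proposal is correct and follows exactly the paper's approach: the theorem is stated immediately after Lemmas~\ref{lem:large1} and~\ref{lem:xi1} with the single remark that the two lemmas imply it. Your added commentary on the existence of $\beta_1$ and the role of each lemma is more explicit than the paper's one-line derivation, but the underlying argument is identical.
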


\subsubsection{The algorithm}

For the second (positive) result,  we propose an election algorithm, working for any $n$-node tree of diameter $D$, with $\lambda$-valent advice of constant size. Our algorithm works in time $\tau=\lfloor \beta D \rfloor$, for trees with $\lambda$-election index at most $\tau$, for any constant $\beta>\beta_2$, where $\beta_2$ satisfies the equation
$\beta_2=2(\frac{1}{2}-\beta_2+2\epsilon) (\log_{\lambda} \left( \frac{1-\frac{c}{2}+\epsilon}{\frac{c}{2}-\beta_2c}\right)+1)$, with $\epsilon = \frac{1-c}{200}$.
First, we propose an algorithm working in time $\tau=\lfloor \beta D \rfloor$, that solves election for trees with $\lambda$-election index at most $\tau$, using $(\lambda+5)$-valent advice of constant size. Later, we show how to modify the algorithm, so that  $\lambda$-valent advice of constant size is enough. Let $\tau'=\lfloor \beta_2 D \rfloor$. Equivalently, we show an algorithm working in time $\tau=\gamma \tau'$, for any constant $\gamma>1$.

Let $T$ be a rooted $n$-node tree of diameter $D$ with $\lambda$-election index at most $\tau$. If $D$ is even, then the root $r$ is the central node, and if $D$ is odd, the root $r$ is one
of the endpoints of the central edge. This is the node that the algorithm will elect. The height of the tree is $\lceil \frac{D}{2} \rceil$.
At a high level, the advice is assigned to each of the nodes in $T$ in two steps. In the first step, certain nodes in $T$ of different depths are marked using five additional markers. We will later specify how these markers are coded. The coding will also include information about the direction from the marked node to the root. In the second step, all the non-marked nodes are assigned advice in such a way that the advice strings in these nodes collected in a specified way represent the sequence of port numbers from a marked node to the leader. In what follows, we use an integer parameter $k$, which will be defined later.
Let $L$ be initialized to the set of all leaves in $T$. The marking of the nodes in $L$ is done
 using five markers, $white$, $green$, $blue$, $red$ and $black$, as described below.
\begin{enumerate}[a.]
\item Mark the root $r$ with the marker {\it white}.
\item Let $v$ be the node in $L$ of largest depth. If $v$ has an ancestor $u$ at distance at most $\tau $ with $M(u)=white$ or $v$ has an ancestor $u$ at distance $<(k-2)\lfloor\frac{\tau}{k}\rfloor$ with $M(u)=green$ or $M(u)=blue$, then remove $v$ from $L$. Otherwise, let $u$ be the ancestor of $v$ at distance $(k-2)\lfloor\frac{\tau}{k}\rfloor$. Mark $u$ with the marker $blue$. If $v$ is not a leaf, mark it with the marker {\it green}. Add $u$ to $L$ and remove $v$ from $L$.
\item If $L$ is non-empty, go to Step b.
\item For every path of length $(k-2)\lfloor\frac{\tau}{k}\rfloor$ whose top node is green or blue, which ends with a green node or a leaf, and does not have any blue internal node, mark every $\lfloor\frac{\tau}{k}\rfloor$-th internal  node in this path, from top to bottom, with the marker {\it red}.

\item For every path of length $<(k-2)\lfloor\frac{\tau}{k}\rfloor$ whose top node is green or blue, which ends with a blue node or a leaf, and does not have any blue internal node, mark every $\lfloor\frac{\tau}{k}\rfloor$-th internal node $w$ in this path, from the top  to bottom, with the marker {\it black}, if this node is not already marked red and the distance between $w$ and $v$ is at least $2k+10$.
\end{enumerate}

\begin{figure}[h]
\centering
\includegraphics[width=0.5\textwidth]{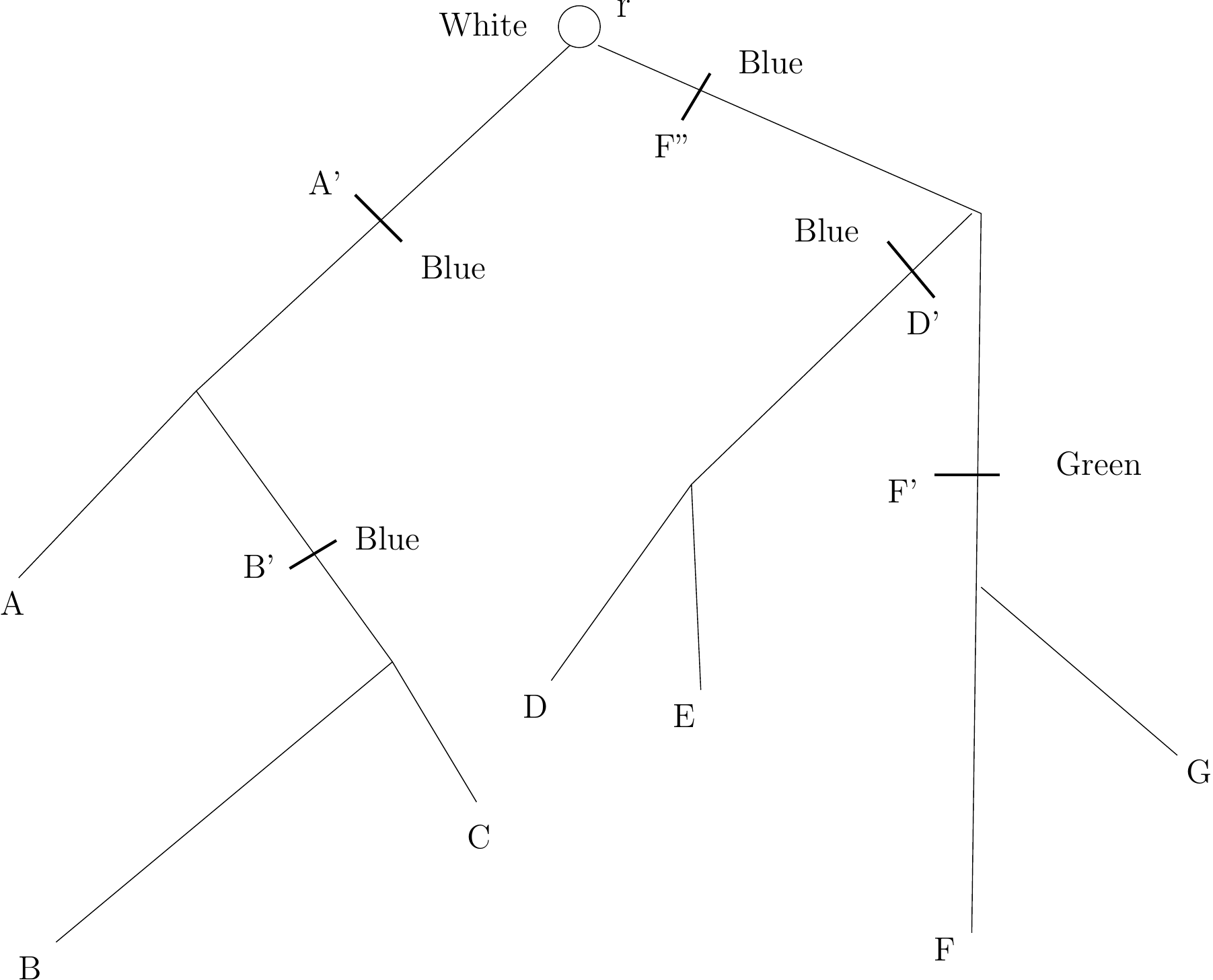}
\caption{Marking by blue and green markers}
\label{marking}
\end{figure}

Fig. \ref{marking} and Fig. \ref{coding} show an example of a tree $T$ and the marking of the nodes of $T$. More precisely, Fig. \ref{marking} shows the marking of the nodes of $T$ by the markers blue and green. Initially, the set $L$ contains the nodes $A$, $B$, $C$, $D$, $E$, $F$, and $G$. First, the root is marked white.
Then the nodes $A'$, $B'$, and $F'$ are marked blue and included in $L$. They are at distance $(k-2)\lfloor \frac{\tau}{k}\rfloor$ from the nodes $A$, $B$, and $F$, respectively. The other leaves of $T$ are removed from the set $L$ because one of the nodes $A'$, $B'$, and $F'$ is an ancestor of each of these leaves, at distance less than $(k-2)\lfloor \frac{\tau}{k}\rfloor$. The root $r$ is the ancestor of  $A'$ and $B'$ at distance less than $(k-2)\lfloor \frac{\tau}{k}\rfloor$, hence these two nodes are removed from $L$. The node $F''$ is marked blue. It is an ancestor of $F'$ at distance $(k-2)\lfloor \frac{\tau}{k}\rfloor$. Hence the mark of $F'$ is changed to green and $F'$ is removed from $L$. Finally, $F''$ is removed from $L$ and $L$ becomes empty. Fig. \ref{coding} shows the marking of red and black nodes on an example of two paths. Every $\lfloor \frac{\tau}{k}\rfloor$-th node of the path $A'$ to $A$ is marked red. Every $\lfloor \frac{\tau}{k}\rfloor$-th node of the path from $A'$ to $B$, which is not marked red, gets the mark black.

\begin{figure}[h]
\centering
\includegraphics[width=0.5\textwidth]{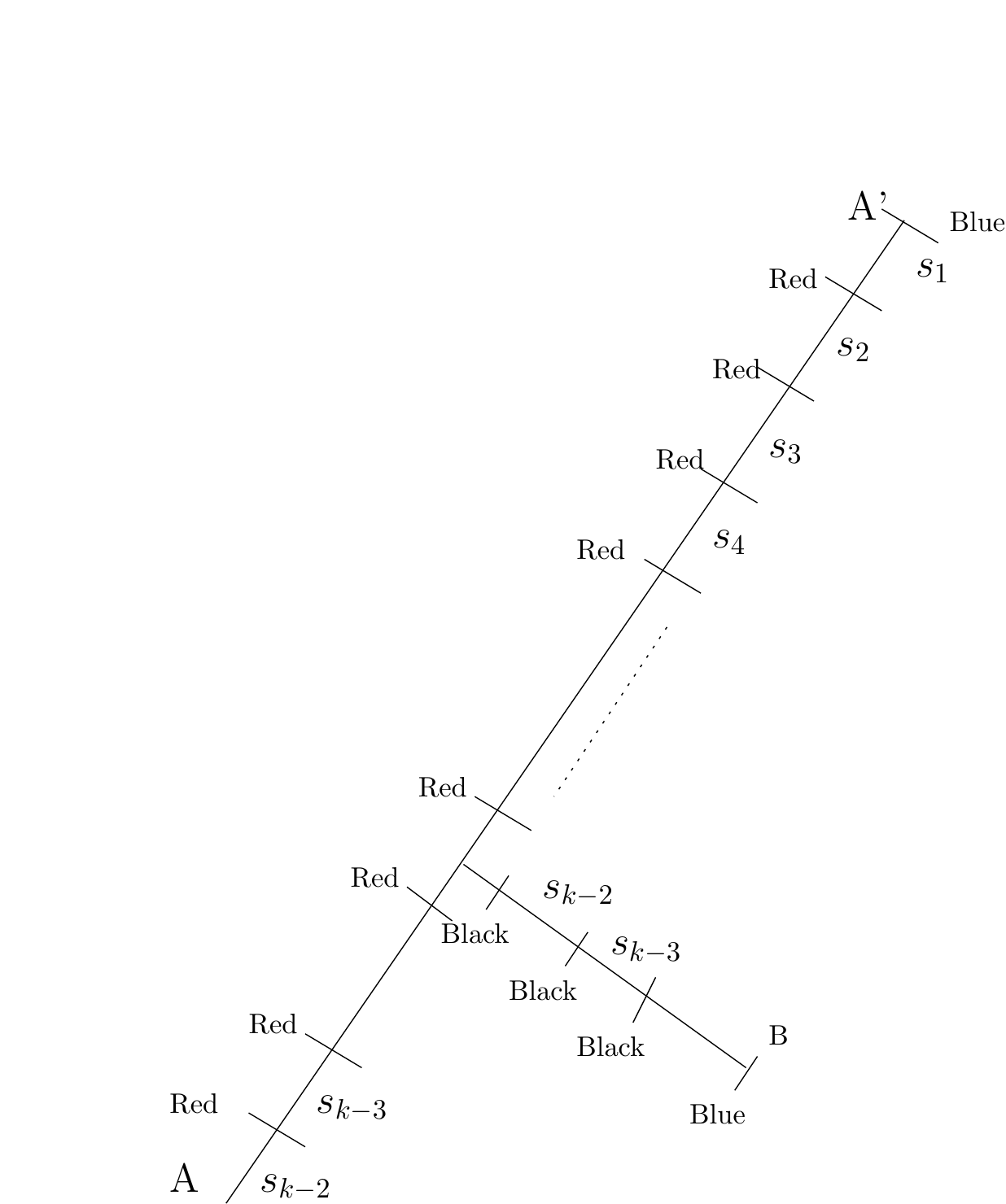}
\caption{Marking by red and black markers and the advice assignment at non-marked nodes}
\label{coding}
\end{figure}

We give a formal description of the above marking in Algorithm \ref{alg:marking}.

According to Algorithm \ref{alg:marking}, the following statements are true.
\begin{enumerate}
\item  Every node $v$ in $T$ at depth at least $\tau+1$  has a blue or green ancestor at distance at most $(k-2)\lfloor\frac{\tau}{k}\rfloor$.
\item Every non-leaf node $u$, which is either green or blue, has at least one green descendant $u'$ at distance $(k-2)\lfloor\frac{\tau}{k}\rfloor$, and the path from $u$ to $u'$ does not contain any blue internal node.
\item The depth of every node $u$, which is either green or blue, is at most $\lceil\frac{D}{2}\rceil-(k-2)\lfloor\frac{\tau}{k}\rfloor$.
\end{enumerate}

 \begin{algorithm}
\caption{\textsc{Marking$(T,\tau)$}}
\begin{algorithmic}[1]
\STATE{$M(r) \leftarrow white$}
\STATE{$L\leftarrow$ \{$v|$, $v$ is a leaf of $T$\}.}
\WHILE{$L$ is non-empty}
\STATE{$v \leftarrow$ the node in $L$ with largest depth.}
\IF{$v$ has an ancestor $u$ at distance at most $\tau $ with $M(u)=white$ or $v$ has an ancestor $u$ at distance $<(k-2)\lfloor\frac{\tau}{k}\rfloor$ with $M(u)=green$ or $M(u)=blue$}
\STATE{$L=L\setminus\{v\}$.}
\ELSE
\STATE{$u \leftarrow$ the ancestor of $v$ at distance $(k-2)\lfloor\frac{\tau}{k}\rfloor$.}
\STATE{$M(u)\leftarrow blue$, $M(v)\leftarrow green$.}
\STATE{$L=L \cup \{u\}\setminus\{v\}$.}
\ENDIF
\ENDWHILE
\FOR{every path $P$ from $u$ to $v$ of length $(k-2)\lfloor\frac{\tau}{k}\rfloor$, where ($M(u)=blue$ or $M(u)= green$) and $M(v)=green$, and $v$ is a descendant of $u$}
\FOR{ every internal node $w \in P$ at depth $depth(u)+i\lfloor\frac{\tau}{k}\rfloor$, $1\le i\le k-3$}
\STATE{$M(w)\leftarrow red$.}
\ENDFOR
\ENDFOR
\FOR{every path $P$ from $u$ to $v$ of length $<(k-2)\lfloor\frac{\tau}{k}\rfloor$, where ($M(u)=blue$ or $M(u)= green$) and ($M(v)=green$ or $v$ is a leaf), and
there is no blue node between $u$ and $v$ in $P$, and $v$ is a descendant of $u$}
\FOR{every node $w \in P$ at depth $depth(u)+i\lfloor\frac{\tau}{k}\rfloor$, $i> 0$}
\IF{$M(w) \ne red$ and the distance between $w$ and $v$ is at least $2k+10$}
\STATE{$M(w)\leftarrow black$.}
\ENDIF
\ENDFOR
\ENDFOR
\end{algorithmic}\label{alg:marking}
\end{algorithm}

According to Algorithm \ref{alg:marking}, there can be two kinds of paths whose top node is green or blue, of length at most $(k-2)\lfloor\frac{\tau}{k}\rfloor$. The first kind of paths are of  length exactly $(k-2)\lfloor\frac{\tau}{k}\rfloor$ and end with a green node. All internal marked nodes on this path are red.
The second kind of paths are of length less than $(k-2)\lfloor\frac{\tau}{k}\rfloor$, end with a non-marked leaf or a blue node, and there is no internal blue node on the path. All marked internal nodes on this path are black or red.

The advice $C(u)$ of a node $u$ which is not previously marked is assigned depending on which kind of path it belongs to.
Below we explain the high-level idea of the advice assignment and its interpretation for each type of paths.

Consider a path $P$ of the first kind, of length $(k-2)\lfloor\frac{\tau}{k}\rfloor$. Let $P$  start with the top node $u$, $M(u)=green$ or $M(u)= blue$ and end with a node $v$, $M(v)=green$. All the nodes in $P$ are used to code the sequence of port numbers that represents the path from $u$ to the root $r$. Every node which belongs to such a path, can see the entire path in $\tau$ and thus can decode the path to the leader by collecting the advice from top to bottom.

Consider a path $P$ of the second kind, of length smaller than $(k-2)\lfloor\frac{\tau}{k}\rfloor$.
Let $P$ start with the top node $u$, $M(u)=green$ or $M(u)=blue$, end with a node $v$ which is either blue or a non-marked leaf, and contains $h$ black internal nodes and no blue internal nodes. Let $u'$ be the closest green descendant of $u$. In time $\tau$, any node $v$ in $P$ can see at least $k-h-1$ red nodes on the path from $u$ to $u'$. The advice given to non-marked nodes in the path from $u$ to $u'$ codes the path from $u$ to $r$. The node $v$ can see all nodes between these $k-h-1$ marked nodes. It concatenates top down the advice strings given to these nodes, obtaining a string $x$. The other part $y$ of the desired string is coded in the path $P$. The node $v$ obtains the string $y$ by collecting the advice from non-marked nodes of $P$. Finally, the node $v$ decodes the path from $v$ to the leader, using the concatenation $yx$.

 Fig. \ref{coding} shows the advice assignment at the non-marked nodes of the path from $A'$ to $A$, which is a path of the first kind. The advice assignment at the non-marked nodes between the black nodes of the path from $A'$ to $B$, which is a path of the second kind,  is also shown in this figure.

Below we give the pseudocode of the Algorithm \textsc{Coding$(T,\tau)$}, used to produce the advice for each node. The algorithm calls one of the two procedures specified in Algorithms \ref{alg:codingp1}, and \ref{alg:codingp2},  depending on the kind of path to which the node belongs.

\begin{algorithm}
\caption{\textsc{Coding$(T,\tau)$}}
\begin{algorithmic}[1]
\FOR{ every path $P$  of the first kind from node $u$ to $v$ of length $(k-2)(\lfloor \frac{\tau}{k}\rfloor)$, such that $M(u)=green$ or $M(u)=blue$,  $M(v)=green$ or $v$ is a leaf,  $v$ is a descendant of $u$, and there is no blue internal node in $P$}
\STATE{{\textsc{CodingPath\_1$(T,\tau)$}}.}
\ENDFOR
\FOR{every path $P$  of the second kind, of length less than $(k-2)\lfloor\frac{\tau}{k}\rfloor$, from a node $u$ to a node $v$, where $M(u) =green$ or $M(u) =blue$, , $M(v)=blue$ or $v$ is a leaf, $v$ is a descendant of $u$, and $P$ does not contain any blue internal node}
\STATE{{\textsc{CodingPath\_2$(T,\tau)$}}}
\ENDFOR
\STATE{For all the nodes $v$, which are not assigned any advice yet, assign advice $C(v)=c_1$.}
\end{algorithmic}\label{alg:coding}
\end{algorithm}

\begin{algorithm}
\caption{\textsc{CodingPath\_1$(T,\tau)$}}
\begin{algorithmic}[1]
\STATE{ Let $P=(u,u_0^1,u_0^2,\cdots,u_0^{\lfloor\frac{\tau}{k}\rfloor-1},u_1$, $u_1^1,u_1^2,\cdots,u_1^{\lfloor\frac{\tau}{k}\rfloor-1},u_2, \cdots,u_{k-3}, u_{k-3}^1,u_{k-3}^2,\cdots,u_{k-3}^{\lfloor\frac{\tau}{k}\rfloor-1},v)$,  such that $M(u)=green$ or $M(u)=blue$,  $M(v)=green$ or $v$ is a leaf,  $v$ is a descendant of $u$, and, for $1 \leq  j \leq k-3$, $M(u_j)=red$.}
\STATE{Let $s$ be the string of length $(k-2)(\lfloor \frac{\tau}{k}\rfloor-1)$ of $\lambda$ colors, that represents the sequence of port numbers from $u$ to $r$. Let $s_1$, $s_2$, $\cdots$, $s_{k-2}$ be the substrings of $s$ of same length such that $s$ is the concatenation $s_1s_2 \cdots s_{k-2}$.}
\FOR{$i=0$ to $k-3$}
\FOR{$j=1$ to $\lfloor \frac{\tau}{k}\rfloor-1$}
\STATE{$C(u_{i}^j) \leftarrow s_i(j)$}
\ENDFOR
\ENDFOR

\end{algorithmic}\label{alg:codingp1}
\end{algorithm}

\begin{algorithm}[h]
\caption{\textsc{CodingPath\_2$(T,\tau)$}}
\begin{algorithmic}[1]
\STATE{ Let $P'=(u_0,u_0^1,u_0^2,\cdots,u_0^{\lfloor\frac{\tau}{k}\rfloor-1},u_1$, $u_1^1,u_1^2,\cdots,u_1^{\lfloor\frac{\tau}{k}\rfloor-1},u_2, \cdots,u_{l-1}, u_{l-1}^1,u_{l-1}^2,\cdots,u_{l-1}^{\lfloor\frac{\tau}{k}\rfloor-1},u_l)$ be the subpath of $P$ where $l<k-2$ and, for $0\leq j \leq l$, $M(u_j)=black$.}
\FOR{$i=0$ to $l-1$}
\FOR{$j=1$ to $\lfloor \frac{\tau}{k}\rfloor-1$}
\STATE{$C(u_{i}^j) \leftarrow s_{k-i-2}(j)$}
\ENDFOR
\ENDFOR
\end{algorithmic}\label{alg:codingp2}
\end{algorithm}

\newpage

We now show how any node $v$ can compute the path to the leader $r$ from the advice seen in the labeled ball $\cB(v,\tau)$.
We consider the following cases, depending on what the node $v$ sees in the labeled ball $\cB(v,\tau)$.

\noindent
{\bf Case 1.} A node $v$ sees a white node.\\
Node $v$ outputs the sequence of port numbers from itself  to the white node.

\noindent
{\bf Case 2.} A node $v$ does not see a white node but sees a path of length $(k-2)\lfloor\frac{\tau}{k}\rfloor$ whose top node is green or blue, which ends with a green node or a leaf, and does not contain any blue internal node.
Let $P=(u,u_0^1,u_0^2,\cdots,u_0^{\lfloor\frac{\tau}{k}\rfloor-1},u_1$, $u_1^1,u_1^2,\cdots,u_1^{\lfloor\frac{\tau}{k}\rfloor-1},u_2 \cdots,u_{k-3}, u_{k-3}^1,u_{k-3}^2,\cdots,u_{k-3}^{\lfloor\frac{\tau}{k}\rfloor-1},u')$ be a path of the first kind, seen by node $v$, such that $M(u)=green$ or $M(u)=blue$, $M(u')=green$ or $u'$ is a leaf, $u$ is an ancestor of $u'$, and, for $1 \leq  j \leq k-2$, $M(u_j)=red$.  The node $v$ computes the sequence $s$ which is the concatenation
$$C(u_0^1) C(u_0^2)\cdots C(u_0^{\lfloor\frac{\tau}{k}\rfloor-1})C(u_1^1) C(u_1^2)\cdots C(u_1^{\lfloor\frac{\tau}{k}\rfloor-1}) C(u_2^1)\cdots C(u_2^{\lfloor\frac{\tau}{k}\rfloor-1}) \cdots C(u_{k-3}^1) \cdots C(u_{k-3}^{\lfloor\frac{\tau}{k}\rfloor-1}).$$
This string $s$ unambiguously codes the sequence of port numbers corresponding to the path from $u$ to $r$. Let $\pi(u,r)$ be the sequence of port numbers corresponding to the path from $u$ to $r$ represented by $s$. If $u$ is an ancestor of $v$, then $v$ computes $\pi(v,u)$ by seeing $\cB(v,\tau)$. Then it outputs the path to the leader as $\pi(v,u)$ followed by $\pi(u,r)$. Otherwise, if $u$ is a descendant of $v$, then let $l$ be the distance from $u$ to $v$.
Let $\pi(u,r)$ be the sequence of port numbers corresponding to the path from $u$ to $r$ represented by $s$.  The node $v$ computes $\pi(v,r)$ by deleting the first $l$ port numbers from $\pi(u,r)$ and outputs it.

\noindent
{\bf Case 3.} Case 1 and Case 2 are false.\\
According to Algorithm \ref{alg:marking},  node $v$ sees at least one blue or green ancestor in time $\tau$. Let $u$ be the closest green or blue ancestor of $v$. Since, $M(u)=green$ or $M(u)=blue$, there exists at least one descendant $u'$ of $u$ at distance $(k-2)\lfloor\frac{\tau}{k}\rfloor$ such that, $M(u')$ is green or $u'$ is a leaf, and there is no blue internal node in the path from $u$ to $u'$.
Let $u_1$, $u_2$, $\cdots, u_h$ be the black nodes between $u$ and $v$. Let $u_i^1$, $u_i^2$, $\cdots$, $u_i^{\lfloor\frac{\tau}{k}\rfloor-1}$ be the nodes between the nodes $u_i$ and $u_{i+1}$,  for $1 \leq i \leq h-1$.
            The node $v$ sees at least the $k-h-1$ highest red nodes, $a_1,a_2,\dots a_{k-h-1}$, of the path from $u$ to $u'$. Let $a_i^1$, $a_i^2$, $\cdots$, $a_i^{\lfloor\frac{\tau}{k}\rfloor-1}$ be the non-marked nodes between the  nodes $a_i$ and $a_{i+1}$,  for $1 \leq i \leq k-h-2$. Let $a_0^1$, $a_0^2$, $\cdots$, $a_0^{\lfloor\frac{\tau}{k}\rfloor-1}$ be the non-marked nodes between the  nodes $u$ and $a_1$.
            The node $v$ computes the string $s'$ which is the concatenation
             $(C(u_1^1) C(u_1^2)\cdots C(u_1^{\lfloor\frac{\tau}{k}\rfloor-1}) C(u_2^1)\cdots C(u_2^{\lfloor\frac{\tau}{k}\rfloor-1})$ $\cdots$ $C(u_{h-1}^1) \cdots C(u_{h-1}^{\lfloor\frac{\tau}{k}\rfloor-1}) )^R$. Node $v$ computes the string $s''$ which is the concatenation $C(a_0^1) C(a_0^2)\cdots C(a_0^{\lfloor\frac{\tau}{k}\rfloor-1})C(a_1^1) C(a_1^2)\cdots C(a_1^{\lfloor\frac{\tau}{k}\rfloor-1})$ $\cdots$ $C(a_{k-h-2}^1) \cdots C(u_{k-h-2}^{\lfloor\frac{\tau}{k}\rfloor-1})$. Then node $v$ computes $s=s'' s'$.
The string $s$ unambiguously codes the sequence $\pi(u,r)$ of port numbers, corresponding to the path from $u$ to $r$.  The node $v$ computes the sequence $\pi(v,u)$ of port numbers corresponding to the path from $v$ to $u$, by seeing $\cB(v,\tau)$. Finally, $v$ outputs the sequence  $\pi(v,u)$ followed by $ \pi(u,r)$.

Below we give the pseudocode of Algorithm \textsc{Decoding$(v,\tau)$}, executed by a node $v$, which outputs the path from $v$ to the leader. The algorithm uses one of the two procedures specified in Algorithms \ref{alg:dcodingp1}, \ref{alg:dcodingp2}, depending on the labeled ball
$\cB(v,\tau)$.

\begin{algorithm}
\caption{\textsc{Decoding$(v,\tau)$}}
\begin{algorithmic}[1]
\IF{there exists a white node in $\cB(v,\tau)$}
\STATE{output the sequence of port numbers from $v$  to the white node.}
\ELSE
\IF{there exists a path $P$ in $\cB(v,\tau)$ of length $(k-2)\lfloor \frac{\tau}{k}\rfloor$ whose top node is green or blue, which ends with a green node or a leaf, and does not have any internal blue node}
\STATE{\textsc{DecodingPath\_1$(v,\tau)$}.}
\ELSE
\STATE{\textsc{DecodingPath\_2$(v,\tau)$}.}
\ENDIF
\ENDIF
\end{algorithmic}\label{alg:dcoding}
\end{algorithm}

\begin{algorithm}
\caption{\textsc{DecodingPath\_1$(v,\tau)$}}
\begin{algorithmic}[1]
\STATE{Let $P=(u,u_0^1,u_0^2,\cdots,u_0^{\lfloor\frac{\tau}{k}\rfloor-1}u_1$, $u_1^1,u_1^2,\cdots,u_1^{\lfloor\frac{\tau}{k}\rfloor-1},u_2 \cdots,u_{k-3}, u_{k-3}^1,u_{k-3}^2,\cdots,u_{k-3}^{\lfloor\frac{\tau}{k}\rfloor-1},u')$ be a path of length $(k-2)\lfloor \frac{\tau}{k}\rfloor$, seen by node $v$, such that $M(u)=green$ or $M(u)=blue$,  $M(u')=green$ or $u'$ is a leaf, $u$ is an ancestor of $u'$, and, for $1 \leq  j \leq k-3$, $M(u_j)=red$.}
\STATE{$s \leftarrow C(u_0^1) C(u_0^2)\cdots C(u_0^{\lfloor\frac{\tau}{k}\rfloor-1}) C(u_1^1)\cdots C(u_1^{\lfloor\frac{\tau}{k}\rfloor-1})$ $\cdots$ $C(u_{k-3}^1) \cdots C(u_{k-3}^{\lfloor\frac{\tau}{k}\rfloor-1})$}
\STATE{Let $s=(p_1,p_2, \cdots, p_q)^*$.}
\IF{$u$ is an ancestor of $v$}
\STATE{$\pi(v,u) \leftarrow$ the sequence of port numbers corresponding to the path from $v$ to $u$.}
\STATE{Output $\pi(v,u)$ followed by $(p_1,p_2, \cdots, p_q)$.}
\ELSE
\STATE{ let $l$ be the distance from $w$ to $u$.}
\STATE{$\pi\leftarrow(p_{l+1},p_{l+2}, \cdots , p_q)$.}
\STATE{Output the sequence $\pi$.}
\ENDIF
\end{algorithmic}\label{alg:dcodingp1}
\end{algorithm}

\begin{algorithm}
\caption{\textsc{DecodingPath\_2$(v,\tau)$}}
\begin{algorithmic}[1]
\STATE{Let $u$ be the closest green or blue ancestor of $v$. Let $u_1$, $u_2$, $\cdots, u_h$ be the black nodes between $u$ and $v$. Let $u_i^1$, $u_i^2$, $\cdots$, $u_i^{\lfloor\frac{\tau}{k}\rfloor-1}$ be the nodes between the nodes $u_i$ and $u_{i+1}$,  for $1 \leq i \leq h-1$.}
\STATE{Let $a_1$, $a_2$, $\cdots, a_{k-h-1}$ be the $k-h-1$ highest red nodes in some path whose top node is $u$. Let $a_i^1$, $a_i^2$, $\cdots$, $a_i^{\lfloor\frac{\tau}{k}\rfloor-1}$ be the unmarked nodes between the nodes $a_i$ and $a_{i+1}$, for $1 \leq i \leq k-h-1$. Let $a_0^1$, $a_0^2$, $\cdots$, $a_0^{\lfloor\frac{\tau}{k}\rfloor-1}$ be the unmarked nodes between the nodes $u$ and $a_1$.}
\STATE{$s' \leftarrow (C(u_1^1) C(u_1^2)\cdots C(u_1^{\lfloor\frac{\tau}{k}\rfloor-1}) C(u_2^1)\cdots C(u_2^{\lfloor\frac{\tau}{k}\rfloor-1})$ $\cdots$ $C(u_{h-1}^1) \cdots C(u_{h-1}^{\lfloor\frac{\tau}{k}\rfloor-1}) )^R$.}
\STATE{$s'' \leftarrow C(a_0^1) C(a_0^2)\cdots C(a_0^{\lfloor\frac{\tau}{k}\rfloor-1}) C(a_1^1)\cdots C(a_1^{\lfloor\frac{\tau}{k}\rfloor-1})$ $\cdots$ $C(a_{k-h-2}^1) \cdots C(u_{k-h-2}^{\lfloor\frac{\tau}{k}\rfloor-1})$.}
\STATE{$s \leftarrow s's''$.}
\STATE{Let $s=(p_1,p_2, \cdots, p_q)^*$.}
\STATE{$\pi(v,u) \leftarrow$ the sequence of port numbers corresponding to the path from $v$ to $u$.}
\STATE{Output $\pi(v,u)$ followed by $(p_1,p_2, \cdots, p_q)$.}

\end{algorithmic}\label{alg:dcodingp2}
\end{algorithm}

\newpage

Note that the $(\lambda+5)$-valent advice described in Algorithms \ref{alg:marking} and \ref{alg:coding} has constant size: indeed, each of the five markers
can be coded in constant size, and each of the unmarked nodes gets one of the $\lambda$ colors as advice, which is also of constant size, since $\lambda$ is constant.




We now describe how to code the five markers needed in Algorithm \ref{alg:dcoding}, using $\lambda$ colors. Instead of a single node, a sequence of constant length of consecutive nodes is used to code each marker in such a way that every node seeing the advice given to nodes of this sequence can identify a marker and can detect the direction to the root. We give the description for the case when $\lambda=2$. In this case, every node gets a single bit as advice.  For $\lambda>2$ the solution is similar.

Let $\tau=\gamma \tau'$, where $\gamma>1$ is any positive real constant. There exists an integer constant $k>3$, such that $\gamma \ge \frac{k+1}{k-3}$. We consider the time  $\tau \ge k(k-2)(2k+10)+(k+1)$ and $n\ge \frac{200}{1-c}$.

Let $s(v,r)^*$ be the binary string coding the sequence $s(v,r)$ of port numbers, corresponding to the path from $v$ to $r$. We compute the binary sequence $s'$ from $s(v,r)^*$ as follows. Let $s_1$, $s_2$, $\cdots$, $s_m$ be the substrings of $s(v,r)^*$ such that $|s_i|=k$, for $1 \le i <m$, $s_m \le k$, and  $s(v,r)^*$ is the concatenation $s_1s_2\dots s_m$. The binary string $s'$ is the concatenation $s_10s_20s_30\cdots s_{m-1}0s_m$. Note that, the binary string $1^{k+1}$ is never a substring of $s'$. We use the substring $1^{k+1}$ to code the markers as follows. Each marker is formed by a sequence of $2k+9$ consecutive nodes. We will prove that these sequences are disjoint for different markers.

\noindent
{\bf White marker}: All the nodes of a path starting from the root $r$ to a node of depth $2k+8$ are used to code the white marker. Consecutive $2k+9$ nodes, starting from the node $r$ to a node at depth $2k+8$, are assigned the advice in the following way. Give the node $r$ the advice 0. The next $k+1$ nodes are assigned the advice 1. The next five nodes are assigned the advice 1,0,1,0,0, respectively. The next $k+1$ nodes are assigned the advice 1. The last node at depth $2k+8$ is assigned the advice 0.

\noindent
{\bf Green marker}: Let $u$ be a node in $T$ that gets the marker green, if Algorithm \ref{alg:marking} is applied on $T$. For all the paths $P$ from $u$ to a node $v$, of length $(k-2)(\lfloor\frac{\tau}{k}\rfloor)$, such that $v$ is a descendant of $u$, $M(v)=green$, or $v$ is a leaf, and there is no blue internal node in the path, consecutive $2k+9$ nodes  of $P$ starting from $u$ as the top node are used to code the green marker. Give the first node $u$ the advice 0. The next $k+1$ nodes are assigned the advice 1. The next five nodes are assigned the advice 1,0,0,0,0, respectively. The next $k+1$ nodes are assigned the advice 1. The last node at depth $depth(u)+2k+8$ in $P$ is assigned the advice 0.

\noindent
{\bf Blue marker}: Let $u$ be a node in $T$ that gets the marker blue, if Algorithm \ref{alg:marking} is applied on $T$. For all the paths $P$ from $u$ to a node $v$, of length $(k-2)(\lfloor\frac{\tau}{k}\rfloor)$, such that $v$ is a descendant of $u$, $M(v)=green$, or $v$ is a leaf, and there is no blue internal node in the path, consecutive $2k+9$ nodes of $P$ starting from $u$ as the top node are used to code the blue marker. Give the first node $u$ the advice 0. The next $k+1$ nodes are assigned the advice 1. The next five nodes are assigned the advice 1,1,0,0,0, respectively. The next $k+1$ nodes are assigned the advice 1. The last node at depth $depth(u)+2k+8$ in $P$ is assigned the advice 0.

\noindent
{\bf Red marker}: Let $u$ be a node on a path $P$ of the first kind that gets the marker red, if Algorithm \ref{alg:marking} is applied on $T$.
Consecutive $2k+9$ nodes of $P$ starting from $u$ as the top node are used to code the red marker.
Give the first node $u$ the advice 0. The next $k+1$ nodes are assigned the advice 1. The next five nodes are assigned the advice 1,1,1,0,0, respectively. The next $k+1$ nodes are assigned the advice 1. The last node at depth $depth(u)+2k+8$ in $P$ is assigned the advice 0.

\noindent
{\bf Black marker}: Let $u$ be a node on a path $P$ of the second kind that gets the marker black, if Algorithm \ref{alg:marking} is applied on $T$.
Consecutive $2k+9$ nodes of $P$ starting from $u$ as the top node are used to code  the black marker.
Give the first node $u$ the advice 0. The next $k+1$ nodes are assigned the advice 1. The next five nodes are assigned the advice 1,1,1,1,0, respectively. The next $k+1$ nodes are assigned the advice 1. The last node at depth $depth(u)+2k+8$ in $P$ is assigned the advice 0.

\begin{proposition}\label{disjoint}
Sequences of nodes forming different markers are disjoint.
\end{proposition}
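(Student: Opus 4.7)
The strategy is to show that whenever two marker occurrences have distinct \emph{top nodes} (the root for white, the marked node $u$ otherwise), their node sequences are disjoint. All of the argument rests on a single quantitative inequality extracted from the standing hypotheses $\tau \ge k(k-2)(2k+10)+(k+1)$ and $k\ge 4$, namely
$$\lfloor \tau/k\rfloor \;\ge\; (k-2)(2k+10) \;\ge\; 2(2k+10) \;=\; 4k+20 \;>\; 2k+9.$$
I will state and use this at the very start.

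Structurally, each of the five markers is, by the construction preceding the proposition, a block of exactly $2k+9$ consecutive nodes lying on a downward path starting at its top node $u$. Hence the node set of any marker is contained in the set of descendants of $u$ at distance at most $2k+8$. Therefore, if two markers have tops $u_1\ne u_2$ that are \emph{incomparable} in the ancestor order, their node sets are automatically disjoint, since the descendant sets of two incomparable nodes in a tree are disjoint. The proposition reduces to the case where one top is a proper ancestor of the other along some root-to-leaf path $P$, and it then suffices to show $\mathrm{dist}_P(u_1,u_2)\ge 2k+9$ for every such ancestor-descendant pair of tops.

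I then verify this distance bound by a short case analysis driven by Algorithm~\ref{alg:marking}. Two blue/green tops along a common path are separated by exactly $(k-2)\lfloor \tau/k\rfloor \ge 2\lfloor\tau/k\rfloor$. A blue/green top and the first red (resp.\ first black) marker placed below it are at distance exactly $\lfloor \tau/k\rfloor$; consecutive red-red and consecutive black-black markers are likewise $\lfloor \tau/k\rfloor$ apart; the last red marker of a first-kind path is $\lfloor \tau/k\rfloor$ above the bottom green; and the last black marker of any second-kind path is at distance at least $2k+10$ from the terminal blue node by the explicit guard in the black-marking loop. For the white marker at the root versus any other marker, I show that every blue or green node $u$ satisfies $\mathrm{depth}(u) > \tau-(k-2)\lfloor \tau/k\rfloor \ge 2\lfloor\tau/k\rfloor \ge 4k+20 > 2k+8$, so the white block occupying depths $0,\ldots,2k+8$ cannot meet the block rooted at $u$; and since red/black nodes lie strictly below the nearest blue/green ancestor, they are even deeper.

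The principal obstacle is the depth invariant $\mathrm{depth}(u) > \tau-(k-2)\lfloor \tau/k\rfloor$ for every blue/green $u$, uniformly across all iterations of the while-loop of Algorithm~\ref{alg:marking}. I plan to prove it by induction on the iteration number, using two observations: in the first iteration the processed $v\in L$ is an original leaf that passed the ``no white ancestor within distance $\tau$'' test, so $\mathrm{depth}(v)>\tau$ and its ancestor $u$ at distance $(k-2)\lfloor\tau/k\rfloor$ satisfies the bound; in any later iteration the currently-deepest $v\in L$ was itself added as a blue node in a prior iteration, and if $\mathrm{depth}(v)\le \tau$ the same white-ancestor filter removes it before any new blue is created, so only $v$ with $\mathrm{depth}(v)>\tau$ ever produces a new blue ancestor. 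Once this invariant is in place, every sub-case of the analysis collapses to the master inequality $\lfloor \tau/k\rfloor \ge 2k+10$, and the proposition follows.
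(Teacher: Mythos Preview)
Your reduction to a pure distance bound between marker tops is clean and handles most pairs correctly, but it breaks down in exactly the case the paper singles out for special treatment: two \emph{blue} tops in ancestor--descendant relation (and similarly blue versus green). Your assertion that ``two blue/green tops along a common path are separated by exactly $(k-2)\lfloor\tau/k\rfloor$'' is not justified, and is in fact false for blue--blue. Here is why. Suppose leaves $\ell_1,\ell_2$ have $\mathrm{depth}(\ell_1)>\mathrm{depth}(\ell_2)$ and lie in different branches below some shallow node. Processing $\ell_1$ creates blue $u_1$; processing $\ell_2$ (which is not filtered, since $u_1$ is not its ancestor) creates blue $u_2$ at depth $\mathrm{depth}(\ell_2)-(k-2)\lfloor\tau/k\rfloor$, which can easily be a proper ancestor of $u_1$ at distance as small as $\mathrm{depth}(\ell_1)-\mathrm{depth}(\ell_2)$. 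With $k=4$ and $\lfloor\tau/k\rfloor=37$ one can make this distance $10<2k+9=17$, and when $u_1$ is later pulled from $L$ it is \emph{removed} by the blue-ancestor filter but \emph{not re-marked}, so both $u_1,u_2$ remain blue at termination. Thus your sufficient condition $\mathrm{dist}(u_1,u_2)\ge 2k+9$ fails.

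The proposition is nevertheless true, and the paper's proof supplies the missing idea: the marker block for a blue top $u$ is, by construction, laid down along a path from $u$ to a green descendant (or leaf) \emph{containing no internal blue node}. Hence if $u_2$ is blue and a descendant of $u_1$, then $u_2$ cannot lie on the path carrying $u_1$'s marker block; since any node in $u_2$'s marker block is a descendant of $u_2$, the unique $u_1$--to--that--node path would have to pass through $u_2$, a contradiction. So the two blocks are disjoint for a structural reason, not a metric one. Your incomparable-tops argument and your white-versus-other depth invariant are fine, and the remaining metric cases (red--red, black--black, red--black, green/blue versus their own red/black, black versus terminal blue) are exactly as you describe; you only need to replace the distance claim for blue--blue and blue--green by this no-internal-blue argument.
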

\begin{proof}
Let $u$ and $v$ be different nodes marked by Algorithm \ref{alg:marking}. Let $(u=u_0,u_1,u_2,\cdots,u_{2k+8})$ be a sequence of consecutive nodes with the top node $u$, and let
$(v=v_0,v_1,v_2,\cdots,v_{2k+8})$  be a sequence of consecutive nodes with the top node $v$. We prove that $u_i \ne v_j$, for $0\le i,j \le 2k+8$.

According to Algorithm \ref{alg:marking}, the distance between two red markers, two black markers, two green markers, a green marker and a red marker, the white and a green marker, the white and a blue marker, the white and a red marker, the white and a black marker, a red and a black marker, a black and a green marker, a red and a blue marker, is at least $\lfloor \frac{\tau}{k}\rfloor$. Since $\lfloor \frac{\tau}{k}\rfloor > 2k+9$, therefore $u_i \ne v_j$, for $0\le i,j \le 2k+8$ in these cases.

Let $u$ and $v$ be both marked blue by Algorithm \ref{alg:marking}. According to this algorithm, every blue marker has a descendant at distance $(k-2)\lfloor \frac{\tau}{k}\rfloor$, which is either green or a leaf, and there is no blue marker in the path to this descendant. Let $u'$ be such a descendant of $u$ and let $v'$ be such a descendant of $v$. Hence, $u$ (respectively $v$) does not belong to the path from $v$ to $v'$ (respectively from $u$ to $u'$). The sequence of $2k+9$ consecutive nodes forming the blue marker corresponding to the node $u$ (respectively $v$), with $u$ (respectively $v$) as the top node, is a part of the path from $u$ to $u'$ (respectively from $v$ to $v'$). Since the two paths from $u$ to $u'$ and from $v$ to $v'$ are disjoint, therefore, $u_i \ne v_j$, for $0\le i,j \le 2k+8$. If one of the nodes $u$ and $v$ is marked green and the other is marked blue by Algorithm \ref{alg:marking}, then the argument is similar as above.
%

The remaining case is when one of the nodes $u$ and $v$ is marked black by Algorithm \ref{alg:marking}, and the other is marked blue.
According to Algorithm \ref{alg:marking}, the distance between a blue marker and a black marker is at least $2k+10$. Hence,  $u_i \ne v_j$, for $0\le i,j \le 2k+8$, if $u$ is marked black and $v$ is marked blue or vice-versa.
\end{proof}

In view of Proposition \ref{disjoint}, every node $v$ can unambiguously identify the markers by seeing the advice given to nodes in $\cB(v,\tau)$, as explained below.
 Consider a sequence of $2k+9$ consecutive nodes, with advice $01^{k+1}a_1\dots a_{k+7}$ at these consecutive nodes, respectively.
 According to the marking strategy, if this sequence forms a marker, then the string of advice bits at these nodes must be one of the following.

\begin{enumerate}

\item $01^{k+1} 101001^{k+1}0$ or $01^{k+1} 001011^{k+1}0$, where the first bit corresponds to a node $z$ and the last bit corresponds to a node $z'$. The node $v$ identifies the marker as white in this case. If the node sees the first string, it identifies $z'$ as the root $r$. Otherwise, it identifies $z$ as the root $r$.

\item $01^{k+1} 100001^{k+1}0$ or $01^{k+1} 000011^{k+1}0$, where the first bit corresponds to a node $z$ and the last bit corresponds to a node $z'$. The node $v$ identifies the marker as green in this case.

\item $01^{k+1} 110001^{k+1}0$ or $01^{k+1} 000111^{k+1}0$, where the first bit corresponds to a node $z$ and the last bit corresponds to a node $z'$. The node $v$ identifies the marker as blue in this case.

\item $01^{k+1} 111001^{k+1}0$ or $01^{k+1} 001111^{k+1}0$, where the first bit corresponds to a node $z$ and the last bit corresponds to a node $z'$. The node $v$ identifies the marker as red in this case.

    \item The node sees the string $01^{k+1} 111101^{k+1}0$ or $01^{k+1} 011111^{k+1}0$. The node $v$ identifies the marker as black in this case.
\end{enumerate}
In cases 2.--5., if the node $v$ sees the first string, it identifies the direction to the root as $z'$ to $z$. Otherwise, it identifies the direction to the root as $z$ to $z'$.

Nodes that are not in a segment coding a marker are called {\em non-marker nodes}.
The non-marker nodes of a  path of the first kind with the top node $u$ are used to code the path from
 $u$ to  the root $r$. After the coding of the markers described above, the number of non-marker nodes in a path of the first kind is
$(k-2)(\lfloor \frac{\tau}{k}\rfloor -2k-9)$.
The following two lemmas show that $(k-2)(\lfloor \frac{\tau}{k}\rfloor -2k-9)$ nodes are indeed sufficient to code the path from $u$ to $r$.

\begin{lemma}\label{lem:large}
Let $T$ be an $n$-node tree of diameter $D=cn+o(n)$, where $n \ge \frac{200}{1-c}$. Let $s(v,r)$ be the sequence of port numbers corresponding to the path $P$ from $v$ to $r$. Then the length of the binary sequence  $(s(v,r))^*$ coding $s(v,r)$  with $\lambda$ colors is at most $\tau'+1$  for
 every node $v\in T$ of depth at most $ \lceil\frac{D}{2}\rceil-\tau'$.
\end{lemma}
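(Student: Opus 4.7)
The plan is to mimic the proof of Lemma \ref{lem:path}, now applied to the $\lambda$-ary encoding and with time parameter $\tau'$ in place of $\tau$, and then match the resulting bound against the defining equation for $\beta_2$.

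First I would bound the length of $(s(v,r))^*$ using the coding rules from the preliminaries: it is at most $2\sum_{i=1}^{h}(\lfloor\log_\lambda p_i\rfloor+1)$, where $h\le\lceil D/2\rceil-\tau'$ is the depth of $v$ and $p_1,\dots,p_h$ are the port numbers along the path $P$ from $v$ to $r$. Since $p_i<d_i$, where $d_i$ is the degree of the $i$-th node of $P$, each term is bounded by $\log_\lambda d_i+1$, so the length is at most $2h+2\sum_{i=1}^{h}\log_\lambda d_i$.

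Next I would estimate $\sum_{i=1}^{h}d_i$. Because $r$ is a central node of $T$, $T$ contains a branch from $r$ of length $\lceil D/2\rceil$ and another, disjoint, branch from $r$ of length $\lfloor D/2\rfloor$; at least one of them is disjoint from $P$ (after $r$) and contributes $\ge\tau'$ nodes of depth larger than $\lceil D/2\rceil-\tau'$, while the other contributes $\ge\lfloor D/2\rfloor$ further nodes outside $P$. Counting edges incident to $P\setminus\{r\}$, as in Lemma \ref{lem:path}, yields $\sum_{i=1}^{h}d_i\le n-\tau'-\lfloor D/2\rfloor$. By concavity of the logarithm, $\sum\log_\lambda d_i\le h\log_\lambda\bigl(\frac{n-\tau'-\lfloor D/2\rfloor}{h}\bigr)$, so the length is at most
\[
2h\left(1+\log_\lambda\!\left(\frac{n-\tau'-\lfloor D/2\rfloor}{h}\right)\right).
\]

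I would then show that this expression is monotonically increasing in $h$ on the interval $(0,\lceil D/2\rceil-\tau']$: its derivative has sign $\log_\lambda(C/h)+1-1/\ln\lambda$, which is positive whenever $C/h$ is larger than a small constant, and a direct computation using $C=n-\tau'-\lfloor D/2\rfloor$, $h\le\lceil D/2\rceil-\tau'$, and $\beta_2<1/2$ shows $C/h$ is bounded below by a constant exceeding that threshold. Therefore the maximum is attained at $h=\lceil D/2\rceil-\tau'$.

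Finally I would evaluate the bound at $h=\lceil D/2\rceil-\tau'$, translating it to the variables appearing in the defining equation of $\beta_2$: $\lceil D/2\rceil-\tau'\le D(\tfrac12-\beta_2+2\epsilon)$ and $n-\tau'-\lfloor D/2\rfloor\le n(1-c/2+\epsilon)$, using $D=cn+o(n)$ and the hypothesis $n\ge 200/(1-c)$ to force the $o(n)$ error, the $\lceil\cdot\rceil/\lfloor\cdot\rfloor$ corrections, and the difference between $c/2-\beta_2 c$ and $(\lceil D/2\rceil-\tau')/n$ all to fit inside the slack $\epsilon=(1-c)/200$. Substituting the defining identity for $\beta_2$ then gives length $\le \beta_2 D+1\le\tau'+1$. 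The main obstacle is exactly this last step: making rigorous that the slack $\epsilon$ absorbs every rounding and $o(n)$ term uniformly and checking the monotonicity claim, since the precise constants in the definition of $\beta_2$ were chosen for this estimate to just barely close.
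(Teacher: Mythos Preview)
Your proposal is correct and follows essentially the same route as the paper: bound $\sum d_i\le n-\tau'-\lfloor D/2\rfloor$ via the two disjoint branches from $r$, apply concavity of $\log_\lambda$ to get $2h(1+\log_\lambda(C/h))$, take $h=\lceil D/2\rceil-\tau'$, and close against the defining equation for $\beta_2$ using the $\epsilon$ slack. The only presentational differences are that the paper first treats the exact case $D=cn$ and defers $D=cn+o(n)$ to a continuity remark, and that it leaves the monotonicity in $h$ implicit, whereas you make both explicit.
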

\begin{proof}
As before, we first do the proof for a tree with diameter $D=cn$.\\

$D=cn$ and $\tau'=\lfloor \beta_2 D \rfloor$ implies $\tau'=\lfloor \beta_2 cn \rfloor$.

We have
$$\beta_2=2(\frac{1}{2}-\beta_2+2\epsilon) (\log_{\lambda} \left( \frac{1-\frac{c}{2}+\epsilon}{\frac{c}{2}-\beta_2c}\right)+1),$$
where $\epsilon=\frac{1-c}{200}$.

$$\beta_2cn=2(\frac{cn}{2}-\beta_2 cn+2\epsilon c n) (\log_{\lambda} \left( \frac{n-\frac{cn}{2}+\epsilon n}{\frac{cn}{2}-\beta_2cn}\right)+1).$$

Since,   $\epsilon = \frac{1-c}{200}$ and $n \ge \frac{200}{1-c}$, therefore, $\epsilon n > 1$.
Hence,

$$\beta_2cn \ge 2(\frac{cn}{2}-\beta_2 cn+2) \left(\log_{\lambda} \left( \frac{n-\frac{cn}{2}+1}{\frac{cn}{2}-\beta_2cn}\right)+1\right).$$

$$\beta_2cn \ge 2\left((\left\lceil\frac{cn}{2}\right\rceil -1)-(\lfloor\beta_2 cn\rfloor +1)+2\right) \left(\log_{\lambda} \left( \frac{n-\frac{cn}{2}+1}{\frac{cn}{2}-\beta_2cn}\right)+1\right).$$

$$(\lfloor\beta_2cn\rfloor +1) \ge 2(\lceil\frac{cn}{2}\rceil-\lfloor\beta_2 cn\rfloor) \left(\log_{\lambda} \left( \frac{n-\lfloor\frac{cn}{2}\rfloor-\lfloor \beta_2 c n\rfloor}{\lceil\frac{cn}{2}\rceil-\lfloor\beta_2cn\rfloor}\right)+1\right).$$

$$(\tau'+1)\ge 2(\left\lceil \frac{D}{2}\right\rceil -\tau') \left(\log_{\lambda} \left( \frac{n-\lfloor\frac{D}{2}\rfloor-\tau'}{\lceil \frac{D}{2}\rceil-\tau'}\right )+1\right).$$

Let $v$ be a node of $T$ of depth at most $ \lceil\frac{D}{2}\rceil-\tau'$.
Let $P=(v,u_1,u_2,\cdots,u_{\ell-1},r)$ be the path from $v$ to $r$. Let $d(u)$ denote the degree of node $u$.

Since the depth of $v$ in $T$ is at most $\lceil\frac{D}{2}\rceil-\tau'$, there exist at least $\tau'$ nodes in $T$ with depth larger than $\lceil\frac{D}{2}\rceil-\tau'$.
Also, since the diameter of $T$ is $D$, there exists at least one path of length $\lfloor\frac{D}{2}\rfloor$ with no common node with $P$ other than $r$.
Then
$d(v)+ \sum_{i=1}^{\ell-1} d(u_i) \le n-\tau'-\lfloor\frac{D}{2}\rfloor$.
 The sum of logarithms of these degrees is $ \log_{\lambda} d(v) + \sum_{i=1}^{\ell-1} \log_{\lambda} d(u_i) = \log_{\lambda} \left(d(v)\prod_{i=1}^{\ell} d(u_i)\right)$. The value of $d(v)\prod_{i=1}^{\ell} d(u_i)$ is maximized when $d(v)=d(u_i)= \frac{d(v)+ \sum_{i=1}^{\ell} d(u_i)}{\ell}$ for $1\le i \le \ell-1$. Hence, this sum of logarithms is at most   $\ell \log_{\lambda} (\frac{n-\tau'-\lfloor\frac{D}{2}\rfloor}{\ell}) \le (\lceil\frac{D}{2}\rceil-\tau') \log (\frac{n-\tau'-\lfloor\frac{D}{2}\rfloor}{\lceil\frac{D}{2}\rceil-\tau'})$.

 Let $s(v,r)=(p_1,p_2,\cdots,p_\ell)$ be the sequence of port numbers corresponding to the path $P$ from $v$ to $r$. Then the length of $s(v,r)^*$ is at most $2\sum_{i=1}^\ell(\lfloor \log_{\lambda} p_i\rfloor +1) \le 2\sum_{i=1}^\ell( \log_{\lambda} p_i +1) \le 2(\lceil\frac{D}{2}\rceil-\tau') (\log_{\lambda} (\frac{n-\tau'-\lfloor\frac{D}{2}\rfloor}{\lceil\frac{D}{2}\rceil-\tau'})+1) \le \tau'+1$.

 Therefore, the length of the sequence  $(s(v,r))^*$ is at most $\tau'+1$,  for
 every node $v\in T$ of depth at most $ \lceil\frac{D}{2}\rceil-\tau$.

The generalization of the reasoning to the case $D=cn+o(n)$ follows from continuity arguments. It can be observed that the real $\beta_2$ in this case can be found
arbitrarily close to that derived for the case $D=cn$, for sufficiently large $n$.
\end{proof}

\begin{lemma}
The length of the binary sequence $s'$, computed from $s(v,r)^*$, where $v$ is
the highest node of a blue or a green marker, is at most $(k-2)(\lfloor \frac{\tau}{k}\rfloor -2k-9)$.
\end{lemma}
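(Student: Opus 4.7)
The plan is to chain Lemma \ref{lem:large} with an elementary counting argument that tracks how the insertion of separator bits inflates the encoded string.

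First, I will use the structural property of the marking phase (statement 3 right after Algorithm \ref{alg:marking}) that every blue or green node lies at depth at most $\lceil \tfrac{D}{2}\rceil - (k-2)\lfloor\tfrac{\tau}{k}\rfloor$. Next I will verify, using the standing hypotheses $\gamma\ge (k+1)/(k-3)$ and $\tau\ge k(k-2)(2k+10)+(k+1)$, that $(k-2)\lfloor\tfrac{\tau}{k}\rfloor\ge \tau'$. Since $\tau=\gamma\tau'$ gives $\tau'\le \tau(k-3)/(k+1)$, the inequality $(k-2)(\tau-k+1)/k\ge \tau(k-3)/(k+1)$ reduces after clearing denominators to $\tau\ge (k+1)(k-2)(k-1)/(2(k-1))=(k+1)(k-2)/2$, which is far smaller than the assumed lower bound on $\tau$. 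Consequently $v$ has depth at most $\lceil \tfrac{D}{2}\rceil-\tau'$, so Lemma \ref{lem:large} applies and yields $|s(v,r)^*|\le \tau'+1$.

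Next I will bound $|s'|$. By the construction recalled in the paragraph preceding the white-marker definition, $s'$ is obtained from $s(v,r)^*$ by partitioning it into consecutive blocks of length $k$ (with the last block possibly shorter) and inserting a $0$ between consecutive blocks. Writing $L=|s(v,r)^*|$, the number of blocks is $\lceil L/k\rceil$, hence $|s'|= L+\lceil L/k\rceil -1\le L(k+1)/k$. Combined with Lemma \ref{lem:large} this gives
\[
|s'|\le (\tau'+1)\,\frac{k+1}{k}.
\]

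The final step is a direct comparison between this upper bound and the target $(k-2)(\lfloor\tfrac{\tau}{k}\rfloor - 2k - 9)$. Using $\lfloor\tfrac{\tau}{k}\rfloor\ge (\tau-k+1)/k$ on the right-hand side and $\tau'\le \tau(k-3)/(k+1)$ on the left-hand side, after multiplying through by $k$ the inequality reduces to
\[
\tau\,\frac{k-3}{1} + (k+1) \;\le\; (k-2)(\tau - k + 1) - k(k-2)(2k+9),
\]
and then, after collecting the $\tau$-coefficients (which combine to $1$ on the left), to
\[
\tau \;\ge\; (k+1) + (k-2)(k-1) + k(k-2)(2k+9) \;=\; k(k-2)(2k+9) + (k^{2}-2k+3).
\]
Since the hypothesis gives $\tau\ge k(k-2)(2k+10)+(k+1)= k(k-2)(2k+9) + (k^{2}-k+1)$ and $k^{2}-k+1\ge k^{2}-2k+3$ for $k\ge 3$, the inequality holds. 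The hard part is only keeping the arithmetic honest: the slack $(k+1)/(k-3)$ per unit of $\tau'$ that the hypothesis on $\gamma$ provides must absorb the additive constant $(k-2)(2k+9)$ coming from the reserved marker space, and this is precisely what the numerical lower bound on $\tau$ is tailored to guarantee.
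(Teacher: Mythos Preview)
Your proof is correct and follows essentially the same route as the paper: bound the depth of $v$ via the marking property, invoke Lemma~\ref{lem:large} to get $|s(v,r)^*|\le\tau'+1$, inflate by the factor $(k+1)/k$ for the inserted separator bits, and then verify algebraically that $(k-2)(\lfloor\tau/k\rfloor-2k-9)\ge\frac{k+1}{k}(\tau'+1)$ using the standing hypotheses on $\gamma$ and $\tau$. The only cosmetic difference is that the paper factors out $\frac{k+1}{k}$ early and uses $\lfloor\tau/k\rfloor\ge\tau/k-1$, whereas you use $\lfloor\tau/k\rfloor\ge(\tau-k+1)/k$ and expand directly; both reductions land on an inequality in $\tau$ that the assumed lower bound $\tau\ge k(k-2)(2k+10)+(k+1)$ comfortably covers.
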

\begin{proof}
According to the marking strategy, the length of the path from $v$ to the root $r$ is at most
$\lceil \frac{D}{2}\rceil-(k-2)\lfloor\frac{\tau}{k}\rfloor \le \lceil \frac{D}{2}\rceil-(k-2)\frac{\tau}{k} \le \lceil \frac{D}{2}\rceil - \frac{(k-2)(k+1)}{k(k-3)}\tau' \le \lceil \frac{D}{2}\rceil-\tau'$.

By Lemma \ref{lem:large}, the length of the binary string $s(v,r)^*$ is at most $\tau'+1$. The binary string $s'$ is computed from $s(v,r)^*$ by inserting a 0 after every $k$-th bit of $s(v,r)^*$. Hence, $|s'|\le \frac{k+1}{k}(\tau'+1)$. Now,

$(k-2)(\lfloor \frac{\tau}{k}\rfloor -2k-9) \ge (k-2)(\frac{\tau}{k} -2k-10) = \frac{k-2}{k}\tau-(k-2)(2k+10)= \frac{k+1}{k}(\frac{k-2}{k+1}\tau-\frac{k(k-2)(2k+10)}{k+1})$.

Since $\tau \ge k(k-2)(2k+10)+(k+1)$, we have $\frac{k(k-2)(2k+10)}{k+1}\le \frac{\tau}{k+1}-1$. Therefore,

$(k-2)(\lfloor \frac{\tau}{k}\rfloor -2k-9)\ge \frac{k+1}{k}(\frac{k-2}{k+1}\tau-\frac{\tau}{k+1}+1)\ge \frac{k+1}{k}(\frac{1}{\gamma}\tau+1)=\frac{k+1}{k}(\tau'+1)\ge |s'|$.
\end{proof}

We are now ready to formulate the two final algorithms working for $2$-valent advice: Algorithm \textsc{Bounded Valency Advice}$(T,\tau)$ which assigns $2$-valent advice
of constant size to all nodes, and Algorithm  \textsc{Bounded Valency Election$(v,\tau)$}, which uses this advice to perform election in time $\tau=\gamma \tau'$, where $\tau'=\lfloor\beta_2D\rfloor$ and $\gamma$ is any constant greater than 1. Let $k>3$ be an  integer such that $\gamma>\frac{k+1}{k-3}$.

\vspace*{1cm}
\noindent
{\bf \textsc{Bounded Valency Advice}$(T,\tau)$}\\
If $\tau < k(k-2)(2k+10)+k+1$ or $n<\frac{200}{1-c}$, then the advice is assigned to each node of the tree in the following way. Consider a mapping $f: V(T) \rightarrow \{0,1\}$. Let $T(f)$ be the  node-colored map of $T$ corresponding to the mapping $f$. Let $\cF=\{f| f: V(T) \rightarrow \{0,1\}\}$. Let $f'\in \cF$ be the mapping such that for any two nodes $v_1, v_2 \in V(T(f'))$, $\cB(v_1,\tau) \ne \cB(v_1,\tau)$. Such a mapping exists because $\xi_2(T) \leq \tau$, by assumption. The mapping $f'$ can be found by computing $\cB(v,\tau)$ for each node $v$, for all the mappings in $\cF$. (Recall that this work is done by the oracle, so exhaustive search can be used, because time does not matter here.)  The advice assigned to each node $v$ is the binary string unambiguously coding the couple $(T(f'),f'(v))$.

For $\tau \geq k(k-2)(2k+10)+k+1$ and $n\ge \frac{200}{1-c}$, the advice is assigned to each node of the tree in two steps. In the first step, one-bit advice is assigned to some nodes of the tree to code markers of five types, as explained at the beginning of this section.
In the second step, we apply  Algorithms \ref{alg:codingp1}, and \ref{alg:codingp2} in the following way.
 Consider segments of $2k+9$ consecutive nodes,  coding the markers of various colors. Call such segments green, red etc. , if the corresponding marker is green, red etc. Contract any such segment into one node, giving it the corresponding color, and apply Algorithms \ref{alg:codingp1} and \ref{alg:codingp2},
 with the following modification.
 The length of each subsequence $s$ is $(k-2)(\lfloor \frac{\tau}{k}\rfloor -2k-9)$ instead of
$(k-2)(\lfloor \frac{\tau}{k}\rfloor -1)$, and the length of the subsequences $s_i$ of $s$  is $(\lfloor \frac{\tau}{k}\rfloor -2k-9)$ instead of $(\lfloor \frac{\tau}{k}\rfloor -1)$.

\vspace*{1cm}
\noindent
{\bf \textsc{Bounded Valency Election$(v,\tau)$}}\\
If $\tau < k(k-2)(2k+10)+k+1$ or $n<\frac{200}{1-c}$, then the advice provided to the node $v$ is a binary code of $(T(f),f(v))$, for some mapping $f$. The node $v$ learns $\cB(v,\tau)$ in time $\tau$, and identifies its unique position in $T(f)$. Then it outputs the sequence of port numbers corresponding to the path from $v$ to $r$ by seeing this path in $T(f)$.

Otherwise, the advice provided to each node is either 0 or 1. We apply Algorithm \ref{alg:dcoding}, using which every node
performs leader election, in the following way. Consider segments of $2k+9$ consecutive nodes,  coding the markers of various colors.
The node $v$ unambiguously identifies these segments in $\cB(v,\tau)$, as explained above. Call such segments green, red etc., if the corresponding marker is green, red etc. Contract any such segment into one node, giving it the corresponding color, and apply Algorithm \ref{alg:dcoding} to this contracted tree, with the following
modification.
The length of each subsequence $s$ is $(k-2)(\lfloor \frac{\tau}{k}\rfloor -2k-9)$ instead of
$(k-2)(\lfloor \frac{\tau}{k}\rfloor -1)$, and the length of the subsequences $s_i$ of $s$  is $(\lfloor \frac{\tau}{k}\rfloor -2k-9)$ instead of $(\lfloor \frac{\tau}{k}\rfloor -1)$.

The following result estimates the size of advice given to the nodes by Algorithm \textsc{Bounded Valency Advice}$(T,\tau)$, whenever the allocated time $\tau$
is at least $\gamma\tau'$, for a constant $\gamma>1$.

\begin{theorem}
For any real constant $\gamma>1$, the advice assigned to each node $v$ of the tree $T$ by Algorithm \textsc{Bounded Valency Advice}$(T,\tau)$ is  of constant size.
\end{theorem}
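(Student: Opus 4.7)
The plan is a straightforward case analysis, following the two branches of Algorithm \textsc{Bounded Valency Advice}$(T,\tau)$.

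First I would handle the easier branch, where $\tau \geq k(k-2)(2k+10)+k+1$ and $n\geq \frac{200}{1-c}$. Here the algorithm assigns advice in two steps: the markers are coded by segments of $2k+9$ consecutive nodes, and by construction every node in such a segment receives a single bit (0 or 1) of advice; the non-marker nodes are assigned one of the two colors $\{0,1\}$ via the (modified) Algorithms \ref{alg:codingp1} and \ref{alg:codingp2}. Thus each node's advice is exactly one bit, which is trivially of constant size.

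The remaining branch is the case $\tau < k(k-2)(2k+10)+k+1$ or $n<\frac{200}{1-c}$. Here the advice given to a node $v$ is a binary encoding of the pair $(T(f'),f'(v))$, whose length depends only on $n$, so the key step is to argue that $n$ itself is bounded by a constant depending only on $c,\gamma,\lambda$ (equivalently, on $k$ and $\beta_2$). The second disjunct directly yields $n < \frac{200}{1-c}$, which is constant. For the first disjunct, recall that $\tau=\gamma \tau' = \gamma\lfloor \beta_2 D\rfloor$ with $\gamma,\beta_2$ positive constants, and that $D=cn+o(n)$ with $c>0$. Hence for all sufficiently large $n$ we have $\tau \geq \tfrac{\gamma\beta_2 c}{2}\,n$, so a constant upper bound on $\tau$ forces a constant upper bound on $n$. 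Combining the two sub-cases, there exists a constant $N_0=N_0(c,\gamma,\lambda)$ such that $n\leq N_0$ whenever this branch is entered. An $n$-node port-labeled tree together with a function $f':V\to\{0,1\}$ and a distinguished vertex can be encoded in $O(n\log n)$ bits, which for $n\leq N_0$ is $O(1)$.

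Putting the two cases together yields the claim. No step requires a delicate estimate; the only point deserving care is the implication ``$\tau$ bounded $\Rightarrow$ $n$ bounded'' in Case~1, which relies crucially on the standing assumption $D=cn+o(n)$ with $c>0$ (so that $D$ and hence $\tau$ grow linearly with $n$). This is the only place where the hypothesis on the diameter is used in the argument.
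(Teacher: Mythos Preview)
Your proposal is correct and follows essentially the same two-case analysis as the paper's proof: one bit of advice in the main branch, and a constant-size encoding of $(T(f'),f'(v))$ in the small-parameter branch because $n$ is bounded. If anything, your treatment of the disjunction $\tau < k(k-2)(2k+10)+k+1$ \emph{or} $n<\frac{200}{1-c}$ is slightly more careful than the paper's, which tacitly merges the two sub-cases.
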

\begin{proof}
First consider the case when $\tau < k(k-2)(2k+10)+k+1$ or $n<\frac{200}{1-c}$. Since $\gamma$ is constant, $k$ is also constant and hence $\tau$ is constant in this case. Since, $\tau=\gamma \tau'$, $\tau'=\lfloor \beta_2D\rfloor$, and $D=cn+o(n)$, therefore $D$ and $n$ are constant in this case. Hence, the size of the tree is constant. The advice provided to each node codes the colored map $T(f')$ and the bit  given to the node in the map, and hence this code is of constant size.

If $\tau \ge k(k-2)(2k+10)+k+1$ and $n \ge \frac{200}{1-c}$,  then according to Algorithm \textsc{Bounded Valency Advice}$(T,\tau)$, every node gets either 0 or 1 as advice.  Hence the advice provided to every node is of size 1 in this case.
%
\end{proof}

We finally prove the correctness of Algorithm  \textsc{Bounded Valency Election$(v,\tau)$}.
%

\begin{theorem}\label{th:correctness}
Every node $v$ of a tree $T$ with $\lambda$-election index at most $\tau$, executing  Algorithm  \textsc{Bounded Valency Election$(v,\tau)$} in time $\tau$,
outputs the sequence of port numbers corresponding to the path from $v$ to~$r$.
\end{theorem}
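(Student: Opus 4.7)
The plan is to split the analysis into the two cases handled by Algorithm \textsc{Bounded Valency Election}. The easy case is $\tau<k(k-2)(2k+10)+k+1$ or $n<\frac{200}{1-c}$: here $\tau$, $D$ and $n$ are all constants, so there are only constantly many trees to consider. Because $\xi_2(T)\leq \tau$, the oracle can find a mapping $f'$ all of whose labeled balls $\cB(v,\tau)$ are distinct. Every node receives the colored map $T(f')$ together with its own color as advice; after acquiring $\cB(v,\tau)$ in $\tau$ rounds, it recognizes itself in the map (uniqueness of the labeled ball implies a unique embedding) and reads off the sequence of ports along the unique path to $r$ in the map.

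The bulk of the proof is the second case, in which the advice at each node is one bit. Here I would first show that every node $v$, after getting $\cB(v,\tau)$, can correctly parse every marker whose segment is entirely contained in $\cB(v,\tau)$, together with its orientation towards the root. This is essentially routine once Proposition~\ref{disjoint} is invoked: the five marker patterns $01^{k+1}\alpha1^{k+1}0$ are characterized by the five forbidden-in-data middle strings, are disjoint from each other, and never occur ``by accident'' among the non-marker bits because $1^{k+1}$ cannot appear inside the encoded payload of the path (this is exactly what the padding of a 0 after every $k$-th data bit in the construction of $s'$ guarantees). Hence the contraction of every such segment into a single colored super-node that is carried out inside the election routine is well-defined and consistent with the contraction performed inside the advice routine.

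After that reduction, the task becomes: on the contracted coloured tree produced by Algorithm~\ref{alg:marking}, every node $v$, seeing $\cB(v,\tau)$ in the contracted model, outputs a correct port sequence to $r$. I would verify the three cases of Algorithm~\ref{alg:dcoding} in turn. Case~1 (a white node is visible) is immediate: $v$ just reads off the path it sees. For Case~2, Algorithm~\ref{alg:codingp1} ensures that reading the non-marker advice from top to bottom along any first-kind path with top $u$ reproduces $s(u,r)^*$; combined with the visible port sequence from $v$ to $u$ (if $u$ is an ancestor of $v$) or with the appropriate truncation (if $u$ is a descendant of $v$), this yields $s(v,r)$. For Case~3, one uses properties (1)--(3) of the marking: by~(1), $v$ has a blue-or-green ancestor $u$ at distance $\leq (k-2)\lfloor\tau/k\rfloor$; by~(2), $u$ has a green descendant at distance exactly $(k-2)\lfloor\tau/k\rfloor$ through a path with no internal blue, along which red and black markers alternate at every $\lfloor\tau/k\rfloor$ steps. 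Hence $v$ sees $u$ itself, the $h$ black markers between $u$ and $v$, and at least $k-h-1$ red markers of the chosen descending path, so the two pieces $s'$ (read from the upward segment) and $s''$ (read from the downward segment) concatenate to the full string of $k-2$ blocks placed along the corresponding first-kind path. By the coding rule of Algorithm~\ref{alg:codingp1}--\ref{alg:codingp2}, this concatenation is exactly $s(u,r)^*$; prefixing the visible port sequence from $v$ to $u$ completes $s(v,r)$.

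The main obstacle is to make the Case~3 argument airtight, i.e.\ to confirm that the $h$ black markers between $u$ and $v$ together with the $k-h-1$ highest red markers below $u$ are all inside $\cB(v,\tau)$, and that they index contiguous blocks $s_{k-2},s_{k-3},\ldots$ and $s_1,s_2,\ldots$ respectively with no overlap and no gap. This is where the choice of $(k-2)\lfloor\tau/k\rfloor$ for the blue--green distance, the ``at least $2k+10$'' gap in the last loop of Algorithm~\ref{alg:marking}, and the reversed indexing in Algorithm~\ref{alg:codingp2} (the $s_{k-i-2}(j)$ rule) all interlock. One should verify by an explicit counting argument that $v$'s ball radius $\tau=\gamma\tau'\geq (k-1)\lfloor\tau/k\rfloor$ is large enough to cover the required $h+(k-h-1)=k-1$ markers on both sides of $u$. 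Granted those bookkeeping checks, the correctness in Case~3 follows, and the three cases are mutually exhaustive (if Cases~1 and~2 both fail, then by property~(2) the closest blue/green ancestor exists and the descending first-kind path is only partially visible, forcing Case~3). Together with the easy edge case, this finishes the proof.
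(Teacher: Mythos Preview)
Your proposal follows essentially the same case analysis as the paper's proof (your Cases~1--3 correspond to the paper's Cases~1--4, with its Cases~2 and~3 merged into your Case~2), and the logic is correct. One small slip: in your Case~3 you write that along the first-kind path from $u$ to its green descendant ``red and black markers alternate''; in fact that path carries only \emph{red} markers, while the \emph{black} markers sit on the separate second-kind path from $u$ down to $v$---you use this correctly two lines later, so just fix the phrasing.
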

\begin{proof}
The proof is trivial when $\tau < k(k-2)(2k+10)+k+1$ or $n<\frac{200}{1-c}$. Suppose that
$\tau \ge k(k-2)(2k+10)+k+1$ and $n \ge \frac{200}{1-c}$.

\noindent
{\bf Case 1.} Node $v$ sees a white marker in  $\cB(v,\tau)$.

It identifies the root $r$ as one of the two extremities of this marker,  as explained above, by seeing the advice string on the nodes of the marker. The node $v$ outputs the sequence of port numbers corresponding to the  path from $v$ to $r$, reading it from $\cB(v,\tau)$.

\noindent
{\bf Case 2.} Node $v$ sees a path of length $(k-2)\lfloor\frac{\tau}{k}\rfloor+2k+9$ whose top $2k+9$ nodes form a green or a blue marker, the lowest $2k+9$ nodes form a green marker and which does not contain any internal blue marker.

Let $u$ be the highest node of the top blue or green marker and let $u'$ be the highest node of the bottom green marker.
According to Algorithm \ref{alg:codingp1}, the advice pieces at the non-marker nodes on the path from $u$ to $u'$, read top-down, form the binary sequence $s$ representing the sequence of port numbers corresponding to the path from $u$ to $r$. Since the node can see the entire path between $u$ and $u'$, it correctly computes $s$ by collecting the advice from the non-marker nodes top to bottom between these two markers. The node computes the correct path to the leader according to Algorithm \ref{alg:dcodingp1}.

\noindent
{\bf Case 3.} Node $v$ sees a path of length $(k-2)\lfloor\frac{\tau}{k}\rfloor$ whose top $2k+9$ nodes form a green or a blue marker, the lowest node is a leaf and which does not contain any blue marker.

Let $u$ be the highest node of the top blue or green marker and let $u'$ be the leaf. The path from $u$ to $u'$ is of the first kind, and hence the advice pieces at the non-marker nodes on the path from $u$ to $u'$, read top-down, form the binary sequence $s$ representing the sequence of port numbers corresponding to the path from $u$ to $r$. Therefore, by a similar argument as in Case 2, the node $v$ computes the correct path to the leader from itself.

\noindent
{\bf Case 4.} Case 1, Case 2, and Case 3 are not true.

According to the marking strategy, in time $\tau$ the node $v$ sees at least one green or blue marker, whose top node $u$ is an ancestor of $v$. Since $u$ is the top node of a green or a blue marker, according to Algorithm \ref{alg:marking}, there exists a descendant $u'$ at distance $(k-2)\lfloor\frac{\tau}{k}\rfloor$ from $u$,  such that $u'$ is a top node of a green marker or $u'$ is a leaf.
Let there be $h$ black markers in the path from $v$ to $u$.
In time $\tau$, $v$ can see at least $k-h-1$ highest red markers on the path from $u$ to $u'$.
Then, according to Algorithm \ref{alg:codingp2}, the advice pieces at the non-marker nodes between the $h$ black markers, read top-down, form the
concatenation $s_{k-h}s_{k-h+1}\cdots s_{k-2}$. Denote this concatenation by $s'$. The advice pieces at the
non-marker nodes between $u$ and these $k-h-1$ highest red markers, read top-down, form
the concatenation $s_1s_2\cdots s_{k-h-1}$. Denote this concatenation by $s''$. Therefore, the node $v$ computes the sequence $s$ correctly by computing $s''s'$. The node computes the correct path to the leader according to Algorithm \ref{alg:dcodingp2}.
\end{proof}

{\bf Remark.}
In the case of $\lambda$-valent advice for $\lambda>2$, no changes are needed in the solution given above, because markers can be coded using the first two colors among $c_1,c_2,\dots ,c_{\lambda}$, and the non-marker nodes get the advice prescribed by Algorithm  \ref{alg:coding}, which was formulated for any $\lambda \geq 2$.

\subsubsection{Estimating $\beta_2-\beta_1$}

In the two previous sections, we proved that for an $n$-node tree with diameter $D=cn+o(n)$, there exist two positive reals $\beta_1$ and $\beta_2$, depending only on $c$ and $\lambda$, such that any leader election algorithm, working in time $\tau=\lfloor \beta D \rfloor$, for any constant $\beta<\beta_1$ requires advice of size $\Theta(n)$ and there exists a leader election algorithm, working in time $\tau=\lfloor \beta D \rfloor$, for any constant $\beta>\beta_2$, with advice of constant size.
Hence, the time values $\beta_1D  \leq \tau \leq \beta_2D$ are the part of the time spectrum that our (tight) results do not cover. In this section we show that this unchartered  region is small. In particular, we show that  $\beta_2-\beta_1<1/8$, for any $c$ and $\lambda$.

\begin{figure}[h]
\centering
\includegraphics[width=0.6\textwidth]{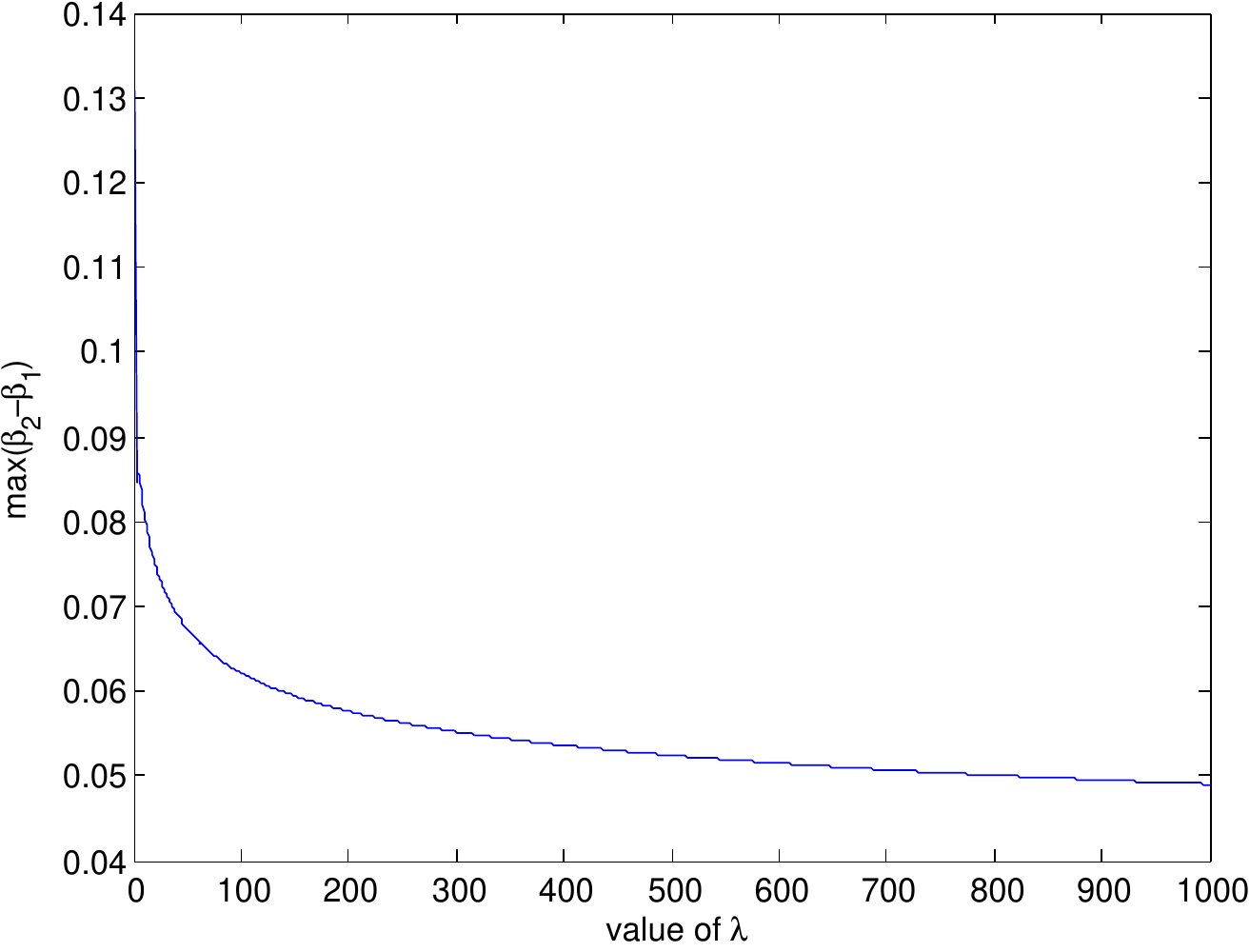}
\caption{Maximum value of $\beta_2-\beta_1$ for different values of $\lambda$}
\label{fig:lambda}
\end{figure}

\begin{figure}[h]
\centering
\includegraphics[width=0.6\textwidth]{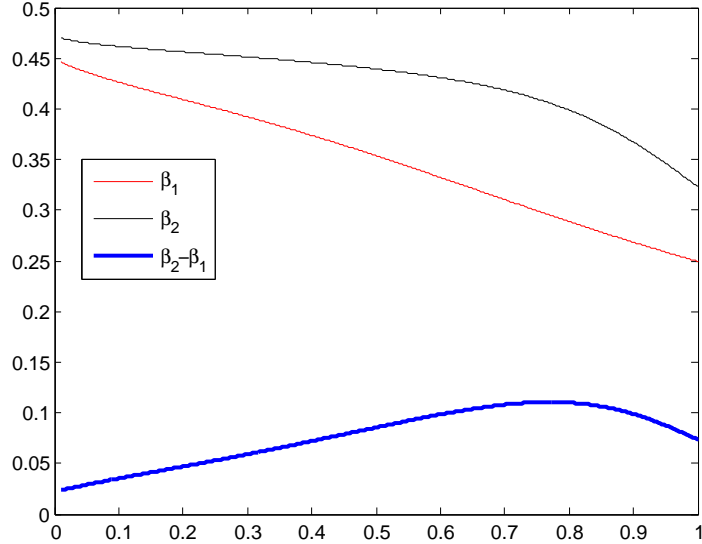}
\caption{Curves for $\beta_1$, $\beta_2$ and $\beta_2-\beta_1$ as functions of $c$, for $\lambda=2$}
\label{fig:lambda_2}
\end{figure}

We plot the two equations $\beta_1= \frac{1-2\beta_1}{2} \log_{\lambda} \left(\frac{\frac{1}{2}-\beta_1 c}{\frac{c}{2}-\beta_1 c+\epsilon}\right)$ and $\beta_2=2(\frac{1}{2}-\beta_2+2\epsilon) (\log_{\lambda} \left( \frac{1-\frac{c}{2}+\epsilon}{\frac{c}{2}-\beta_2c}\right)+1)$ using Matlab. We plot the maximum difference between $\beta_2$ and $\beta_1$, for different values of $c$ and for a fixed integer value of $\lambda$. Figure \ref{fig:lambda} shows that the maximum difference between $\beta_2$ and $\beta_1$, over all values of $c$, is decreasing as a function of $\lambda$. Hence we concentrate on the worst case $\lambda=2$.
Figure \ref{fig:lambda_2} shows the values of $\beta_2$, $\beta_1$ and $\beta_2-\beta_1$, for $\lambda=2$ as functions of $c$. The maximum value of $\beta_2-\beta_1$ is $0.1208<1/8$, and it is taken for $c$ approximately 0.8.

\section{Conclusion}

While our results for advice of unbounded valency are complete, for advice of constant valency they leave a small (sub-polynomial) gap for small diameter $D$.
Also, there are parts of the allocated time spectrum where our results do not work (in particular, for large diameter) although, as we argued, this unchartered territory is small.
Filling these gaps is a natural open problem. Also generalizing our results to arbitrary graphs remains open. As suggested by the comparison of results
in \cite{GMP} and \cite{DiPe}, where the minimum size of 1-valent advice sufficient for leader election was studied for trees and for graphs respectively, results
for arbitrary graphs may be very different from those for trees.

Another open problem is related to the precise definition of leader election in anonymous networks.
The definition adopted in this paper is the same as in \cite{GMP}, and requires each node to output a path to the leader.
It should be noted that, apart from this formulation of leader election in anonymous networks, there are two other possibilities involving weaker requirements.
The weakest of all is the requirement that the leader has to learn that it is a leader, and non-leaders have to learn that they are not, without  the necessity of learning who is the leader. The latter variant was called {\em selection} in \cite{MP}, and differences between election and selection were discussed there in a related context of finding the largest-labeled node
in a labeled graph. Of course, in our context, selection can be achieved in time 0 with 2-valent advice of size 1.

Even if all nodes have to learn who is the leader in an anonymous network,
one might argue that it is enough for every node to learn a port corresponding to a shortest path towards the leader, as then, e.g., for the task when all nodes have to send some data to the leader,
packets could be routed to the leader from node to node, using only this local information. This is indeed true, if nodes want to cooperate with others by revealing the
local port towards the leader when retransmitting packets. In some applications, however,  such a cooperation may be uncertain, and even when it occurs, it may slow down transmission, as the local port has to be retrieved from the memory of the relaying node.

It should be noted that, for trees and for $\lambda$-valent advice with $\lambda>2$, this weaker variation of leader election can be achieved in time 1 and with advice of size 2. It is enough to choose any node $r$ of the tree as the leader,
and give to every node its distance modulo 3 from the leader as advice. In time 1, every node at distance $x$ (mod 3) from $r$ can see a neighbor at distance $x-1$ (mod 3) from $r$ (its parent in the tree rooted at $r$), and output the port towards it. For $\lambda=2$, a similar solution can be obtained in time 2 and with advice of size 1.


\bibliographystyle{plain}

\begin{thebibliography}{12}

\bibitem{AKM01}
S.~Abiteboul, H.~Kaplan, T.~Milo, Compact labeling schemes for ancestor
queries, Proc. 12th Annual ACM-SIAM Symposium on Discrete
Algorithms (SODA 2001), 547--556.

\bibitem{AHU}
A.V. Aho, J.E. Hopcroft, J.D. Ullman, Data Structures and Algorithms, Addison-Wesley 1983.


\bibitem{An}
D.~Angluin, Local and global properties in networks of processors.
Proc. 12th Annual ACM Symposium on Theory of Computing (STOC 1980), 82--93.


\bibitem{AtSn}
H. Attiya and M. Snir,
Better Computing on the Anonymous Ring,
Journal of Algorithms 12, (1991), 204-238.



\bibitem{ASW}
H. Attiya, M. Snir, and M. Warmuth,
Computing on an Anonymous Ring,
Journal of the ACM 35, (1988), 845-875.

\bibitem{BSVCGS}
P. Boldi, S. Shammah, S. Vigna, B. Codenotti, P. Gemmell, and J. Simon,
Symmetry Breaking in Anonymous Networks: Characterizations.
Proc. 4th Israel Symposium on Theory of Computing and Systems,
(ISTCS 1996), 16-26.




\bibitem{BV}
P. Boldi and S. Vigna,
Computing Anonymously with Arbitrary Knowledge,
Proc. 18th ACM Symp. on Principles of Distributed Computing (PODC 1999), 181-188.

\bibitem{B}
J.E. Burns, A Formal Model for Message Passing Systems,
Tech. Report TR-91, Computer Science Department,
Indiana University, Bloomington, September 1980.





%
%




\bibitem{DP}
D. Dereniowski, A. Pelc, Drawing maps with advice,  Journal of Parallel and Distributed Computing 72 (2012), 132--143.


\bibitem{DiPe}
Y. Dieudonn\'{e}, A. Pelc, Impact of Knowledge on Election Time in Anonymous Networks, CoRR abs/1604.05023 (2016).


\bibitem{DoPe}
S. Dobrev and A. Pelc,
Leader Election in Rings with Nonunique Labels, Fundamenta Informaticae 59 (2004), 333-347.




\bibitem{EFKR}
Y. Emek, P. Fraigniaud, A. Korman, A. Rosen, Online computation with advice, Theoretical Computer Science 412 (2011), 2642--2656.


\bibitem{FKKLS}
P. Flocchini, E. Kranakis, D. Krizanc, F.L. Luccio and  N. Santoro,
Sorting and Election in Anonymous Asynchronous Rings,
Journal of Parallel and Distributed Computing 64 (2004), 254-265.




\bibitem{FGIP}
P. Fraigniaud, C. Gavoille, D. Ilcinkas, A. Pelc,
Distributed computing with advice: Information sensitivity of graph coloring,
Distributed Computing 21 (2009), 395--403.

\bibitem{FIP1}
P. Fraigniaud, D. Ilcinkas, A. Pelc,
Communication algorithms with advice, Journal of  Computer and System Sciences 76 (2010), 222--232.

\bibitem{FIP2}
P. Fraigniaud, D. Ilcinkas, A. Pelc,
Tree exploration with advice, Information and Computation 206 (2008), 1276--1287.

\bibitem{FKL}
P. Fraigniaud, A. Korman, E. Lebhar,
Local MST computation with short advice,
Theory of Computing Systems 47 (2010), 920--933.

\bibitem{FL}
G.N. Fredrickson and N.A. Lynch,
Electing a Leader in a Synchronous Ring,
Journal of the ACM 34 (1987), 98-115.

\bibitem{FP2}
E. Fusco, A. Pelc, How Much Memory is Needed for Leader Election, Distributed Computing 24 (2011), 65--78.

\bibitem{FP1}
E. Fusco, A. Pelc, Knowledge, level of symmetry, and time of leader election, Distributed Computing 28 (2015), 221--232.






\bibitem{FP}
E. Fusco, A. Pelc, Trade-offs between the size of advice and broadcasting time in trees, Algorithmica 60 (2011), 719--734.


\bibitem{FPR}
E. Fusco, A. Pelc, R. Petreschi, Use knowledge to learn faster: Topology recognition with advice, Information and  Computation 247 (2016), 254--265.

\bibitem{GPPR02}
C.~Gavoille, D.~Peleg, S.~P\'{e}rennes, R.~Raz.
Distance labeling in graphs,
Journal of Algorithms 53 (2004), 85-112.

\bibitem{GMP}
C. Glacet, A. Miller, A. Pelc, Time vs. information tradeoffs for leader election in anonymous trees, Proc. 27th Annual ACM-SIAM Symposium on Discrete Algorithms (SODA 2016), 600-609.

   \bibitem{HKMMJ}
     M.A. Haddar, A.H. Kacem, Y. M\'{e}tivier, M. Mosbah, and M. Jmaiel,  Electing a Leader in the Local Computation Model using Mobile Agents.
Proc.  6th ACS/IEEE International Conference on Computer Systems and Applications (AICCSA 2008), 473-480.

\bibitem{HS}
D.S. Hirschberg, and J.B. Sinclair,
Decentralized Extrema-Finding in Circular Configurations of Processes,
Communications of the ACM 23 (1980), 627-628.





\bibitem{IKP}
D. Ilcinkas, D. Kowalski, A. Pelc,
Fast radio broadcasting with advice,
 Theoretical Computer Science, 411 (2012),  1544--1557.

 \bibitem{JKZ}
T. Jurdzinski, M. Kutylowski, and J. Zatopianski,
Efficient Algorithms for Leader Election in~Radio Networks.
 Proc., 21st ACM Symp. on Principles of Distributed Computing
(PODC 2002), 51-57.




\bibitem{KKKP02}
M.~Katz, N.~Katz, A.~Korman, D.~Peleg, Labeling schemes for flow and
connectivity,
SIAM Journal of  Computing 34 (2004), 23--40.


\bibitem{KKP05}
A. Korman, S. Kutten, D. Peleg, Proof labeling schemes,
Distributed Computing 22 (2010), 215--233.

\bibitem{KP}
D. Kowalski, and A. Pelc, Leader Election in Ad Hoc Radio Networks: A Keen Ear Helps,
Journal of Computer and System Sciences 79 (2013), 1164-1180.



\bibitem{LL}
G. Le Lann,
Distributed Systems - Towards a Formal Approach,
Proc. IFIP Congress, 1977, 155--160, North Holland.




\bibitem{Ly}
N.L. Lynch, Distributed algorithms, Morgan Kaufmann Publ. Inc.,
San Francisco, USA, 1996.

\bibitem{MP}
A. Miller, A. Pelc, Election vs. selection: How much advice is needed to find the largest node in a graph?, Proc. 28th ACM Symposium on Parallelism in Algorithms and Architectures (SPAA 2016), 377--386.

\bibitem{NO}
K. Nakano and S. Olariu, Uniform Leader Election Protocols for Radio Networks,
IEEE Transactions on Parallel and Distributed Systems 13
(2002), 516-526.

\bibitem{SN}
N. Nisse, D. Soguet, Graph searching with advice,
Theoretical Computer Science 410 (2009), 1307--1318.



 \bibitem{Pe}D. Peleg,
  Distributed Computing, A Locality-Sensitive Approach,
  SIAM Monographs on Discrete Mathematics and Applications, Philadelphia 2000.

  \bibitem{P}
G.L. Peterson, An $O(n \log n)$ Unidirectional Distributed Algorithm
for the Circular Extrema Problem,
ACM Transactions on Programming Languages and Systems 4 (1982), 758-762.








\bibitem{TZ05}
M.~Thorup, U.~Zwick, Approximate distance oracles,
Journal of the ACM, 52 (2005), 1--24.

\bibitem{Wil}
D.E. Willard,
Log-logarithmic Selection Resolution Protocols in a Multiple Access Channel,
SIAM J. on Computing 15 (1986), 468-477.


\bibitem{YK2}
M. Yamashita and T. Kameda,
Electing a Leader when Procesor Identity Numbers are not Distinct,
Proc. 3rd Workshop on Distributed Algorithms (WDAG 1989),
LNCS 392, 303-314.

\bibitem{YK3}
M. Yamashita and T. Kameda,
Computing on Anonymous Networks: Part I - Characterizing the Solvable Cases,
 IEEE Trans. Parallel and Distributed Systems 7 (1996), 69-89.








\end{thebibliography}


\end{document}